\newtheorem{definition}{\emph{\underline{Definition}}}
\newtheorem{lemma}{\emph{\underline{Lemma}}}
\newtheorem{proposition}{\emph{\underline{Proposition}}}
\newtheorem{remark}{\bf \emph{\underline{Remark}}}
\def\l{\left}
\def\r{\right}
\def\({\left(}
\def\){\right)}
\def\b0{{\mathbf{0}}}
\newcommand{\nn}{\nonumber}
\begin{document}

\title{Sparse Array Enabled Near-Field \\ Communications: Beam Pattern Analysis and  Hybrid Beamforming Design}
\author{Cong Zhou, Changsheng You,~\IEEEmembership{Member,~IEEE}, Haodong Zhang, Li Chen,~\IEEEmembership{Senior Member,~IEEE}, Shuo Shi
\thanks{Cong Zhou is with the School of Electronic and Information Engineering, Harbin Institute of Technology, Harbin, 150001, China, and also with the Department of Electrical and Electronic Engineering, Southern Universityof Science and Technology, Shenzhen 518055, China  (e-mail:  zhoucong@stu.hit.edu.cn).}
\thanks{Changsheng You and Haodong Zhang are with the Department of Electronic and Electrical Engineering, Southern University of Science and Technology, Shenzhen 518055, China. (e-mails: youcs@sustech.edu.cn; 12113010@mail.sustech.edu.cn).}
\thanks{Li Chen is with Department of Electronic Engineering and Information Science, University of Science and Technology of China. (e-mail: chenli87@ustc.edu.cn).}
\thanks{Shuo Shi is with the School of Electronic and Information Engineering, Harbin Institute of Technology, Harbin, 150001, China. (e-mail: crcss@hit.edu.cn).}
\thanks{\emph{Corresponding author: Changsheng You.}}
}



\maketitle
\begin{abstract}
Extremely large-scale array (XL-array) has emerged as a promising technology to enable {\it near-field} communications for achieving enhanced spectrum efficiency and spatial resolution, by drastically increasing the number of antennas. However, this also inevitably incurs higher hardware and energy cost, which may not be affordable in future wireless systems. To address this issue, we propose in this paper to exploit two types of {\it sparse arrays} (SAs) for enabling near-field communications. Specifically, we first consider the {\it linear sparse array} (LSA) and characterize its near-field beam pattern. It is shown that despite the achieved beam-focusing gain, the LSA introduces several undesired {\it grating-lobes}, which have comparable beam power with the main-lobe and are focused on specific regions. An efficient hybrid beamforming design is then proposed for the LSA to deal with the potential strong inter-user interference (IUI). Next, we consider another form of SA, called {\it extended coprime array} (ECA), which is composed of two  LSA subarrays with different (coprime) inter-antenna spacing. By characterizing the ECA near-field beam pattern, we show that compared with the LSA with the  same array sparsity, the ECA can greatly suppress the beam power of near-field grating-lobes thanks to the offset effect of the two subarrays, albeit with a larger number of grating-lobes. This thus motivates us to propose a customized two-phase hybrid beamforming design for the ECA. Finally, numerical results are presented to demonstrate the rate performance gain of the proposed two SAs over the conventional uniform linear array (ULA).
\end{abstract}

\begin{IEEEkeywords}
Extremely large-scale array, near-field communications, sparse array, coprime array, near-field beam-focusing.
\end{IEEEkeywords}

\vspace{-0.3cm}
\section{Introduction}

The future six-generation (6G) wireless systems are envisioned to not only achieve more stringent performance requirements than the fifth-generation (5G), but also
accommodate a set of new services such as wireless sensing and artificial intelligence (AI) \cite{zhang20196g}. Among others, by drastically increasing the number of antennas at the base stations (BSs), \emph{extremely large-scale arrays} have emerged as a promising technology to significantly boost the spectrum efficiency and spatial resolution in future wireless systems \cite{liu2023near,lu2023tutorial,cui2022near}. In addition, XL-arrays introduce a fundamental change in the wireless channel modelling, shifting from the conventional far-field radio propagation to the new \emph{near-field} counterpart, which brings both new opportunities and challenges\cite{you2023near11}.

Different from the existing works that mostly considered dense XL-arrays with half-wavelength inter-antenna spacing, we study in this paper the new \emph{sparse array} (SA) enabled near-field communication system and characterize its  near-field beam pattern for designing efficient hybrid beamforming.

\vspace{-0.3cm}
\subsection{Prior Works}
\subsubsection{Dense arrays} 
In the existing literature on near-field wireless communications, the dense XL-arrays assumption with half-wavelength antenna spacing has been widely made to facilitate the near-field communication designs and performance analysis. Specifically, for the dense array (DA), the well-known \emph{Rayleigh distance}, which is the boundary between the near- and far-fields, is given by  $Z=(2A^2)/\lambda$, where $ A $ is the array aperture and $ \lambda $ is the carrier wavelength \cite{lu2023tutorial, elbir2023spherical}. Therefore, with a massive number of antennas, the array aperture will be  significantly enlarged, thus rendering the users more likely to be located in the near-field region. In this case, instead of assuming the planar wavefronts in far-field communications, the more accurate \emph{spherical} wavefront propagation model needs to be considered, where the near-field channel steering vector is determined by both the array configuration  and the user location (including both the user angle and \emph{range}). Interestingly, it has been shown that with the spherical wavefronts, near-field beamforming can achievable an appealing function of \emph{beam-focusing}, which focuses  beam energy on a target location/region \cite{zhang2022beam,an2023toward,bjornson2021primer}, hence potentially enhancing the performance of various wireless systems in communication, sensing, power transfer \cite{zhang2023swipt}, etc. Moreover, the \emph{beam-depth} is a crucial metric for assessing the focusing gain in the range domain \cite{kosasih2023finite}. For example, it was revealed in \cite{kosasih2023finite,ref1,5595728} that for multi-user XL-array communication systems, the maximum ratio transmission (MRT) based transmit beamforming can achieve the near-optimal performance when the number of antennas is sufficiently large thanks for the beam-focusing gain. Apart from this, 
%
%
other performance metrics in near-field wireless systems have also been recently studied, such as the signal-to-noise ratio (SNR) scaling order with the antenna number, the degree-of-freedom (DoF) \cite{ji2023extra} in near-field multiple-input multiple-output (MIMO) channels, and so on \cite{liu2023near,lu2023tutorial}.
On the other hand, near-field channel state information (CSI) acquisition is also indispensable for the beamforming design, which, however, is highly challenging due to the new spherical wavefront channel model as well as the larger number of channel parameters to be estimated. To address this issue, the authors in \cite{cui2022channel} proposed a new \emph{polar-domain} channel representation method and then devised efficient near-field channel estimation schemes by using the compressed-sensing techniques. {This framework was further extended in \cite{zhi2023performance,yuan2022spatial,han2020channel} and \cite{chen2023non} to account for the unique spatial non-stationary near-field channel characteristics, where different portions of the XL-array may undergo distinct propagation environment and thus are visible or invisible to the users.}
 Alternatively, near-field beam training is another efficient method to design the XL-array beamforming  without explicit CSI, while its key challenges lie in how to design efficient beam codebooks and devise fast beam training methods. These issues have been extensively   studied recently, such as the two-phase (angle-and-then-range) beam training based on the polar-domain codebook \cite{zhang2022fast}, hierarchical near-field codebook design and beam training \cite{lu2023hierarchical,wu2023two}, joint user angle and range estimation based on the discrete
Fourier transform  (DFT) codebook \cite{wu2023near}, deep-learning (DL) based beam training \cite{liu2022deep}, etc.
\subsubsection{Sparse arrays} While the above works based on DAs can greatly improve the near-field wireless system performance, it also incurs much higher hardware and energy cost due to the huge number of antennas\cite{lu2023tutorial,xu2023low,bjornson2015massive}. To address this issue, several  research efforts have been made in the literature. For example, the authors in \cite{zhang2023dynamic,zhang2022beam} developed efficient and customized near-field hybrid beamforming methods for maximizing the rate performance at low hardware cost. To further reduce the hardware cost, another promising approach is by utilizing SAs (such as the linear SAs, coprime arrays, nested arrays) for enabling near-field communications with high spatial resolution, since the SA aperture can be electrically large at high frequency-bands when the antenna spacing is sufficiently large \cite{yang2023enhancing}. Despite the benefits, SAs also introduce new challenges in dealing with the undesired near-field \emph{grating-lobes}, which may incur severe inter-user interference (IUI) if not properly addressed \cite{wang2023can}. 

In the exiting works on SAs, they have mostly considered the communication designs based on the far-field channel model, or solely focused on the SA-enabled radar sensing. Taking the LSA for instance, the authors in \cite{wang2023can} analyzed its far-field  beam pattern and revealed that the adverse effect of LSA grating-lobes can be mitigated if the users are densely distributed, while the analysis in the near-field region is lacking. Moreover, LSA can be used to create virtual MIMO in radar sensing systems, which is able to achieve an $V_1 V_2$ virtual aperture with only  $(V_1+V_2)$ physical antennas. Next, for the CA which comprises two LSAs with coprime numbers of antennas, the authors in \cite{wang2023can} showed that the CA generally endows more DoFs for the angle-of-arrival (AoA) estimation. Moreover, the far-field CA localization method was further extended in \cite{zheng2020symmetric} to enable the  mixed near- and far-Field source localization with more unique and consecutive virtual sensors and larger physical array aperture. Besides, the authors in \cite{zhou2017robust} proposed an adaptive beamforming algorithm for CAs, which leveraged the high spatial resolution of coprime arrays to estimate the user locations and then recover the interference-plus-noise covariance matrix for designing efficient beamforming.
 In summary, the  important  issue of grating-lobes in SA-enabled near-field communications is still largely uncharted in the literature, which thus motivates the current work. 
\vspace{-0.4cm}
\subsection{Contributions, Organization, and Notations}

In this paper, we consider two types of SAs (with uniform or non-uniform antenna spacing) for enabling near-field communications with a small number of antennas only. Specifically, the first one is the LSA, which has identical inter-antenna spacing larger than the half-wavelength. The second one is the extended CA (ECA), which is composed of multiple basic CAs as shown in Fig.~\ref{fig_1}. In particular, each basic CA consists of two small LSAs with respectively $M$ and $N$ antennas, where $M$ and $N$ are coprime integers.\footnote{For convenience, we consider the ECA in this paper, while the obtained results can be readily extended to other CA configurations, such as  coprime arrays with multi-period subarrays (CAMpS)\cite{wang2018unified}.} For each SA configuration, we first characterize its near-field beam pattern in the main-lobe and grating-lobes, and then propose customized hybrid beamforming methods for maximizing the weighted sum-rate. To the authors' best knowledge, this work is the \emph{first} one to study the beam pattern characteristics and hybrid beamforming design for SA-enabled near-field commutation systems. The main contributions are summarized as follows.
\begin{itemize}
	\item First, we consider the LSA-enabled near-field communication system and characterize its near-field beam pattern. It is shown that for the LSA main-lobe, its beam-width becomes narrower with the increasing array sparsity and the beam-depth is inversely proportional to the array aperture and monotonically reduces when the user is closer to the LSA. Moreover, the LSA with $U$ antennas generate $ 2(U-1) $ grating-lobes in the entire angular domain, which are focused on different angles and ranges. In particular, all the grating-lobes have the same beam-width and beam-height with the main-lobe, while their beam-depths are generally different depending on the user's angle-of-departure (AoD) $ \theta_{0} $. For example, when $ \theta_{0} = 0 $, all the LSA grating-lobes have smaller beam-depths than the main-lobe.
	  Therefore, the LSA is generally susceptible to server IUI, especially when some users are located in the grating-lobes of other users. To maximize the weighted sum-rate, we propose an efficient hybrid beamforming design for LSA that aims to minimizing the difference between the designed hybrid beamforming and the optimal digital beamforming. 
	\item Second, we study the beam pattern characteristic for the ECA-enabled near-field communication system. We show that with approximately the same array sparsity and antenna number, the ECA has the same main-lobe beam-width and beam-depth with the LSA counterpart. Moreover, compared with LSA, the ECA beam pattern generally has more grating-lobes due to the larger inter-antenna spacing of two subarrays than that of the LSA under the same array sparsity. However, the beam-heights of ECA grating-lobes are much smaller than those of LSA grating-lobes, thanks to the offset effect of the grating-lobes generated by the two subarrays, which is thus termed as the \emph{ECA grating-lobes smoothing}. Therefore, the ECA communication systems generally have better worst-case user rate performance than the LSA counterpart, thanks to the reduced interference power at the grating-lobes. Motivated by the above, we then propose a customized hybrid beamforming design for the ECA by first utilizing the MRT beamformer to maximize the received power at each individual user and then designing the digital beamforming to cancel the residual IUI. 
	\item Finally, numerical results are presented to demonstrate the performance gains of the proposed SAs over the conventional DAs (i.e., ULA) as well as the effectiveness of the proposed hybrid beamforming designs. It is shown that given the same (and small) number of antennas, both the LSA and ECA  achieve significant rate performance gains over the ULA system when the users are located at the same or similar angles. This is because the formers have much  larger array apertures and thus enable near-field beam-focusing for reducing the IUI. In addition, when users located at grating-lobes, we show that the ECA achieves a much higher sum-rate than the LSA system due to lower grating-lobe beam-heights.
\end{itemize}

The remainder of this paper is organized as follows. Sections II presents the system model for both the LSA and ECA-enabled near-field communication systems. Then, in Sections III and IV, we characterize the near-field beam pattern for the LSA and ECA, respectively, as well as propose efficient hybrid beamforming designs. Numerical results are presented in Section V to evaluate the performance of the considered SAs, followed by the conclusions drawn in Section VI. 

\textit{Notations}: Lower-case and upper-case boldface letters are employed to represent vectors and matrices, respectively. For vectors and matrices, $ (\cdot)^{T} $ and $ (\cdot)^{H} $ denote the transpose and conjugate transpose operations, respectively. In addtion, $ \left|\cdot\right| $ and $ \left\lVert \cdot\right\lVert  $ respectively indicate the absolute value for a numerical entity and $ l_{2} $ norm.  
\vspace{-0.3 cm}
\section{System Model}
\begin{figure}[!t]\centering
	\includegraphics[width=3.7in]{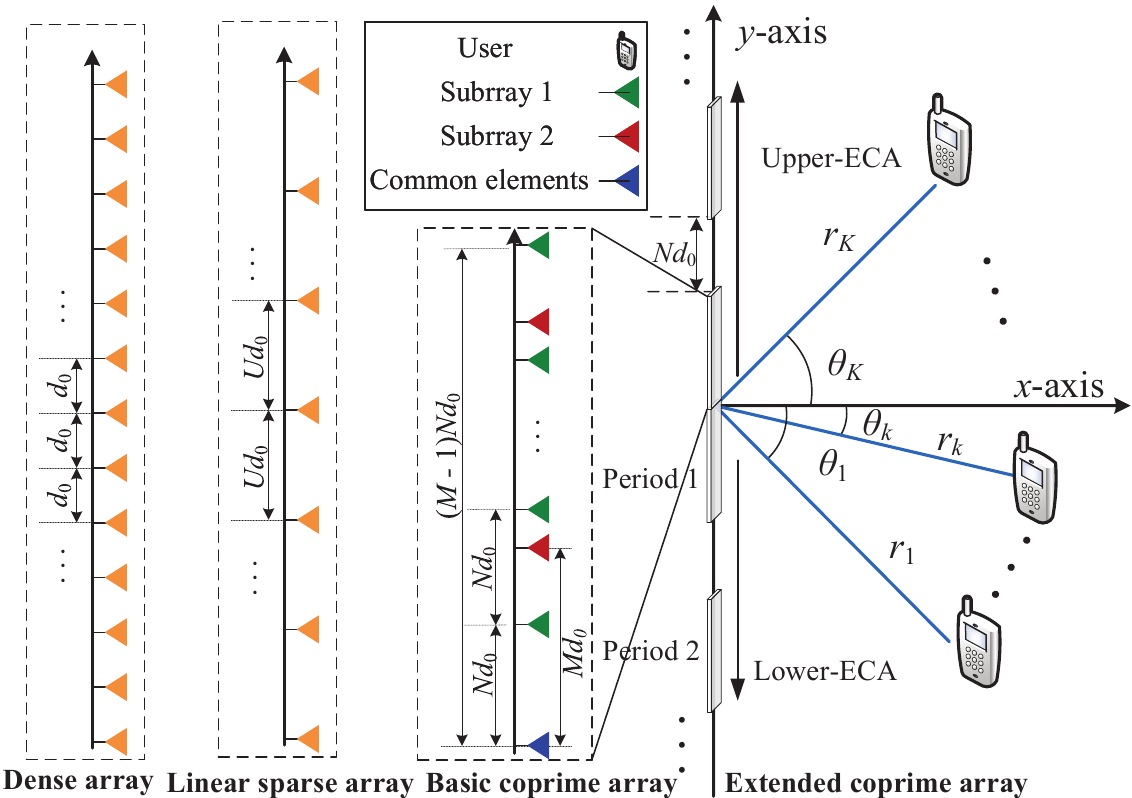}
	\centering
	\caption{Multi-user SA communication systems.}	
	\label{fig_1}
	\vspace{-0.5cm}
\end{figure}

We consider a multi-user SA communication system as shown in Fig.~\ref{fig_1}, where a multi-antenna SA serves $K$ single-antenna users in the downlink. In the following, we first introduce the channel models for the LSA and ECA, respectively, and then present their unified signal model.
\vspace{-0.4cm}
\subsection{Channel Models}

For an SA, we define its array sparsity,  denoted by $\eta_{\rm SA}$, as the ratio between the SA aperture $A_{\rm SA}$ and the ULA aperture $A_{\rm ULA}$ given the same number of antennas. Mathematically, we have  $\eta_{\rm SA} = A_{\rm SA}/A_{\rm ULA}.$
\subsubsection{Linear Sparse Array} The LSA configuration and channel are modeled as follows.

\underline{\textbf{Array model:}} The configuration of an LSA can be characterized by two key parameters, namely, the number of antennas $Q_{\rm LSA}$ (assumed to be an odd number without loss of generality) and the antenna separation parameter $U$. We assume that the LSA is located at the $x$-$y$ axis and centered at the origin. As such, the LSA antennas can be indexed as $q\in \mathcal{Q}_{\rm LSA}\triangleq\{-(\frac{Q \rm_{LSA}-1}{2}),\cdots,0, \cdots,\frac{Q \rm_{LSA}-1}{2}\}$. In addition,  the inter-antenna spacing of LSA is given by $d_{\rm LSA}= U d_0$, where $ U>1$  and $d_0 = \lambda/2$ is the half-wavelength with $\lambda$ denoting the carrier wavelength. Given $\{Q_{\rm LSA}, d_{\rm LSA}\}$, the LSA has an aperture of $A_{{\rm {LSA}}} \!=\!(Q_{\rm LSA}\!-\!1) d_{\rm LSA}=(Q_{\rm LSA}-\!1) U d_0$, for which the corresponding Rayleigh distance is $Z_{\rm LSA}=(2A_{\rm LSA}^2)/\lambda$. For example, given $Q_{\rm LSA} = 101$,  $ U = 4$, and carrier frequency $f = 30$ GHz, we have $Z_{\rm LSA}{=800}$ m, which is much lager than that of a ULA given the same antenna number whose Rayleigh distance is $50$ m only. Therefore, compared with ULA,  users in the LSA communication systems are more likely to be located in the near-field region. 


\underline{\textbf{Channel model:}} Assume that all the users are located in the Fresnel near-field region of the LSA, for which the line-of-sight (LoS) channel follows the uniform spherical wave (USW) model \cite{lu2023tutorial}. Let $r_k$ denote the distance between the LSA center and  user $k\in \mathcal{K}\triangleq\{1,2, \cdots,K\}$. For the USW channel model, we have $1.2A_{\rm LSA}\le r_k\le Z_{\rm LSA}$, where $ 1.2A_{\rm LSA} $ and  $ Z_{\rm LSA} $ are the uniform power distance (UPD) and Rayleigh distance, respectively \cite{lu2023tutorial}. 
We consider the general multi-path channel model, for which the LSA$\to$user $k$ channel, denoted by $\mathbf{h}^H_{{\rm LSA}, k} \in  \mathbb{C}^{1\times Q_{\rm LSA}}$, can be modeled as 
\vspace{-10pt}
\begin{equation}\label{Eq:nf-channel}
\vspace{-10pt}
\mathbf{h}^H_{{{\rm LSA}}, k} = (\mathbf{h}_{{\rm LSA}, k}^{\rm (L)})^H + \sum_{g=1}^{G} (\mathbf{h}_{{\rm LSA}, k, g}^{(\rm NL)})^H,
\end{equation}
which includes one LoS path $ \mathbf{h}_{{\rm LSA}, k}^{\rm (L)} $ and $G$ non-LoS (NLoS) paths $\{\mathbf{h}_{{\rm LSA}, k, g}^{(\rm NL)}\}$. In this paper, we focus on the near-field communications in high-frequency bands such as millimeter-wave (mmWave) and even terahertz (THz), for which the NLoS  paths have negligible power due to severe path-loss and shadowing \cite{wu2023near}. Therefore, the BS-user channel can be approximated as $\mathbf{h}^H_{{\rm LSA}, k} \!\approx\! (\mathbf{h}_{{\rm LSA}, k}^{\rm (L)})^H$.\footnote {The extension to the more general case where the multiple paths have comparable power is more complicated and thus left for future work.}

Based on the USW model, the LSA$\to$user $k$ LoS channel can be modeled as \cite{cui2022channel}
\begin{equation}\label{Eq:nf-model}
	(\mathbf{h}_{{\rm LSA}, k}^{\rm (L)})^H=\sqrt{Q_{\rm LSA}}\beta_{k} \mathbf{b}^H_{\rm LSA}(r_{k}, \theta_{k}),
\end{equation}
where $\beta_k =\frac{\sqrt{\beta_{0}}}{r_{k}}$ is the complex-valued channel gain with $\beta_0$ denoting the reference channel gain at a distance of $r_{0} = 1$ m; $ \theta_{k}\in [-\frac{\pi}{2},\frac{\pi}{2}] $ is AoD from the LSA center to user $k$.
Moreover, $\mathbf{b}^H_{\rm LSA}(r_{k}, \theta_{k})\in \mathbb{C}^{1\times Q_{\rm LSA} }$ denotes the near-field  LSA channel steering vector  with
\begin{equation}\label{near_steering}
\left[\mathbf{b}^H_{\rm LSA}\left(r_{k},\theta_{k}\right) \right]_{q} = \frac{1}{\sqrt{Q_{\rm LSA}}}e^{-\frac{\jmath 2 \pi r_{k,q}}{\lambda}}, \forall q\in \mathcal{Q}_{\rm LSA},
\end{equation}
where $r_{k, q}$  denotes the distance between user $k$ and  antenna $q$. Mathematically, based on the Fresnel approximation, $r_{k, q}$ can be approximated as   \cite{cui2022channel}
\begin{align}
r_{k, q} &= \sqrt{r_{k}^{2} -2 q  U d_0 r_{k} \sin \theta_k + (q U d_0)^2}\nn \\
&\approx r_{k}-q U d_0 \sin \theta_{k}+\frac{q^2 (  U d_0) ^2 \cos^2 \theta_{k}}{2 r_{k}}.
\end{align}


\subsubsection{Extended Coprime  Array} The ECA configuration and its channel model are given as follows.

\underline{\textbf{Array model:}} As shown in Fig.~\ref{fig_1}, the ECA consists of $L$ (assumed to be an even number) \emph{basic} CAs.
Each basic CA is composed of two small LSAs, including an $M$-antenna LSA with $N d_0$ inter-antenna spacing and an $N$-antenna LSA with $M d_0$ inter-antenna spacing, where $M$ and $N$ are coprime integers with $M\ge N$. Note that the two small LSAs share the first antenna (see Fig.~\ref{fig_1}); thus each basic CA has $(M+N-1)$ antennas in total with an aperture of  $(M-1)Nd_0$ \cite{8472789Zhou}. 

The entire ECA can be divided into two \emph{half}-ECAs which are symmetric with respect to (w.r.t) the origin. Specifically, the upper-ECA includes $L/2$ basic CAs with an inter-CA spacing of $Nd_{0}$ to enforce the periodic extension \cite{wang2018unified}. As such, the upper-ECA has $(L/2)(M+N\!-\!1)$ antennas in total with an aperture of $A_{\rm uECA}=\frac{L}{2}(M-1)Nd_0+ (\frac{L}{2}-1) Nd_{0}$.
The lower-ECA can be similarly modeled via array configuration symmetry, while it shares a common antenna  at the origin with the upper-ECA. 
Hence, the entire ECA has a total number of $Q_{\rm ECA}=L(M+N-1)\!-\!1$ antennas with the array sparsity of $\eta_{\rm ECA} = \frac{MN}{M+N-1}$ and the aperture size of
\begin{equation}
A_{\rm ECA}=2A_{\rm uECA} = (LM-2)Nd_0.
\end{equation}
For an ECA system setup with $L=12 $,  $M = 7$, $ N=5$ and $f = 30$ GHz, the ECA has $131$ antennas and its Rayleigh distance is  $Z_{\rm ECA}=\frac{2A_{\rm ECA}^2}{\lambda}\approx 834$ m. 
Moreover, it is worth noting that ECA can be regarded as a general array configuration that includes the ULA and LSA as two special cases for which we have  $\{ M = 1 ~{\rm or}~ N = 1\}$ and $M=N$, respectively. 

%
%


\underline{\textbf{Channel model:}} 
Similar to LSA, we assume that all users are located in the Fresnel near-field region of the ECA. Under the USW channel model, the near-field channel from the ECA to user $k$ can be modeled as
\begin{equation}\label{Eq:nf-model}
		\mathbf{h}^H_{{\rm ECA},k}\approx (\mathbf{h}_{{\rm ECA}, k}^{(\rm L)})^H=\sqrt{Q_{\rm ECA}}\beta_{k} \mathbf{b}^H_{\rm ECA}(r_{k}, \theta_{k}).
\end{equation}
Therein, $\mathbf{b}^H_{\rm ECA}\left(r_{k},\theta_{k}\right)\in \mathbb{C}^{1\times {Q_{\rm ECA}}} $ denotes the ECA channel steering vector, which is jointly determined by the user location $\{r_{k},\theta_{k}\}$ and the ECA configuration $\{M, N, L\}$. 

To facilitate the modelling for $\mathbf{b}^H_{\rm ECA}\left(r_{k},\theta_{k}\right)$, we equivalently re-divide the entire ECA into two sparse \emph{subarrays} as illustrated in Fig.~\ref{fig_1}, for which subarray $1$ includes $(LM\!-\!1)$ antennas with $N d_0$ inter-antenna spacing and subarray $2$ includes $(LN\!-\!1)$ antennas with $Md_0$ inter-antenna spacing (both can be regarded as an \emph{effective large} LSA); while these two subarrays share $ (L-1)$ antennas.  
Mathematically, the $m$-th antenna of subarray $1$ is located at ($0, mNd_0$), $m\in \mathcal{M}\triangleq\{-\frac{LM}{2}+1,\cdots,0, \cdots,\frac{LM}{2}-1\}$, the  $n$-th  antenna of subarray $2$ is located at ($0, nMd_0$), $n\in \mathcal{N}\triangleq\{-\frac{LN}{2}\!+\!1,\cdots,0, \cdots,\frac{LN}{2}\!-\!1\}$, and the two subarrays have common antennas at ($ 0, iMNd_0 $), $i\in \mathcal{I}\triangleq\{-\frac{L}{2}+1,\cdots,\!0, \!\cdots\!,\frac{L}{2}\!-\!1 \}$. 

As such, by denoting
\vspace{-3pt} 
$$\bar{\mathbf{b}}^H_{\rm ECA}\left(r_{k},\theta_{k}\right)\!=\!\left[\mathbf{b}^H_{1}\left(r_{k},\theta_{k}\right), \mathbf{b}^H_{2}\left(r_{k}, \theta_{k}\right) \right]$$
 as the \emph{effective} ECA channel steering vector composed of the steering vectors of two subarrays, $\mathbf{b}^H_{\rm ECA}\left(r_{k},\theta_{k}\right)$ can be equivalently expressed as\vspace{-0.1cm} 
\begin{equation}
\mathbf{b}^H_{\rm ECA}\left(r_{k},\theta_{k}\right)= \bar{\mathbf{b}}^H_{\rm ECA}\left(r_{k},\theta_{k}\right)\times \Pi,
\end{equation}
where $\Pi $ denotes the array permutation matrix determined by the ECA configuration.
For subarray $1$, each entry of $\mathbf{b}^H_{1}(r_{k},\theta_{k})$ is given by
\vspace{-5pt}
\begin{equation}
\left[\mathbf{b}^H_{1}\left(r_{k},\theta_{k}\right) \right]_{m} = \dfrac{1}{\sqrt{Q_{\rm ECA}}}e^{-\frac{\jmath 2 \pi r_{k, m}^{(1)}}{\lambda}}, \forall m\in \mathcal{M},
\end{equation}
where $r_{k, m}^{(1)} \approx r_{k}-mNd_0 \sin \theta_{k}+\frac{m^2 ( Nd_0) ^2 \cos^2 \theta_{k}}{2 r}$ is the distance between user $k$ and the $m$-th antenna of subarray $1$. Similarly, $\mathbf{b}^{H}_{2}\left(r_{k}, \theta_{k}\right)$ can be modeled.

\vspace{-0.3cm} 
\subsection{Signal Model}

For both the above two SA downlink communication systems, we consider the low-cost hybrid beamforming architecture with  $K$ RF chains.  Then the received signal at user $k$ is given by
\begin{equation}
	y_k=\mathbf{h}_k^{H} \mathbf{F}_{\mathrm{A}} \mathbf{F}_{\mathrm{D}} \mathbf{x}+z_k,
\end{equation}
where $  \mathbf{x} \in \mathbb{C}^{K \times 1} $ denotes the transmitted symbols for $ K $ users, $ z_{k} \sim  \mathcal{C N}\left(0, \sigma_k^2\right)$ represents the additive white Gaussian noise (AWGN) with power $ \sigma_k^{2} $, and $ \mathbf{F}_{\mathrm{A}} \in \mathbb{C}^{Q \times K} $ and $ \mathbf{F}_{\mathrm{D}} \in \mathbb{C}^{K \times K} $ are the analog and digital beamformers, respectively. In addition, $\mathbf{h}_k^{H}$ is the unified channel from the SA to user $k$ for the two SAs,  which reduces to $\mathbf{h}^H_{{\rm LSA},k}$ for the LSA  and $\mathbf{h}^H_{{\rm ECA},k}$ for the ECA, respectively.  
Based on the above, the received signal-to-interference-plus-noise ratio (SINR) at user $k$ is given by
\begin{equation}\label{eq11}
	 {\rm SINR}_k=\frac{\left|\mathbf{h}_k^{H} \mathbf{F}_{\mathrm{A}} \mathbf{f}_{\mathrm{D}, k}\right|^2}{\sum\limits_{i = 1, i \neq k}^{K}\left|\mathbf{h}_k^{H} \mathbf{F}_{\mathrm{A}} \mathbf{f}_{\mathrm{D}, i}\right|^2+\sigma_k^2}, ~~\forall k\in \mathcal{K},
\end{equation}
where $ \mathbf{f}_{\mathrm{D}, k} $ represents the $k$-th column of $ \mathbf{F}_{\mathrm{D}} $.

\vspace{-0.4cm}
\subsection{Main Definitions}
To facilitate the subsequent SA beam pattern analysis, we make the following main definitions.
Let ${\bf{b}}(r_0, \theta _0) $ denote a general near-field beamforming steering vector for a target location $\{{r_0},{\theta _0}\}$, and ${\bf{b}}({r},{\theta })$ denote the channel steering vector at an observation location $\{r, \theta \}$. The general near-field beam pattern, beam-width,  beam-depth, and beam-height for SAs are defined as follows. Note that the null-to-null beam-width is adopted for obtaining closed-form expressions to gain useful insights, while it can be also extended to obtain the 3-dB beam-width yet without closed forms in general. These definitions have been widely used in the literature \cite{lu2023tutorial}.
\begin{definition}[Beam pattern]\label{Def:BeamPattern}
	\rm {The beam pattern of ${\bf{b}}(r_0, \theta _0)$  characterizes its (normalized) beam power at an arbitrary observation  location $\{r, \theta\}$ in the considered near-field region, which is defined as}
	\begin{align}
	f({r_0},{\theta _0};r,\theta )& \triangleq {\left| {{{\bf{b}}_{\rm }^H}({r_0},{\theta _0}){\bf{b}}(r,\theta )} \right|}, \nn \\
	\quad \quad &\forall r\in  (1.2 A_{\rm SA},Z_{\rm SA}) , \theta \in \l[-\pi/2, \pi/2\r],
	\end{align}
	where $A_{\rm SA}$ and $Z_{\rm SA}$ are the SA aperture and Rayleigh distance for the SAs, respectively.
\end{definition}
\vspace{-0.3cm} 
\begin{definition}[Beam-width]\label{Def:width}
	\rm The null-to-null beam-width is the spatial angular width in the \emph{user-ring} (which is defined as $ \frac{\cos^2 \theta}{r} = \frac{\cos^2 \theta_0}{r_0} $ and will be explained later in more details), from which the magnitude of  beam pattern decreases to zero (respectively denoted as $ \theta_{\rm right} $ and $ \theta_{\rm left} $) away from the main-lobe (or grating-lobe), i.e.,  
	\begin{equation}
		{\rm BW}  \triangleq \left| \sin \theta_{\rm right} - \sin \theta_{\rm left} \right|.
	\end{equation}
\end{definition}
\vspace{-0.4cm} 
\begin{definition}[Beam-depth]\!\!\label{Def:Depth}
	\rm For an observation angle $\theta$, the 3-dB beam-depth characterizes the length of range interval $r\in [r_{\rm large}, r_{\rm small}]$, for which it satisfies $ f({r_0},{\theta _0}; r, \theta)\le  \frac{1}{2} \max_{r'\in  (1.2 A_{\rm SA},Z_{\rm SA})}\{f({r_0},{\theta _0};r',\theta )\}$. Thus, the beam-depth is given by
	\vspace{-0.2cm} 
	\begin{equation}
		{\rm BD}  \triangleq \left| r_{\rm large} - r_{\rm small} \right|.
	\end{equation}
\end{definition}
\vspace{-0.2cm} 
\begin{definition}[Beam-height]\label{Def:height} 
	{\rm The beam-height characterizes the magnitude  of the main-lobe (or grating-lobes) at an observation angle $\theta$, defined as}
	\vspace{-0.2cm} 
	\begin{equation}
		{\rm BH}  \triangleq \max_{r\in  (1.2 A_{\rm SA},Z_{\rm SA})}\{f({r_0},{\theta _0};r,\theta )\}.
	\end{equation}
\end{definition}
\vspace{-0.2 cm}
Note that different from conventional ULAs, the beam pattern of SAs usually includes both the main-lobe and additional \emph{grating}-lobes; thus we define the beam-width and beam-depth accounting for both the main- and grating-lobes. 
\vspace{-0.1cm} 
\section{Linear Sparse Array}
In this section, we characterize the near-field LSA  beam pattern and propose an efficient hybrid beamforming design.

First, the general near-field LSA beam pattern is obtained below, which can be  proved by using similar methods in \cite{cui2022channel}.
\begin{lemma}\emph{For an LSA parameterized by $\{Q_{\rm LSA}, U\}$, the beam pattern of $\mathbf{b}_{\rm LSA}(r_{0}, \theta_{0})$ is given by 
{\begin{align}
			&f_{\rm LSA}  \left( r_0, \theta_0; r, \theta \right) \nn\\
		\!\!\! \!\!\! 	\overset{a_1}{\approx}& \frac{1}{Q_{\rm LSA}} \left| \sum_{q\in \mathcal{Q}_{\rm LSA}}\!\!\!\!\exp{\l(\jmath  \underbrace{\frac{2\pi}{\lambda} q  U d_{0} \Delta}_{B_1}+\jmath \underbrace{\frac{\pi}{\lambda} q^2 ( U d_{0})^2\Phi }_{B_2 }\r)} \right|,\nn\\
			\triangleq&  \hat{f}_{\rm LSA}  \left( r_0, \theta_0; r, \theta \right)\!\label{LSAsum}
	\end{align}}
where $ \Delta\!\triangleq\! \sin\theta\!-\!\sin\theta_0$ is defined as the \emph{spatial angle difference}, $ \Phi \triangleq\!  \frac{{{{\cos }^2}\theta_{0} }}{{r_{0}}} \!-\! {\frac{{{{\cos }^2}{\theta }}}{{{r}}}}$ is named as the \emph{ring difference}, and $(a_1)$ is due to the Fresnel approximation which has been shown to be accurate in \cite{kosasih2023finite}. 
}
\end{lemma}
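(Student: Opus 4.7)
The plan is to expand $\mathbf{b}_{\rm LSA}^{H}(r_0, \theta_0)\,\mathbf{b}_{\rm LSA}(r, \theta)$ entrywise via \eqref{near_steering}, collapse it into a single exponential sum indexed by $q \in \mathcal{Q}_{\rm LSA}$, and then substitute the Fresnel expansion of the antenna-to-user distance (given just above the lemma) to isolate the linear-in-$q$ and quadratic-in-$q$ phase components that will become $B_1$ and $B_2$ in \eqref{LSAsum}.

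Concretely, from \eqref{near_steering} the inner product immediately reduces to
\begin{equation}
f_{\rm LSA}(r_0, \theta_0; r, \theta) = \frac{1}{Q_{\rm LSA}} \left| \sum_{q \in \mathcal{Q}_{\rm LSA}} e^{\jmath \frac{2\pi}{\lambda}(r_{0,q} - r_q)} \right|,
\end{equation}
where $r_{0,q}$ and $r_q$ are the distances from antenna $q$ to the design location $\{r_0,\theta_0\}$ and the observation location $\{r,\theta\}$, respectively. Plugging in $r_q \approx r - qUd_0 \sin\theta + q^2(Ud_0)^2 \cos^2\theta/(2r)$ (and analogously for $r_{0,q}$) and subtracting, the difference splits into three pieces: a $q$-independent offset $(r_0 - r)$, a linear term $qUd_0(\sin\theta - \sin\theta_0) = qUd_0\,\Delta$, and a quadratic term $\tfrac{1}{2} q^2(Ud_0)^2 \bigl(\tfrac{\cos^2\theta_0}{r_0} - \tfrac{\cos^2\theta}{r}\bigr) = \tfrac{1}{2} q^2(Ud_0)^2 \Phi$. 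The constant piece factors out as a unit-modulus prefactor that vanishes under $|\cdot|$, while the remaining two pieces, once multiplied by $2\pi/\lambda$, match exactly the $B_1$ and $B_2$ terms asserted in the lemma.

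The only genuinely approximate step is $(a_1)$, the Fresnel truncation. The worst-case third-order Taylor residual of $\sqrt{r^2 - 2qUd_0 r \sin\theta + (qUd_0)^2}$ occurs at the edge antennas where $|q|Ud_0 \sim A_{\rm LSA}/2$, and one has to verify that its contribution to the accumulated phase is negligible uniformly over the Fresnel region $r \in [1.2 A_{\rm LSA}, Z_{\rm LSA}]$. This is precisely the bound validated in \cite{kosasih2023finite}, so I would invoke that result rather than re-derive it; every other step is a one-line algebraic rearrangement together with pulling a $q$-independent phase outside the modulus. I expect this controllability of the Fresnel error under the sparsity-induced enlargement of the aperture to be the only place where a careful check is needed, since the LSA effectively uses a larger $|q|Ud_0$ than the ULA counterpart for the same antenna index range.
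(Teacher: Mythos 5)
Your derivation is correct and matches what the paper intends: the paper itself gives no detailed proof of this lemma (it only remarks that it "can be proved by using similar methods in \cite{cui2022channel}"), and the intended argument is exactly yours — expand the inner product entrywise from \eqref{near_steering}, insert the Fresnel expansion of $r_{k,q}$, factor out the $q$-independent unit-modulus phase $e^{\jmath\frac{2\pi}{\lambda}(r_0-r)}$, and identify the linear- and quadratic-in-$q$ phases with $B_1$ and $B_2$. Your sign bookkeeping for $\Delta$ and $\Phi$ checks out (and is immaterial under the modulus anyway), and deferring the accuracy of the Fresnel truncation to \cite{kosasih2023finite} is precisely what the paper does.
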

\begin{figure}[t]
	\centering
	\includegraphics[width=2.7in]{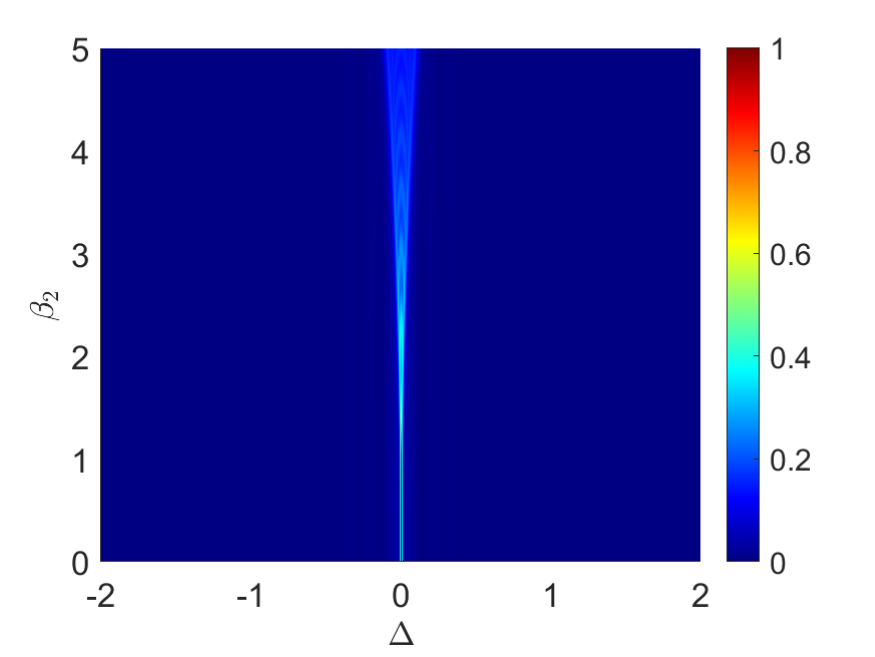}\label{fig:subfig8}
	\caption{Illustration of the function $G(\cdot)$ w.r.t. $ \Delta $ and $ \beta_{2} $.}
	\label{fig3}
	\vspace{-0.5cm}
\end{figure}
\begin{figure*}[!t]
	\centering
	\captionsetup[ subfloat]{labelfont=rm, format=plain,labelformat=empty}	
	\subfloat[{\small {\rm ULA with} $ 119 $ {\rm antennas}.}]{\includegraphics[width=2.33in]{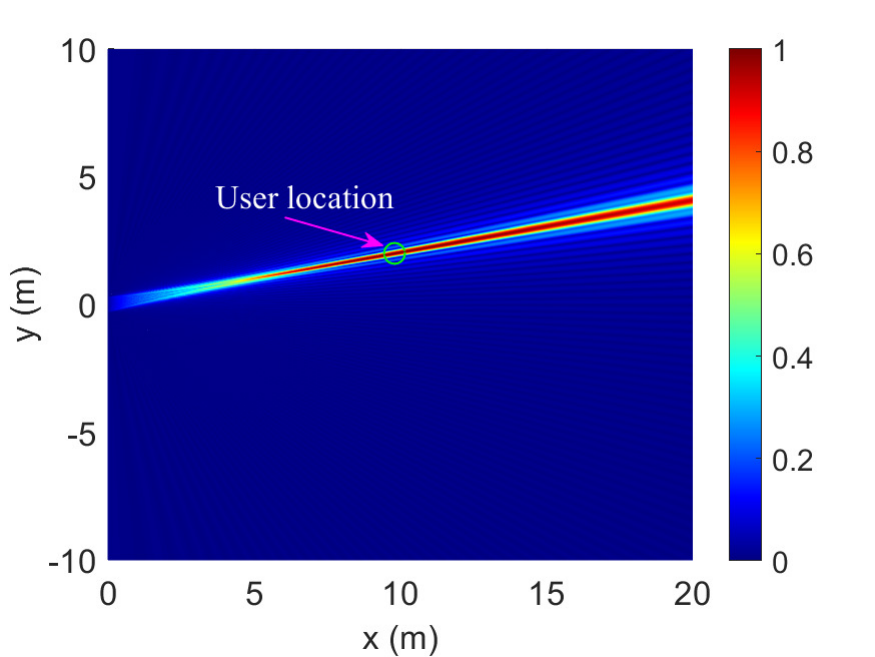} \label{fig_first_case}}
	\hfil
	\subfloat[{\small {\rm LSA with} $ U = 2$ {\rm and }  $Q_{\rm LSA} = 119$}.] {\includegraphics[width=2.33in]{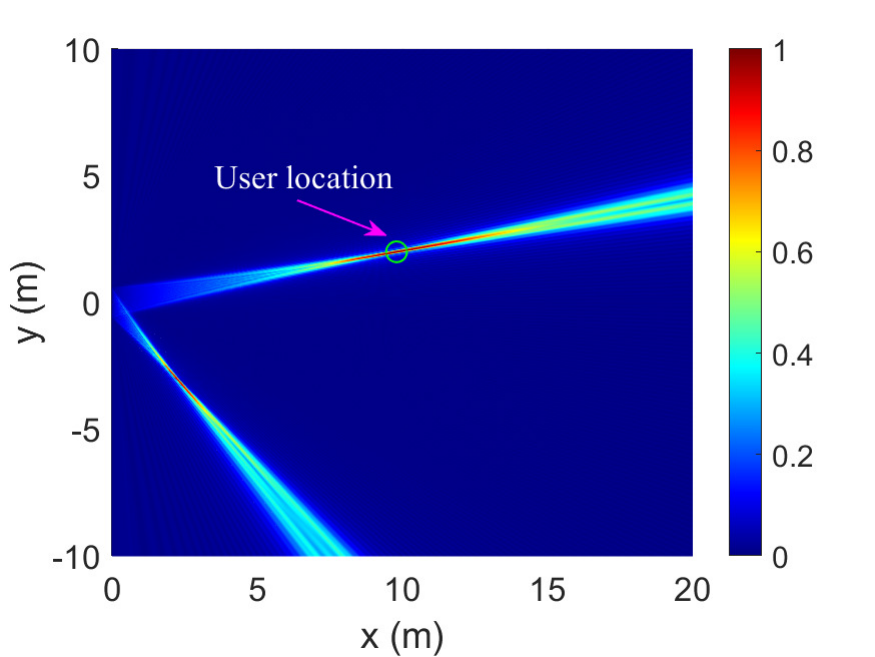} \label{fig_second_case}}
	\hfil
	\subfloat[{\small {\rm ECA with} $M = 3, N = 4 $, $ Q_{\rm ECA} = 119$. }]{\includegraphics[width=2.33in]{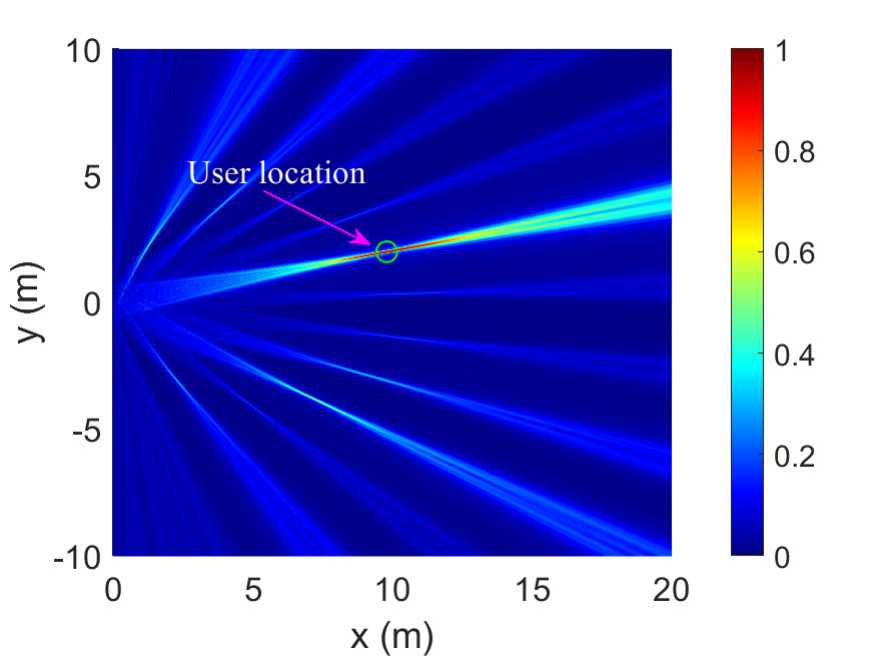} \label{fig_third_case}}
	\caption{{Beam patterns of different array configurations under $ \lambda = 0.01$~m at $(\sin\theta_0 = 0.2, r_0 =  10~{\rm m}) $.}}
	\label{fig_4}
	\vspace{-0.5cm} 
\end{figure*}

Compared with the near-field ULA  beam pattern given in \cite{liu2023near}, that for LSA mainly differs in the inter-antenna spacing ($U d_0$ versus $d_0$), which, however, will be shown to have significant impact on the beam pattern. To gain useful insights, notice that
$B_1$ and $B_2$ defined in \eqref{LSAsum} have the following properties 
\vspace{-0.1cm}
\begin{align}
	\begin{cases}
		B_1=2{\tau } \pi,  \tau\in \mathbb{Z},& \text{when}~ \Delta=\frac{2 u}{ U},  u\in \mathcal{U},\\
		B_2= 0, & \text{when}~\Phi=\frac{{{{\cos }^2}\theta_{0} }}{{r_{0}}}\! -\! {\frac{{{{\cos }^2}{\theta }}}{{{r}}}}=0,
	\end{cases}
\end{align}
\vspace{-0.1cm}where $ \mathcal{U} \triangleq \{\pm 1,\dots,\pm (U-1)\}$. 
Then, we divide the LSA beam pattern analysis into the following three cases:
\begin{itemize}
\item {\bf Case 1: }  $\Phi=0$. For this case, we have
$f_{\rm LSA}=\frac{1}{Q_{\rm LSA}}| \sum_{q\in \mathcal{Q}_{\rm LSA}} \exp{(\jmath  \frac{2\pi}{\lambda} q  U d_{0} \Delta )} |.$
This is the beam pattern at a ring characterized by $\{(r, \theta) | \frac{\cos^2 \theta}{r} =  \frac{\cos^2 \theta_0}{r_0}\}$, which contains the user location and thus is called the \emph{user-ring}. Its beam pattern is similar to the far-field LSA case \cite{wang2023can}, for which the beam-width of the main-lobe and grating-lobes in the ring can be obtained.
\item {\bf Case 2:} $\Delta =\! \frac{2 u}{ U}$, which corresponds to the angles of grading-lobes and this case will be used for obtaining the beam-depths of the grating-lobes.
\item {\bf Case 3:} $\Phi\neq 0$ and  $\Delta\neq \frac{2 u}{ U}$: The expression of  $f_{\rm LSA}$ is still given in \eqref{LSAsum}. For this case, we will show below that in most cases, the beam pattern $f_{\rm LSA} \left( r_0, \theta_0; r, \theta \right)\approx 0$.
\end{itemize}


%
%
%
%
%

\begin{lemma}\label{LSAgeneral}\emph{
	If $\Phi\neq 0$ and  $\Delta\neq \frac{2 u}{ U}, \forall u\in \mathcal{U}$, then $ \hat{f}_{\rm LSA}  \left( r_0, \theta_0; r, \theta \right)$ in \eqref{LSAsum} can be approximated as
	\begin{equation}\label{LSAClosedForm}
	\begin{aligned}
	\hat{f}_{\rm LSA}  \left( r_0, \theta_0; r, \theta \right) \approx \left| G(\beta_{1},  \beta_{2})\right|,
	\end{aligned}		
	\end{equation}
	 where $G(\beta_{1},  \beta_{2}) \triangleq (\widehat{C}(\beta_{1},\beta_{2}) +  \jmath(\widehat{S}(\beta_{1},\beta_{2}))/(2\beta_2)$,
	$ \widehat{C}(\beta_{1},\beta_{2}) \triangleq {C}(\beta_{1}+\beta_{2}) - C(\beta_{1}-\beta_{2})$ {\rm and} $ \widehat{S}(\beta_{1},\beta_{2}) \triangleq S(\beta_{1}+\beta_{2}) - S(\beta_{1}-\beta_{2}) $. Specifically, $ C(x) = \int_{0}^{x} \cos(\frac{\pi}{2}t^2 ){\rm d}t $ {\rm and} $ S(x) = \int_{0}^{x} \sin(\frac{\pi}{2}t^2 ){\rm d}t $ {\rm denote the Fresnel integrals. $ \beta_1 $ and $ \beta_2 $ are defined as}
	\begin{equation}\label{beta}
	\beta_{1} = \frac{\Delta}{{\sqrt{d_0\left| \Phi  \right|}}},
	~~
	\beta_{2} = \frac{Q_{\rm LSA} U}{2}\sqrt{{d_0}\left| \Phi \right| }.
	\end{equation}}
\end{lemma}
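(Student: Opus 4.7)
The plan is to convert the discrete sum in \eqref{LSAsum} into a continuous Riemann integral, complete the square in the resulting chirp phase, and identify the integral as a Fresnel integral.

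Substituting $d_{0}=\lambda/2$ reduces the two exponents inside the sum to the clean chirp form $\alpha q+\beta q^{2}$ with $\alpha\triangleq\pi U\Delta$ and $\beta\triangleq\pi\lambda U^{2}\Phi/4$. The hypotheses $\Phi\neq 0$ and $\Delta\neq 2u/U$ preclude the aliasing cases in which the summand fails to vary smoothly across consecutive indices, so the Riemann-sum approximation
\begin{equation}
\sum_{q\in\mathcal{Q}_{\rm LSA}} e^{\jmath(\alpha q+\beta q^{2})}
\;\approx\;
\int_{-Q_{\rm LSA}/2}^{\,Q_{\rm LSA}/2} e^{\jmath(\alpha x+\beta x^{2})}\,dx
\end{equation}
applies. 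I would then complete the square, $\alpha x+\beta x^{2}=\beta\bigl(x+\alpha/(2\beta)\bigr)^{2}-\alpha^{2}/(4\beta)$, and extract the unit-modulus constant $e^{-\jmath\alpha^{2}/(4\beta)}$, which is absorbed by the outer $|\cdot|$. For $\Phi>0$, the change of variables $t=\sqrt{2\beta/\pi}\,(x+\alpha/(2\beta))$ converts the kernel into the standard Fresnel form $e^{\jmath\pi t^{2}/2}$. A short computation using $d_{0}=\lambda/2$ shows $\sqrt{2\beta/\pi}=U\sqrt{d_{0}|\Phi|}$, so the Jacobian contributes a factor $1/(U\sqrt{d_{0}|\Phi|})$ while the new limits become exactly $t=\beta_{1}\pm\beta_{2}$ by the very definitions in \eqref{beta}. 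The residual integral is
\begin{equation}
\int_{\beta_{1}-\beta_{2}}^{\beta_{1}+\beta_{2}} e^{\jmath\pi t^{2}/2}\,dt
= \widehat{C}(\beta_{1},\beta_{2})+\jmath\,\widehat{S}(\beta_{1},\beta_{2}).
\end{equation}
The case $\Phi<0$ is handled by first conjugating the whole sum (leaving the modulus unchanged) and then invoking the oddness of $C$ and $S$, which produces the same formula with $|\Phi|$ throughout. Dividing by $Q_{\rm LSA}$ and collapsing the overall prefactor through the identity $Q_{\rm LSA}\,U\sqrt{d_{0}|\Phi|}=2\beta_{2}$ yields precisely $\bigl|(\widehat{C}+\jmath\widehat{S})/(2\beta_{2})\bigr|=|G(\beta_{1},\beta_{2})|$, as claimed.

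The delicate step is the sum-to-integral approximation, since the summand is a chirped exponential and its error is controlled by stationary-phase-type quantities: the chirp rate $\beta$ must be nonzero (guaranteed by $\Phi\neq 0$) and the linear phase $\alpha/(2\pi)$ must be bounded away from the integer lattice (guaranteed by $\Delta\neq 2u/U$). Making the error quantitative, e.g.\ via Poisson summation or a van der Corput-type bound, is the only nontrivial technical work; the remaining algebra—completing the square, rescaling, and matching the definitions of $\beta_{1},\beta_{2}$—is routine. A minor secondary issue is the $\mathcal{O}(1/Q_{\rm LSA})$ discrepancy between the symmetric odd-cardinality index set $\mathcal{Q}_{\rm LSA}$ and the continuous interval of width $Q_{\rm LSA}$, which can be absorbed into the Fresnel approximation $(a_{1})$ already invoked in \eqref{LSAsum}.
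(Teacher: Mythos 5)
Your proposal matches the paper's Appendix~A proof essentially step for step: the same sum-to-integral (Riemann/Fresnel) approximation justified by $\Phi\neq 0$ and $\Delta\neq 2u/U$, the same completion of the square with the unit-modulus prefactor $e^{-\jmath\alpha^{2}/(4\beta)}$ absorbed by the modulus, the same rescaling to the standard Fresnel kernel with limits $\beta_1\pm\beta_2$, and the same sign split on $\Phi$ yielding the conjugate expression of equal modulus. Your added remarks on quantifying the sum-to-integral error go beyond what the paper does but do not change the argument.
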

\begin{proof}
	Please refer to \bf Appendix A.
\end{proof}

Note that from (\ref{beta}), we can obtain $$ \beta_{1}\beta_{2} = \frac{Q_{\rm LSA} U\Delta}{2}=\frac{A_{\rm LSA}}{2d_0}  \Delta.$$ Therefore, the function $G(\cdot)$ can also be expressed as a function of $\{\Delta, \beta_2\}$. In Fig.~\ref{fig3}, we consider the system setup of \{$ Q_{\rm LSA} = 131 $, $ d_{\rm LSA} = 3d_{0} = 0.03$ m\} and numerically show the  function of $G(\cdot)$. Importantly, It is observed that except for the scenario of $\Delta\to0$, in most cases for $\{\Delta, \beta_2\}$, the LSA beam pattern $\hat{f}_{\rm LSA}  \left( r_0, \theta_0; r, \theta \right) \approx 0$. This indicates that for the LSA beam pattern analysis under $\Phi\!\neq\! 0$ and  $\Delta\neq \frac{2 u}{ U}$, we only need to examine the case where $ \Delta\!\to\! 0$, which may yield non-zero beam pattern.

\subsection{Main-lobe of LSA Beam Pattern }
In this subsection, we characterize the beam-width and beam-depth for the main-lobe of LSA beam pattern.

\subsubsection{{Beam-width in the Angular Domain}}
To obtain the main-lobe beam-width, we consider the beam pattern at the user-ring, which essentially characterizes the beam power at different angles given the angle-dependent distances, i.e., $ \frac{\cos^2 \theta}{r} =  \frac{\cos^2 \theta_0}{r_0}$. Then, the LSA beam pattern in Case 1) can be obtained as below.
%
\begin{lemma}\label{lemma3}\emph{
	 When $\Phi=0$ (or equivalently $ \frac{\cos^2 \theta}{r} =  \frac{\cos^2 \theta_0}{r_0}$), $\hat{f}_{\rm LSA}({r_0},{\theta _0};r,\theta) $  in \eqref{LSAsum}  is given by
		\begin{equation}\label{eq27}
			\begin{aligned}
				\hat{f}_{\rm LSA}({r_0},{\theta _0};r,\theta ) = \frac{1}{Q_{\rm LSA}}
				\left| \Xi_{Q_{\rm LSA}}(U \Delta ) \right|,
			\end{aligned}		
	\end{equation}
	 where $ \Xi_{\alpha}(x) \triangleq \frac{\sin(\frac{\alpha x \pi}{2})}{\sin(\frac{x \pi}{2})}$  is the Dirichlet sinc function.}
\end{lemma}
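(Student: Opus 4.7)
The plan is to reduce the general beam-pattern formula from Lemma~1 directly by plugging in the hypothesis $\Phi=0$ and then summing a geometric series whose ratio depends only on $U\Delta$. The only step requiring any thought is identifying the closed form of the symmetric geometric sum as a Dirichlet sinc.

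First I would start from the expression $\hat{f}_{\rm LSA}(r_0,\theta_0;r,\theta)$ given in \eqref{LSAsum}. The quadratic-in-$q$ exponent carries the factor $B_2 = \frac{\pi}{\lambda} q^2 (Ud_0)^2 \Phi$, which vanishes identically under the hypothesis $\frac{\cos^2\theta}{r} = \frac{\cos^2\theta_0}{r_0}$, i.e.\ $\Phi=0$. What remains is the linear-phase sum
\begin{equation}
\hat{f}_{\rm LSA} = \frac{1}{Q_{\rm LSA}} \left| \sum_{q\in\mathcal{Q}_{\rm LSA}} e^{\jmath \frac{2\pi}{\lambda} q U d_0 \Delta} \right|.
\end{equation}
Substituting $d_0 = \lambda/2$ collapses the phase to $\pi q U \Delta$, so the sum becomes $\sum_{q=-(Q_{\rm LSA}-1)/2}^{(Q_{\rm LSA}-1)/2} e^{\jmath \pi q U \Delta}$, which is a symmetric geometric series in the common ratio $z = e^{\jmath \pi U \Delta}$.

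Second I would evaluate the symmetric sum by the standard identity
\begin{equation}
\sum_{q=-(Q-1)/2}^{(Q-1)/2} z^{q} \;=\; \frac{z^{Q/2} - z^{-Q/2}}{z^{1/2} - z^{-1/2}} \;=\; \frac{\sin(Q \pi U \Delta /2)}{\sin(\pi U \Delta /2)},
\end{equation}
with $Q=Q_{\rm LSA}$, which by the definition $\Xi_{\alpha}(x) = \sin(\alpha x \pi/2)/\sin(x\pi/2)$ is exactly $\Xi_{Q_{\rm LSA}}(U\Delta)$. Taking the modulus and dividing by $Q_{\rm LSA}$ yields the claimed expression \eqref{eq27}. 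I would also briefly remark on the removable-singularity cases $U\Delta\in 2\mathbb{Z}$, where both numerator and denominator vanish but the limit equals $Q_{\rm LSA}$, matching $|\Xi_{Q_{\rm LSA}}(U\Delta)|=Q_{\rm LSA}$ by continuous extension (this is consistent with the case-distinction on $B_1 = 2\tau\pi$ highlighted just before the lemma).

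There is no real obstacle here; the only thing worth stating carefully is that the hypothesis $\Phi=0$ is precisely the condition singling out the user-ring on which the quadratic near-field phase term disappears, so that the LSA pattern at the ring becomes formally identical to a far-field LSA response with effective inter-antenna spacing $U d_0$, whence the Dirichlet sinc of argument $U\Delta$ rather than $\Delta$. This observation is what makes the lemma useful for extracting the main-lobe beam-width and the locations of the grating lobes in the subsequent analysis.
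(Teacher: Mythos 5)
Your proposal is correct and follows essentially the same route as the paper: the paper simply invokes the standard far-field-type identity (citing prior work) that on the user-ring the pattern reduces to $e^{\jmath\frac{2\pi}{\lambda}(r_0-r)}\Xi_{Q_{\rm LSA}}(U\Delta)/Q_{\rm LSA}$, which is exactly the geometric-series computation you carry out explicitly. Your additional remark on the removable singularities at $U\Delta\in 2\mathbb{Z}$ is a sound (if unstated in the paper) refinement.
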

\begin{proof}
	Similar to \cite{wang2023can}, for an LSA, {\rm if} $ \frac{\cos^2 \theta}{r} =  \frac{\cos^2 \theta_0}{r_0} $, its beam pattern can be rewritten as
	\begin{equation}
		\begingroup\makeatletter\def\f@size{9}\check@mathfonts
		{\small
		\begin{aligned}
			{\bf{b}}_{\rm LSA}^{\rm{H}}({r_0},{\theta _0}){{\bf{b}}}_{\rm LSA}(r,\theta )= 
			\frac{{e^{\jmath\frac{{2\pi }}{\lambda }({r_0} - r)}}\Xi_{Q_{\rm LSA}}(\Delta U)}{Q_{\rm LSA}}.
		\end{aligned}}
		\endgroup\nn
	\end{equation}
	
	Therefore, we have the desired result. 
\end{proof}

To characterize the null-to-null beam-width, we set $ {\frac{1}{2}Q_{\rm LSA} U\pi \Delta} = \pm \pi $ and attain the following result.

\begin{proposition}[Beam-width of LSA main-lobe]\emph{For an LSA parameterized by $\{Q_{\rm LSA}, U\}$, the null-to-null beam-width for the  main-lobe of LSA beamformer ${{\bf{b}}_{\rm LSA}}({r_0},{\theta _0})$ is 
\begin{equation}\label{LSA beam-width}
{\rm BW}^{(\rm M)}_{{\rm LSA}}=\frac{4}{Q_{\rm LSA} U}\approx\frac{4d_0}{A_{\rm LSA}}.
\end{equation}
}
\end{proposition}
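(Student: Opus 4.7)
The plan is to reduce the claim directly to Lemma 3, which already gives a closed-form expression for the LSA beam pattern on the user-ring $\Phi = 0$. Since Definition 2 measures the null-to-null beam-width precisely along this user-ring (where the second quadratic phase term $B_2$ vanishes), I would treat $\hat{f}_{\rm LSA}(r_0,\theta_0;r,\theta) = \frac{1}{Q_{\rm LSA}}|\Xi_{Q_{\rm LSA}}(U\Delta)|$ as a one-variable function of the spatial angle difference $\Delta = \sin\theta - \sin\theta_0$, and reduce the proposition to locating the two nulls of the Dirichlet sinc that flank the main-lobe peak at $\Delta = 0$.

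First, I would analyze the zero structure of $\Xi_{Q_{\rm LSA}}(U\Delta) = \sin(Q_{\rm LSA} U\Delta\pi/2)/\sin(U\Delta\pi/2)$. The numerator vanishes when $Q_{\rm LSA} U\Delta\pi/2 = k\pi$, i.e., $\Delta = 2k/(Q_{\rm LSA} U)$ for $k\in\mathbb{Z}$, and the denominator vanishes when $U\Delta = 2\ell$ for $\ell\in\mathbb{Z}$. A point is a genuine null of $\Xi_{Q_{\rm LSA}}$ exactly when $k$ is not a multiple of $Q_{\rm LSA}$; otherwise a continuous limit yields a peak of height $Q_{\rm LSA}$ (hence normalized height $1$), which recovers the main-lobe at $k=0$ and the grating-lobes at $U\Delta = 2u$, $u\in\mathcal{U}$, in agreement with Case 2 of the earlier discussion.

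Next, the nulls adjacent to the $\Delta = 0$ main-lobe correspond to $k=\pm 1$, giving $\Delta_{\rm right} = +\frac{2}{Q_{\rm LSA} U}$ and $\Delta_{\rm left} = -\frac{2}{Q_{\rm LSA} U}$. Substituting into Definition 2 and using $\Delta = \sin\theta - \sin\theta_0$, I obtain
\begin{equation*}
{\rm BW}^{(\rm M)}_{\rm LSA} = |\sin\theta_{\rm right} - \sin\theta_{\rm left}| = |\Delta_{\rm right} - \Delta_{\rm left}| = \frac{4}{Q_{\rm LSA}\, U}.
\end{equation*}
The asymptotic form $4d_0/A_{\rm LSA}$ then follows from $A_{\rm LSA} = (Q_{\rm LSA}-1)U d_0 \approx Q_{\rm LSA} U d_0$ for moderately large $Q_{\rm LSA}$.

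There is no substantive obstacle here once Lemma 3 is in hand; the argument is essentially a zero-count on a Dirichlet kernel. The only care required is verifying that $k=\pm 1$ are genuine nulls rather than removable singularities coinciding with zeros of $\sin(U\Delta\pi/2)$, which holds whenever $Q_{\rm LSA}\geq 2$ so that $\pm 1$ are not multiples of $Q_{\rm LSA}$. As a sanity check, setting $U = 1$ recovers the classical ULA main-lobe beam-width $4/Q_{\rm LSA}$, which confirms that sparsifying the array by a factor $U$ narrows the main-lobe by the same factor, consistent with the aperture-scaling interpretation offered by the second form of \eqref{LSA beam-width}.
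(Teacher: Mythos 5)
Your proposal is correct and follows essentially the same route as the paper, which simply sets $\tfrac{1}{2}Q_{\rm LSA}U\pi\Delta=\pm\pi$ in the Dirichlet sinc of Lemma~3 to locate the first nulls at $\Delta=\pm 2/(Q_{\rm LSA}U)$ and takes their difference. Your additional check that $k=\pm1$ are genuine nulls rather than removable singularities is a sound (if minor) tightening of the same argument.
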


%
%

It is observed from \eqref{LSA beam-width} that the beam-width of LSA main-lobe is \emph{inversely} proportional to the LSA aperture. As the LSA aperture is usually much larger than that of ULA given the same antenna number,  the beam-width of LSA main-lobe is significantly narrower than that of ULA. Thereby, the LSA can more effectively mitigate the inter-user interference, especially in scenarios with  dense user distribution.
                                                                                                                                                                                                                                                                       
\subsubsection{Beam-depth in the Range Domain}~                                                                                                                                                                                                                        
For the LSA main-lobe, its beam-depth is another important parameter for characterizing the beam-focusing effect in the range domain. Based on Definition~\ref{Def:Depth}, we obtain the  beam-depth of  LSA main-lobe, which can be  proved by using similarly methods in \cite{liu2023near} and thus omitted for brevity.
\begin{lemma}\emph{When $ \theta = \theta_{0} $ (equivalently $ \beta_{1} = 0$), the LSA beam pattern in \eqref{LSAClosedForm} reduces to
\begin{equation}
{\hat{f}_{\rm LSA}({r_0},{\theta _0};r,\theta ) \approx}
	\left| F(\beta_{2})\right|,
\end{equation}
where $F(x)  \triangleq \frac{C(x)+\jmath S(x)}{x}$ with $\beta_{2} = \frac{Q_{\rm LSA} U}{2}\sqrt{{d_0}\left| \Phi \right|}$.
}
\end{lemma}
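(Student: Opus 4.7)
The plan is to obtain this lemma as a direct corollary of Lemma~\ref{LSAgeneral} rather than re-deriving the Fresnel-based approximation from scratch. Setting $\theta=\theta_0$ gives $\Delta=\sin\theta-\sin\theta_0=0$, while for $r\neq r_0$ one still has $\Phi=\cos^2\theta_0\l(\tfrac{1}{r_0}-\tfrac{1}{r}\r)\neq 0$. Hence $\beta_1=\Delta/\sqrt{d_0|\Phi|}$ is well-defined and equals $0$, and the excluded grating-lobe angles $\Delta=2u/U$, $u\in\mathcal{U}$, are all nonzero, so the hypothesis of Lemma~\ref{LSAgeneral} is met and the closed form \eqref{LSAClosedForm} applies without modification.

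Next I would simplify $G(\beta_1,\beta_2)=(\widehat{C}(\beta_1,\beta_2)+\jmath\widehat{S}(\beta_1,\beta_2))/(2\beta_2)$ at $\beta_1=0$ by exploiting the oddness of the Fresnel integrals, i.e. $C(-x)=-C(x)$ and $S(-x)=-S(x)$ (which follows immediately by substituting $u=-t$ in the defining integrals). This yields $\widehat{C}(0,\beta_2)=C(\beta_2)-C(-\beta_2)=2C(\beta_2)$ and analogously $\widehat{S}(0,\beta_2)=2S(\beta_2)$. Substituting these into $G$ gives $G(0,\beta_2)=(C(\beta_2)+\jmath S(\beta_2))/\beta_2$, which is precisely $F(\beta_2)$. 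Taking magnitudes delivers the claim $\hat{f}_{\rm LSA}(r_0,\theta_0;r,\theta)\approx|F(\beta_2)|$.

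There is no substantial obstacle here: once Lemma~\ref{LSAgeneral} is in hand, the argument is essentially a two-line calculation built on the oddness of the Fresnel integrals. The only step that warrants a brief justification is the admissibility of the specialization itself, namely confirming that the configuration $\{\theta=\theta_0,\, r\neq r_0\}$ lies inside Case~3 of the case analysis preceding Lemma~\ref{LSAgeneral}, so that no separate limiting treatment near a zero of $\Phi$ or near a grating-lobe angle is required.
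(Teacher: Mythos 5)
Your proof is correct and follows exactly the route the paper intends: the lemma statement itself says the closed form \eqref{LSAClosedForm} ``reduces to'' $|F(\beta_2)|$, and your specialization of Lemma~\ref{LSAgeneral} at $\beta_1=0$ via the oddness of the Fresnel integrals ($\widehat{C}(0,\beta_2)=2C(\beta_2)$, $\widehat{S}(0,\beta_2)=2S(\beta_2)$) is precisely that reduction, with the admissibility check ($\Delta=0\notin\{2u/U:u\in\mathcal{U}\}$, $\Phi\neq0$ for $r\neq r_0$) done correctly. The paper omits these details by citing the literature, so your write-up simply supplies the short calculation it leaves implicit.
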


\begin{proposition}[Beam-depth of LSA main-lobe]\label{main-lobe rang}
	\emph{
For an LSA parameterized by  $\{Q_{\rm LSA}, U\}$,  the 3-dB  beam-depth for the main-lobe of LSA beamformer ${\bf{b}}_{\rm LSA}({r_0},{\theta_0 })$ is given by
\begin{equation}
		\mathrm{BD}_{{\rm LSA}}^{\rm (M)} = \begin{cases}\dfrac{2 r_{0}^2 r_{\mathrm{LSA}}}{r_{\mathrm{LSA}}^2-r_{0}^2}, & r_{0}<r_{\mathrm{LSA}} \\ \infty, & r_{0} \geq r_{\mathrm{LSA}}\end{cases},
	\end{equation}
	\rm where $ r_{\mathrm{LSA}} {\approx} \frac{Q_{\rm LSA}^{2} U^{2}d_0 \cos ^2 \theta_{0}}{4  \varphi _{3 \mathrm{dB}}^2}$ and $ \varphi _{3\rm dB} = 1.6 $\cite{cui2022channel}. 
}
\end{proposition}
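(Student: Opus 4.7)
The plan is to reduce the problem to a one-dimensional analysis along the range axis at $\theta=\theta_0$, and then invert the resulting equation for $\Phi$ to recover the near and far boundary ranges of the main-lobe.

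First, I would invoke the preceding lemma, which states that at $\theta=\theta_0$ (equivalently $\beta_1=0$) the LSA beam pattern simplifies to $|F(\beta_2)|$ with $F(x)=(C(x)+\jmath S(x))/x$ and $\beta_2=\tfrac{Q_{\rm LSA}U}{2}\sqrt{d_0|\Phi|}$. Since $C(x)/x\to 1$ and $S(x)/x\to 0$ as $x\to 0$, the maximum beam power is attained at $\beta_2=0$ with $|F(0)|=1$. By Definition~\ref{Def:Depth}, the 3-dB beam-depth is determined by the set of ranges $r$ satisfying $|F(\beta_2)|^2\le \tfrac{1}{2}$. Thus I would first locate the critical value $\varphi_{3\mathrm{dB}}$ such that $|F(\varphi_{3\mathrm{dB}})|^2=\tfrac{1}{2}$; this value is not available in closed form, but can be computed numerically from the Fresnel integrals and has been tabulated as $\varphi_{3\mathrm{dB}}\approx 1.6$ in \cite{cui2022channel}.

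Second, I would translate the 3-dB condition $\beta_2=\varphi_{3\mathrm{dB}}$ into a condition on $|\Phi|$. Squaring the definition of $\beta_2$ yields
\begin{equation}
|\Phi|=\frac{4\varphi_{3\mathrm{dB}}^2}{Q_{\rm LSA}^2U^2 d_0}.\nonumber
\end{equation}
Recalling $\Phi=\frac{\cos^2\theta_0}{r_0}-\frac{\cos^2\theta}{r}$ and setting $\theta=\theta_0$, this becomes $\cos^2\theta_0\,\bigl|\tfrac{1}{r_0}-\tfrac{1}{r}\bigr|=\tfrac{1}{r_{\rm LSA}}$ with the given definition of $r_{\rm LSA}$.

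Third, I would solve the absolute-value equation on the two sides of $r_0$. For the far boundary $r_{\rm large}>r_0$, removing the modulus gives $\tfrac{1}{r_0}-\tfrac{1}{r_{\rm large}}=\tfrac{1}{r_{\rm LSA}}$, hence $r_{\rm large}=\tfrac{r_0 r_{\rm LSA}}{r_{\rm LSA}-r_0}$ provided $r_0<r_{\rm LSA}$; when $r_0\ge r_{\rm LSA}$ the equation has no positive solution, meaning the beam power never drops to the 3-dB level as $r\to\infty$, so $r_{\rm large}=\infty$. For the near boundary $r_{\rm small}<r_0$, the equation becomes $\tfrac{1}{r_{\rm small}}-\tfrac{1}{r_0}=\tfrac{1}{r_{\rm LSA}}$, yielding $r_{\rm small}=\tfrac{r_0 r_{\rm LSA}}{r_{\rm LSA}+r_0}$, which is always positive.

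Finally, I would subtract the two expressions to obtain the beam-depth. A direct algebraic simplification gives
\begin{equation}
\mathrm{BD}_{\rm LSA}^{\rm (M)}=r_{\rm large}-r_{\rm small}=\frac{2r_0^2 r_{\rm LSA}}{r_{\rm LSA}^2-r_0^2}\nonumber
\end{equation}
when $r_0<r_{\rm LSA}$, and $\mathrm{BD}_{\rm LSA}^{\rm (M)}=\infty$ otherwise, matching the claim. The main obstacle is the non-closed-form characterization of $\varphi_{3\mathrm{dB}}$; I would resolve this by borrowing the numerical value from \cite{cui2022channel}, after which the rest of the derivation is an elementary inversion of a linear equation in $1/r$.
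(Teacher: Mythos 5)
Your proof is correct and follows essentially the same route the paper uses: the paper omits an explicit proof of this proposition, but its Appendix B proof of the grating-lobe counterpart (Proposition~\ref{LSA GL Beam-depth}) is exactly your argument specialized away from $\theta=\theta_0$, namely imposing $\beta_2\le\varphi_{3\mathrm{dB}}=1.6$ on $|F(\beta_2)|$, converting this to $|\cos^2\theta_0/r_0-\cos^2\theta/r|\le 1/r_{\rm LSA}$, and solving the two one-sided linear equations in $1/r$. Your handling of the $r_0\ge r_{\rm LSA}$ case and the final algebra match the paper's result.
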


\begin{remark}[What affects the beam-depth of LSA main-lobe?]
\emph{Proposition \ref{main-lobe rang} indicates for the LSA communication system, there appears the beam-focusing effect only when the user range is smaller than a threshold, which is determined by the LSA sparsity, antenna number, and user angle. Generally speaking, with a higher array sparsity, there is more prominent beam-focusing effect than the conventional ULA. 
	In addition, as the user's AoD, $ \theta_0 $, increases, the focused region diminishes. In particular, when $ \theta_0 = \pm \frac{\pi}{2} $, the focused region becomes zero.	
%
Moreover, when $ r_{0}<r_{\mathrm{LSA}} $, the beam-depth of LSA main-lobe is inversely proportional to the aperture and monotonically reduces when the user is closer to the LSA.
}
\end{remark}

\vspace{-0.7cm}
\subsection{Grating-lobes of LSA Beam Pattern}
\subsubsection{Beam-width in the Angular Domain}
Although the LSA can enhance the near-field beam focusing effect by greatly enlarging the array aperture, it usually introduces undesired grating-lobes that may aggravate the inter-user interference issue. In this subsection, we characterize the beam-width, beam-depth, and beam-height for the grating-lobes of the LSA. 

First, we determine the angles for the LSA grating-lobes as below, which can  be proved from the condition $ \Delta =\! \frac{2 u}{ U}$.
%
\begin{lemma}\label{LemmaLSAGratingn-lobe}
	\rm For an LSA parameterized by $\{Q_{\rm LSA}, U\}$, the grating-lobes of its beamformer ${{\bf{b}}_{\rm LSA}}({r_0},{\theta _0})$ occur at the following angles:
	\begin{equation}
	 \theta^{(\rm G)}_{{\rm LSA}, u} \!\!=\! \arcsin\l(\sin\theta_0\!+\!\frac{2u}{ U}\r), \forall u = \pm 1, \!\dots,\! \pm ( U - 1).
	 	\end{equation}
		Moreover, the beam-height of all LSA grating-lobes is $1$.
\end{lemma}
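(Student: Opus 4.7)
The plan is to start from the closed-form LSA beam pattern in Lemma 1 and exploit the separation of its summand into a linear-in-$q$ phase $B_1 = \frac{2\pi}{\lambda} q U d_0 \Delta$ and a quadratic-in-$q$ phase $B_2 = \frac{\pi}{\lambda} q^2 (U d_0)^2 \Phi$. The overall strategy is a two-step argument: first pin down the discrete set of angles at which $B_1$ is a multiple of $2\pi$ for every antenna index (so the linear phases add coherently), and then maximize what remains of the sum over the range coordinate $r$ to establish the unit beam-height.

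For the first step, substituting $d_0 = \lambda/2$ reduces the requirement $B_1 \in 2\pi\mathbb{Z}$ for all $q \in \mathcal{Q}_{\rm LSA}$ to $U\Delta \in 2\mathbb{Z}$, equivalently $\Delta = 2u/U$ for some integer $u$. Using $\Delta = \sin\theta - \sin\theta_0$ and inverting then yields $\theta = \arcsin(\sin\theta_0 + 2u/U)$. Excluding $u = 0$ (which recovers the main-lobe at $\theta = \theta_0$) and restricting to those integers for which $|\sin\theta_0 + 2u/U| \le 1$ produces exactly the claimed grating-lobe angles $\theta^{(\rm G)}_{{\rm LSA}, u}$ with $u = \pm 1, \ldots, \pm(U-1)$.

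For the second step, at any such grating-lobe angle the linear phases vanish and
\begin{equation*}
\hat{f}_{\rm LSA}\!\left(r_0, \theta_0; r, \theta^{(\rm G)}_{{\rm LSA}, u}\right) = \frac{1}{Q_{\rm LSA}} \left| \sum_{q \in \mathcal{Q}_{\rm LSA}} e^{\jmath B_2} \right| \le 1,
\end{equation*}
with equality iff every $B_2 \in 2\pi\mathbb{Z}$. Because $B_2$ is proportional to $\Phi$, the most transparent sufficient condition is $\Phi = 0$, which is precisely the user-ring equation $\cos^2\theta/r = \cos^2\theta_0/r_0$ appearing in Definition 2; choosing $r^{\star} = r_0 \cos^2 \theta^{(\rm G)}_{{\rm LSA}, u} / \cos^2 \theta_0$ then forces every $B_2$ to vanish identically, so the coherent sum equals $Q_{\rm LSA}$ and the normalized beam pattern attains the value $1$. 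Combining this with the beam-height definition (maximum over admissible $r$) gives $\mathrm{BH}^{(\rm G)}_{{\rm LSA}, u} = 1$.

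The only delicate point I expect is verifying that the maximizing range $r^{\star}$ lies in the Fresnel window $(1.2 A_{\rm LSA}, Z_{\rm LSA})$ required by Definition 4. Since $r^{\star}$ lies on the same user-ring $\Phi = 0$ as the target $(r_0, \theta_0)$, this should follow from the standing assumption that $(r_0, \theta_0)$ is in the near-field region, possibly by mildly restricting the admissible $u$ without affecting the structural conclusion. The result then drops out by reading off the angles from Step 1 and the unit peak value from Step 2.
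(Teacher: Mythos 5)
Your proposal is correct and follows essentially the same route as the paper: the grating-lobe angles come from forcing the linear phase $B_1$ to be a multiple of $2\pi$ (i.e., $\Delta = 2u/U$), and the unit beam-height follows by evaluating on the user-ring $\Phi=0$, where the pattern reduces to the normalized Dirichlet kernel of Lemma 3 attaining its peak value $Q_{\rm LSA}$. Your treatment is somewhat more careful than the paper's (explicitly maximizing over $r$ and flagging the Fresnel-window check), but the underlying argument is the same.
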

\begin{figure}[t]
	\begin{center}
		\centering
		\includegraphics[width=2.7in]{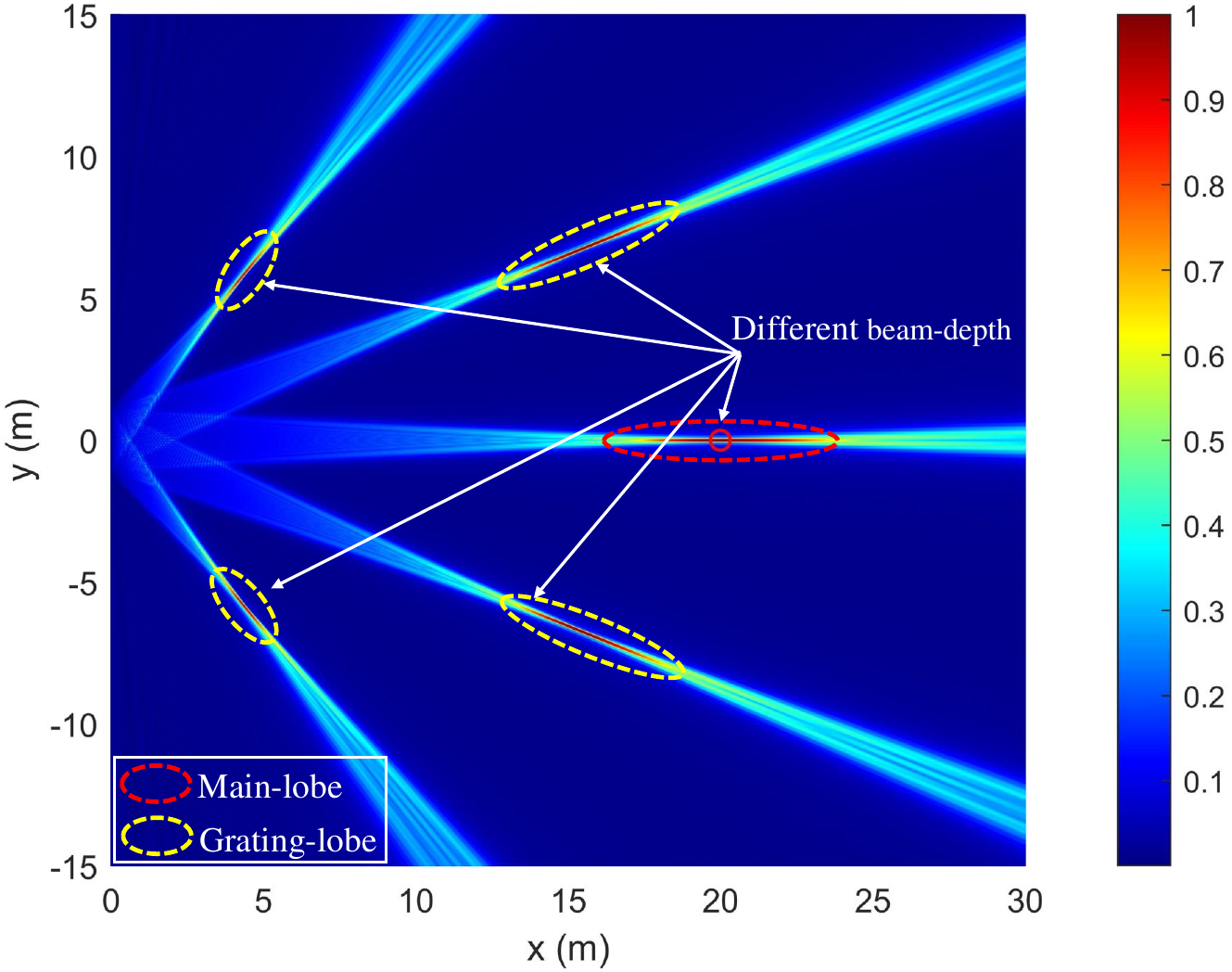}
		\caption{{Beam-depths of main-lobe versus grating-lobes with $ Q_{\rm LSA} = 129 $ and $ U = 5 $ under $ \lambda = 0.01 $ m at ($ 0^{\circ} $, 20 m).}} 
		\label{fig_5}
		\vspace{-0.5cm} 
	\end{center}	
\end{figure}
\vspace{-0.1cm}
Lemma \ref{LemmaLSAGratingn-lobe} shows that there appear {$2(U\!-\!1)$} strong grating-lobes in the LSA beam pattern with a beam-height of $1$ in the entire angular domain as illustrated in Figs. \ref{fig_4}(a). Further, similar to the LSA main-lobe, the beam-width of LSA grating-lobes can  be easily obtained as $ {\rm BW}^{(\rm G)}_{{\rm LSA},u}=\frac{4d_0}{A_{\rm LSA}}, \forall u. $


\subsubsection{Beam-depth in the Range Domain}
For the beam-depth analysis of LSA grating-lobes, we set $\theta = \theta_{{\rm LSA},u}^{\rm (G)}$ and thus obtain the following result according to Definition~\ref{Def:Depth}.
\begin{lemma}\label{Lemmma5}\emph{
	When $ \Delta= \frac{2u}{ U}, \forall u\in \mathcal{U}$, the LSA beam pattern $ \hat{f}_{\rm LSA}({r_0},{\theta _0};r,\theta_{{\rm LSA},u}^{\rm (G)} )$ in \eqref{LSAsum} can be approximated as
	\begin{equation}
	\begin{aligned}
	\hat{f}_{\rm LSA}({r_0},{\theta _0};r,\theta_{{\rm LSA},u}^{\rm (G)} ) \approx 
	\left| F(\beta_{2}) \right|, \forall u\in \mathcal{U},
	\end{aligned}		
	\end{equation}
	\rm where {$ \beta_{2} = \frac{Q_{\rm LSA} U}{2}\sqrt{{d_0}\left| \frac{{{{\cos }^2}\theta_{0} }}{{r_{0}}} - {\frac{{{{\cos }^2}{\theta_{{\rm LSA},u}^{\rm (G)}}}}{{{r}}}} \right| } $}.}
\end{lemma}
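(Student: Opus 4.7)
\proof
The plan is to substitute the grating-lobe condition $\Delta = 2u/U$ directly into the general beam-pattern expression (\ref{LSAsum}) from Lemma~1, observe that the linear phase $B_1$ collapses to an integer multiple of $2\pi$, and then approximate the resulting Gauss-type sum over the remaining quadratic phase $B_2$ by a Fresnel integral, which is exactly the approximation mechanism already invoked as $(a_1)$ in Lemma~1 and used earlier in this section for the $\beta_1=0$ main-lobe slice.

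First, I would plug $\Delta=2u/U$ into $B_1=\frac{2\pi}{\lambda}qUd_0\Delta$. Since $d_0=\lambda/2$, this yields $B_1=2\pi qu$, and because $q\in\mathcal{Q}_{\rm LSA}$ and $u\in\mathcal{U}$ are integers we have $e^{\jmath B_1}=1$ for every $q$. Consequently, along the grating-lobe angle $\theta_{{\rm LSA},u}^{(G)}$ the beam pattern collapses to
\begin{equation}
\hat f_{\rm LSA}\bigl(r_0,\theta_0;r,\theta_{{\rm LSA},u}^{(G)}\bigr)
=\frac{1}{Q_{\rm LSA}}\left|\sum_{q\in\mathcal{Q}_{\rm LSA}}\exp\!\left(\jmath\frac{\pi}{\lambda}q^{2}(Ud_0)^{2}\Phi\right)\right|,\nn
\end{equation}
which has exactly the same structural form as the $\beta_1=0$ (i.e.\ $\theta=\theta_0$) case treated just before Proposition~3; only the value of $\Phi$ has changed because $\theta$ is now the grating-lobe angle rather than $\theta_0$.

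Next, I would approximate this Riemann sum by the corresponding integral over $q$ from $-Q_{\rm LSA}/2$ to $Q_{\rm LSA}/2$ (this is the same Fresnel-style continuous approximation that underlies Lemma~2 and is valid whenever $Q_{\rm LSA}$ is moderately large). Using $d_0=\lambda/2$, the exponent simplifies to $\mathrm{sgn}(\Phi)\,\tfrac{\pi}{2}q^{2}U^{2}d_0|\Phi|$. Then I would introduce the change of variables $t=qU\sqrt{d_0|\Phi|}$, so that $dq=dt/(U\sqrt{d_0|\Phi|})$ and the integration limits become $\pm\beta_2$ with $\beta_2=\tfrac{Q_{\rm LSA}U}{2}\sqrt{d_0|\Phi|}$. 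This converts the integral into
\begin{equation}
\frac{1}{Q_{\rm LSA}U\sqrt{d_0|\Phi|}}\int_{-\beta_2}^{\beta_2}\!\!\exp\!\left(\jmath\,\mathrm{sgn}(\Phi)\tfrac{\pi}{2}t^{2}\right)dt
=\frac{2}{Q_{\rm LSA}U\sqrt{d_0|\Phi|}}\bigl[C(\beta_2)+\jmath\,\mathrm{sgn}(\Phi)S(\beta_2)\bigr],\nn
\end{equation}
using the parity of the Fresnel integrands and the definitions of $C(\cdot)$ and $S(\cdot)$ given in Lemma~\ref{LSAgeneral}. Taking the modulus and noting that $\tfrac{2}{Q_{\rm LSA}U\sqrt{d_0|\Phi|}}=1/\beta_2$ gives $|[C(\beta_2)+\jmath S(\beta_2)]/\beta_2|=|F(\beta_2)|$, as claimed.

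The only genuinely delicate point is justifying the Riemann-sum-to-integral replacement uniformly in $u$: the quadratic-phase increment between consecutive $q$ values must remain small, which is precisely the Fresnel-approximation regime already assumed for $\hat f_{\rm LSA}$ in Lemma~1, so no new work is needed beyond quoting that approximation. Everything else is an elementary algebraic simplification that follows once the linear phase $B_1$ has been eliminated by the integer identity $e^{\jmath 2\pi qu}=1$.
\endproof
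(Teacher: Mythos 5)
Your proof is correct and follows essentially the same route the paper intends: the paper omits this proof (deferring to a citation), but your argument is exactly the Fresnel sum-to-integral machinery of Appendix~A, correctly specialized to the grating-lobe angles by first observing that $B_1 = 2\pi qu$ is a trivial phase when $\Delta = 2u/U$ (precisely the case excluded by step $(b_2)$ of Appendix~A), which reduces the expression to the quadratic-phase form already handled for the main lobe and yields $|F(\beta_2)|$ after the change of variables $t = qU\sqrt{d_0|\Phi|}$. No gaps beyond the approximations the paper itself already assumes.
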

\begin{proof}
	The prove is similar in \cite{cui2022channel} and thus are omitted. 
\end{proof}


\begin{proposition}[Beam-depth of LSA grating-lobes]\label{LSA GL Beam-depth}
	\rm For an LSA parameterized by  $\{Q_{\rm LSA}, U\}$,  the 3-dB  beam-depths for the grating-lobes of LSA beamformer ${\bf{b}}_{\rm LSA}({r_0},{\theta_0 })$ are 
	\vspace{-8pt}
	\begin{equation}
		\mathrm{BD}_{{\rm LSA}, u}^{\rm (G)}= \begin{cases}\frac{2 r_{0}^2 r_{\mathrm{LSA},u}^{\rm (G)}}{r_{\mathrm{LSA}}^2-r_{0}^2}, & r_{0}<r_{\mathrm{LSA}} \\ \infty, & r_{0} \geq r_{\mathrm{LSA}}\end{cases}, \forall u,
		\vspace{-6pt}
	\end{equation}
	\rm where $ r_{\mathrm{LSA},u}^{\rm (G)} \approx \frac{
		Q_{\rm LSA}^{2} U^{2}d_0 \cos ^2 \theta^{\rm (G)}_{{\rm LSA},u}}{4  \varphi _{3 \mathrm{dB}}^2}$, $\forall u$. 
\end{proposition}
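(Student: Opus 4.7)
The strategy is to parallel the derivation of Proposition 3 (the main-lobe beam-depth) by exploiting Lemma 6, which tells us that along the grating-lobe angle $\theta=\theta^{(\rm G)}_{{\rm LSA},u}$ the LSA beam pattern reduces to $|F(\beta_{2})|$ with $\beta_{2}=\frac{Q_{\rm LSA}U}{2}\sqrt{d_{0}\,|\cos^{2}\theta_{0}/r_{0}-\cos^{2}\theta^{(\rm G)}_{{\rm LSA},u}/r|}$. Since the universal scalar function $|F(\cdot)|$ has the same 3-dB threshold $\beta_{2}=\varphi_{3\rm dB}\approx 1.6$ used in the main-lobe analysis, the 3-dB condition reduces to a single scalar equation in the unknown range $r$.

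First I would impose $|F(\beta_{2})|^{2}=1/2$, obtain $\beta_{2}=\varphi_{3\rm dB}$, and rearrange to
\[
\left|\frac{\cos^{2}\theta_{0}}{r_{0}}-\frac{\cos^{2}\theta^{(\rm G)}_{{\rm LSA},u}}{r}\right|=\frac{4\varphi_{3\rm dB}^{2}}{Q_{\rm LSA}^{2}U^{2}d_{0}}.
\]
Using the definition of $r_{\rm LSA}$ carried over from Proposition 3 and that of $r^{(\rm G)}_{{\rm LSA},u}$ in the current statement, the right-hand side equals both $\cos^{2}\theta_{0}/r_{\rm LSA}$ and $\cos^{2}\theta^{(\rm G)}_{{\rm LSA},u}/r^{(\rm G)}_{{\rm LSA},u}$; this dual identity is the key algebraic lever. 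Removing the absolute value produces two candidate roots
\[
r_{\rm small}=\frac{\cos^{2}\theta^{(\rm G)}_{{\rm LSA},u}\,r_{0}\,r_{\rm LSA}}{\cos^{2}\theta_{0}\,(r_{\rm LSA}+r_{0})},\qquad r_{\rm large}=\frac{\cos^{2}\theta^{(\rm G)}_{{\rm LSA},u}\,r_{0}\,r_{\rm LSA}}{\cos^{2}\theta_{0}\,(r_{\rm LSA}-r_{0})}.
\]
The upper root $r_{\rm large}$ is positive and finite if and only if $r_{0}<r_{\rm LSA}$; otherwise the upper edge of the 3-dB region escapes to $+\infty$, giving the dichotomy in the statement. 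Subtracting the two roots and simplifying yields $r_{\rm large}-r_{\rm small}=\frac{\cos^{2}\theta^{(\rm G)}_{{\rm LSA},u}}{\cos^{2}\theta_{0}}\cdot\frac{2r_{0}^{2}\,r_{\rm LSA}}{r_{\rm LSA}^{2}-r_{0}^{2}}$, and substituting $\cos^{2}\theta^{(\rm G)}_{{\rm LSA},u}/\cos^{2}\theta_{0}=r^{(\rm G)}_{{\rm LSA},u}/r_{\rm LSA}$ (from the dual identity above) collapses the prefactor into the claimed $2r_{0}^{2}\,r^{(\rm G)}_{{\rm LSA},u}/(r_{\rm LSA}^{2}-r_{0}^{2})$.

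The main subtle point is recognizing that the threshold governing finiteness of the beam-depth is $r_{\rm LSA}$ (tied to the \emph{beamformer} angle $\theta_{0}$) rather than $r^{(\rm G)}_{{\rm LSA},u}$: both quantities encode the same inverse-range scale through the dual identity, but only the former sits in the denominator $r_{\rm LSA}^{2}-r_{0}^{2}$ that controls the sign of $r_{\rm large}$. Once this is in hand, the remaining manipulations are routine and essentially replicate Proposition 3 with the observation-side cosine swapped from $\cos^{2}\theta_{0}$ to $\cos^{2}\theta^{(\rm G)}_{{\rm LSA},u}$. A useful sanity check is that setting $u=0$ (the main-lobe, where $\theta^{(\rm G)}_{{\rm LSA},0}=\theta_{0}$ and therefore $r^{(\rm G)}_{{\rm LSA},0}=r_{\rm LSA}$) recovers Proposition 3 exactly.
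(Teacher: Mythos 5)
Your proposal is correct and follows essentially the same route as the paper's Appendix B: impose the 3-dB condition $\beta_2\le\varphi_{3\rm dB}$ from Lemma 6, solve the resulting two-sided inequality on $\cos^2\theta^{(\rm G)}_{{\rm LSA},u}/r$ for the two range boundaries, and subtract, with the finiteness dichotomy governed by whether $r_0<r_{\rm LSA}$. Your explicit "dual identity" between $r_{\rm LSA}$ and $r^{(\rm G)}_{{\rm LSA},u}$ is just a cleaner statement of the paper's substitution $r_{\rm LSA}=r'_{\rm LSA}\cos^2\theta_0$, $r^{(\rm G)}_{{\rm LSA},u}=r'_{\rm LSA}\cos^2\theta^{(\rm G)}_{{\rm LSA},u}$, so no substantive difference.
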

\begin{proof}
	Please refer to \bf Appendix C.
\end{proof}

Moreover, comparing Proportions~\ref{main-lobe rang} and~\ref{LSA GL Beam-depth}, we have
\begin{equation}
	\begin{aligned}
		\frac{\mathrm{BD}_{{{\rm LSA},u}}^{\rm (G)}}{\mathrm{BD}_{{\rm LSA}}^{\rm (M)}} & \!=\! \frac{1\!-\!(\sin\theta_{0}\!+\!\Delta_{{\rm LSA},u})^{2}}{\cos^{2}\theta_{0}}\!\\
		&=\! 1\!-\!\frac{\Delta_{{\rm LSA},u}(\Delta_{{\rm LSA},u}\!+\!2\sin\theta_{0})}{\cos^{2}\theta_{0}}, \forall u,
	\end{aligned}\label{NewEq26}	
\end{equation}
where $ \Delta_{{\rm LSA},u} \triangleq \sin\theta^{(\rm G)}_{{\rm LSA}, u} - \sin\theta_0$.
\begin{remark}[Beam-depths of main-lobe versus grating-lobes]\label{remark2} 	\rm From (\ref{NewEq26}), we can obtain the following useful insights.
	\begin{itemize}
		\item{When $\theta_{0} = 0$, the beam-depth of LSA grating-lobes is always smaller than that of its main-lobe, indicating that the grating-lobes are focused on a smaller region. Moreover, when the grating-lobes deviate farther from the main-lobe in the angular domain, they have narrower beam-depths as numerically shown in Fig. \ref{fig_5}.}
		\item{{When $\sin\theta_{0} > 0$, the grating-lobes have longer beam-depths than the main-lobe when their angles $ \sin\theta^{(\rm G)}_{{\rm LSA}, u} \in ( -\sin\theta_0, \sin\theta_0) $, and the beam-depth monotonically increases when $ \theta^{(\rm G)}_{{\rm LSA}, u} $ approaches $ 0 $. In contrast, the beam-widths of grating-lobes in other angles are smaller than that of the main-lobe.}} Similarly, we can obtain the result for the case $ \sin\theta_0 < 0 $ by using array symmetry.
	\end{itemize}
\end{remark}

The above LSA beam pattern analysis shows that although the LSA can enable near-field communications by enlarging the array aperture, it may potentially incur severe IUI since it generate many grating-lobes with focused high power.


\vspace{-0.4cm}
\subsection{Hybrid Beamforming Design for LSA}
In this subsection, we design efficient hybrid beamforming for LSAs. We consider the weighted sum-rate maximization problem, which is formulated as below\footnote{The design framework can be readily extended to optimizing other communication performance.}.
\begin{equation*}
	\begin{aligned}
		({\bf P1}):\!	\max\limits_{\mathbf{F}_{\rm LSA,\mathrm{A}}, \mathbf{F}_{\rm LSA, \mathrm{D}}} &\sum\limits_{k = 1}^{K} { w_k \log (1 +  {\rm SINR}_k)} \\
		\qquad\qquad\quad{\text{s}}{\text{.t}}{\rm{. }} \qquad~  &\left|\left[\mathbf{F}_{\rm LSA,\mathrm{A}}\right]_{q, k}\right|=1, \forall q\in\mathcal{Q}_{\rm LSA} , k\in\mathcal{K}\\
		&\left \| \mathbf{F}_\mathrm{\rm LSA,A}\mathbf{F}_\mathrm{\rm LSA,D}  \right \|^{2}_{F}   \le {P_{\max}},
	\end{aligned}
\end{equation*}
where $ P_{\max} $ represents the maximum transmit power of the LSA and $w_k$ is the user weight. However, it is difficult to directly solve Problem (P1) due to the constant-modulus constraint and the mutual coupling between $ \mathbf{F}_\mathrm{\rm LSA,A} $ and $ \mathbf{F}_\mathrm{\rm LSA,D} $.


To address this issue, we first observe from the above LSA beam pattern analysis that when applying the MRT beamforming design for a target user, there may exist strong IUI when other users are located at the focused regions of its grating-lobes. Thus, we cannot directly apply the low-complexity algorithm that first maximizes the received  power at individual users using the MRT-based analog beamforming and then minimizes the residual IUI by optimizing the digital beamforming. 

Therefore, for the LSA, one efficient hybrid beamforming method is by minimizing the difference between the optimal digital beamforming and the hybrid beamforming, by solving the following problem
\begin{equation*}
	\begin{aligned}
	({\bf P2}):\!	\min _{\mathbf{F}_{\rm LSA,\mathrm{A}}, \mathbf{F}_{\rm LSA},\mathrm{D}} & \left\|\mathbf{F}^{\mathrm{opt}}_{\rm LSA}-\mathbf{F}_{\rm LSA ,\mathrm{A}} \mathbf{F}_{\rm LSA,\mathrm{D}}\right\|_F^2 \\
	\text { s.t. }~~~~~ & \left|\left[\mathbf{F}_{\rm LSA,\mathrm{A}}\right]_{q, k}\right|=1, \forall q\in\mathcal{Q}_{\rm LSA} , k\in\mathcal{K} \\
	& \left\|\mathbf{F}_{\rm LSA,\mathrm{A}} \mathbf{F}_{\rm LSA, \mathrm{D}}\right\|_F^2 \leq P_{\max },
	\end{aligned}
\end{equation*}
where $\mathbf{F}^{\mathrm{opt}}_{\rm LSA} $ represents the optimal fully-digital precoding matrix for LSA. 
To solve Problem (P2), one can employ the alternating optimization (AO) method that iteratively optimizes one of $\mathbf{F}_{\rm LSA,\mathrm{A}} $ and $ \mathbf{F}_{\rm LSA,\mathrm{D}}$ with the other one being fixed, until the convergence is achieved.  The detailed algorithm is similar to that in \cite{yu2016alternating} and thus is omitted for brevity. It is worth noting that due to the small number of antennas, the computational complexity for the above proposed algorithm is much smaller than the ULA case given the same aperture size, which is thus affordable in practice.

\section{Extended Coprime Array}\label{ECA}
In this section, we first analyze the ECA beam pattern and then demonstrate its appealing capability to reduce the power of grating-lobes. Subsequently, we propose an efficient hybrid beamforming method for ECA. 

\begin{figure*}[ht]
	\begin{equation}
		{\small
			\begin{aligned}\label{ECAsum}
				f_{\rm ECA}  
	 			\approx\frac{1}{Q_{\rm ECA}}\left| 
				\sum_{m\in \mathcal{M}}e^{\left({\jmath {\frac{2\pi d_{0}N\Delta }{\lambda} m }+\jmath {\frac{\pi( N d_{0})^2\Phi}{\lambda} m^2  }}\right)}
				+\sum_{n\in \mathcal{N}}e^{\l(\jmath  {\frac{2\pi M d_{0}}{\lambda} n   \Delta}+\jmath {\frac{\pi ( M d_{0})^2\Phi}{\lambda} n^2  }\r)}
				-\sum_{i\in \mathcal{I}}e^{\l(\jmath  {\frac{2\pi  MN d_{0}}{\lambda} i  \Delta}+\jmath {\frac{\pi ( MN d_{0})^2\Phi}{\lambda} i^2  }\r)} \right|\triangleq \hat{f}_{\rm ECA}.
		\end{aligned}}
	\end{equation}
	\hrulefill
\end{figure*}


First, for an ECA, it can be easily obtained that its beam pattern follows
\vspace{-0.4cm}
\begin{align}\label{Eq:ECAf}
  &f_{\rm ECA}({r_0},{\theta _0};r,\theta ) \overset{(a_3)}{=} \left| \bar{\mathbf{b}}_{\rm ECA}^{H}\left(r_0,\theta_0\right) \times \Pi^{H}\Pi \times  
\bar{\mathbf{b}}_{\rm ECA}\left(r,\theta\right) \right| \nonumber\\
&=\left|\bar{\mathbf{b}}_{\rm ECA}^{H}\left(r_{0},\theta_{0}\right)  
\bar{\mathbf{b}}_{\rm ECA}\left(r,\theta\right)\right| \nn \\
&
= \left| \mathbf{b}_{1}^{H}\left(r_{0},\theta_{0}\right)\mathbf{b}_{1}\left(r,\theta\right)+
\mathbf{b}_{2}^{H}\left(r_{0},\theta_{0}\right)\mathbf{b}_{2}\left(r,\theta\right) \right|
\end{align}
where $(a_3)$ is due to $ \Pi^{H}\Pi =  \mathbf{I}$. Moreover, the beam pattern in \eqref{Eq:ECAf} can be further approximated as in \eqref{ECAsum} by means of Fresnel approximation.
	
Similar to the LSA, we divide the ECA beam pattern analysis into three cases according to the value of $ \Delta $ and $ \Phi $.
\begin{itemize}
\item {\bf Case 1:} $\Phi=0$.
	This case corresponds to the user-ring under the condition of $\frac{\cos^2 \theta}{r} =  \frac{\cos^2 \theta_0}{r_0}$. Its beam pattern is similar to the far-field ECA case, for which the beam-width for both the main-lobe and grating-lobes in the ring can be obtained.
	\item {\bf Case 2:} $\Delta = \frac{2 s}{ MN},  s\in \mathcal{S}$  where  $\mathcal{S} \triangleq \{\pm 1,\dots, \pm (MN-1)\}$.
	It will be shown that this condition is equivalent to specify several angles, for which we can obtain the beam-depth of the ECA grating-lobes. 
	\item {\bf Case 3:} $\Phi\neq 0$ and $\Delta\neq \frac{2 s}{ MN},  s\in \mathcal{S}$. For this case,  the expression of  $f_{\rm ECA}$ is still given in \eqref{ECAsum}. We  show below that in most cases, the ECA beam pattern $f_{\rm ECA}\approx 0$.

\end{itemize}

\begin{lemma}\label{Them4}\emph{
	 If $\Phi=0$  and $ \Delta_{\rm ECA} \neq \frac{2s}{MN}, s \in \mathcal{S}$  then $\hat{f}_{\rm ECA}({r_0},{\theta _0};r,\theta ) $  in (\ref{ECAsum}) can be approximated as
	 	\begin{align}	\label{ECAclosed-form}
	 		& \hat{f}_{\rm ECA}({r_0},{\theta _0};r,\theta ) \approx \left| \frac{LM-1}{Q_{\rm ECA}}G(\gamma_{1},  \gamma_{2})+ \right.\nn\\
	 		&\qquad\quad\quad\left.\frac{LN-1}{Q_{\rm ECA}}G(\gamma_{1}, \rho_{1}\gamma_{2})-\frac{L-1}{Q_{\rm ECA}}G(\gamma_{1}, \rho_{2}\gamma_{2}) \right|,\!\!\!
	 	\end{align}	
	 where $ \gamma_1 $ {\rm and} $ \gamma_2 $  are defined as
	\begin{equation}
		\gamma_{1} \triangleq \frac{\Delta}{{\sqrt{d_0\left| \Phi  \right|}}},
		~~
		\gamma_{2} \triangleq \frac{(LM-1)N}{2}\sqrt{{d_0}\left| \Phi \right| },
	\end{equation}
	 with  $ \rho_{1} \triangleq  \frac{LMN - M}{LMN - N}$ and $ \rho_{2} \triangleq   \frac{LMN - MN}{LMN - N} = \frac{LM - M}{LM - 1} $.}
\end{lemma}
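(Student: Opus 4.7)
The plan is to extend the single-sum Fresnel approximation used in Lemma~2 (proved in Appendix~A) to the three-term sum appearing in \eqref{ECAsum}. Starting from the equivalent expression $f_{\rm ECA} = |\mathbf{b}_1^H \mathbf{b}_1 + \mathbf{b}_2^H \mathbf{b}_2|$ given in \eqref{Eq:ECAf}, the Fresnel-truncated form exposes $\hat{f}_{\rm ECA}$ as $1/Q_{\rm ECA}$ times the modulus of three exponential sums, indexed respectively by $m\in\mathcal{M}$, $n\in\mathcal{N}$, and $i\in\mathcal{I}$; the last is subtracted to avoid double-counting the antennas shared by the two effective subarrays.

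The key observation is that each of these three sums has exactly the structural form $\sum \exp(jB_1 k + jB_2 k^2)$ analyzed for the LSA in Lemma~2, with only the values of $(Q,U)$ changing: $(LM-1, N)$ for subarray~1, $(LN-1, M)$ for subarray~2, and $(L-1, MN)$ for the shared antennas. First, I would apply Appendix~A's argument to each sum in turn, namely, approximate the sum by an integral over its index range, complete the square in the exponent, and change variables so that the quadratic part becomes the standard $e^{j\pi u^2/2}$ integrated against the Fresnel integrals $C(\cdot)$ and $S(\cdot)$. Crucially, the completed-square phase prefactor $e^{-\jmath \pi \Delta^2 /(2 d_0 \Phi)}$ is \emph{identical} for the three sums, since it depends only on $(\Delta,\Phi)$ and not on the subarray-specific $(Q,U)$; this common phase is what lets the three Fresnel contributions add coherently inside the outer absolute value.

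Next, I would identify the subarray-specific Fresnel parameters. For every subarray the first parameter $\beta_1 = \Delta/\sqrt{d_0|\Phi|}$ equals $\gamma_1$ and is invariant, while the second parameter $\beta_2 = QU\sqrt{d_0|\Phi|}/2$ scales linearly with the subarray's effective aperture $QUd_0$. Substituting the three pairs $(Q,U)$ listed above and using the identities $LMN-M = M(LN-1)$ and $LMN-MN = MN(L-1)$, the three resulting values become $\gamma_2$, $\rho_1\gamma_2$, and $\rho_2\gamma_2$, respectively. By Lemma~2, each sum of size $Q$ therefore approximates to $Q\, e^{-\jmath\pi\Delta^2/(2d_0\Phi)} G(\gamma_1,\cdot)$ with the appropriate second argument.

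Finally, inserting the weights $\{LM-1, LN-1, -(L-1)\}$, pulling out the common phase, and normalizing by $Q_{\rm ECA}$ collapses the outer modulus in \eqref{ECAsum} to precisely \eqref{ECAclosed-form}. The main obstacle will be bookkeeping at this last stage: ensuring that the next-order Fresnel remainders are uniform across the three sums (so that the common-phase cancellation is legitimate rather than an artifact of retaining only the leading term), and carefully tracking the antenna-sharing convention so that the $-(L-1)$ coefficient, rather than an adjacent quantity like $-L$ or $-(L-2)$, appears in front of the third $G$-term. Both checks are routine but error-prone, and they are where I expect the bulk of the proof effort to lie.
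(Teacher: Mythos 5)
Your proposal is correct and follows essentially the same route as the paper: the paper likewise invokes Lemma~\ref{LSAgeneral} (the Appendix-A Fresnel-integral approximation) separately for subarray~1 and for subarray~2 (whose inner product already carries the subtracted common-antenna sum), identifies the common phase $e^{-\jmath\pi\Delta^2/(2d_0\Phi)}$ and the scaled second arguments $\gamma_2$, $\rho_1\gamma_2$, $\rho_2\gamma_2$, and adds the weighted terms. Your explicit verification of $\rho_1,\rho_2$ via $LMN-M=M(LN-1)$ and $LMN-MN=MN(L-1)$, and your reading of the hypothesis as the Case-3 condition $\Phi\neq 0$ (the lemma statement's ``$\Phi=0$'' is evidently a typo), match the paper's intent.
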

\begin{proof}
	From Lemma \ref{LSAgeneral}, we have
	{\small
		\begin{equation}\label{EQ 37}
			\begin{aligned}
				{\bf{b}}_{{1}}^H(r,\theta ){{\bf{b}}_{1}}({r_0},{\theta _0})\approx
				\left\{\begin{matrix}
				 \frac{(LM-1)e^{-\jmath\pi\frac{\Delta^{2}}{2d\Phi}}}{2Q_{\rm ECA}} \frac{\widehat{C}(\gamma_{1},\gamma_{2}) +  \jmath(\widehat{S}(\gamma_{1},\gamma_{2})}{\gamma_2}, \Phi > 0,\\
				 \frac{(LM-1)e^{-\jmath\pi\frac{\Delta^{2}}{2d\Phi}}}{2Q_{\rm ECA}} \frac{\widehat{C}(\gamma_{1},\gamma_{2}) -  \jmath(\widehat{S}(\gamma_{1},\gamma_{2})}{\gamma_2}, \Phi < 0.
				\end{matrix}\right.
			\end{aligned}		
	\end{equation}}
Similarly, for subarray 2, $ {{\bf{b}}_{2}}({r_0},{\theta _0}){\bf{b}}_{{2}}^H(r,\theta ) $ is given by (\ref{SumSubarray2}), where $ \rho_{1} = \frac{LMN - M}{LMN - N} $ and $ \rho_{2} = \frac{LMN - MN}{LMN - N} $.  Combining  \eqref{EQ 37} and \eqref{SumSubarray2} leads to the desired result in (\ref{ECAclosed-form}).
\end{proof}
\begin{figure*}[ht] 
		\centering
		\begingroup\makeatletter\def\f@size{10}\check@mathfonts
		\begin{equation}
			{{\bf{b}}_{{2}}^H(r,\theta ){{\bf{b}}_{2}}({r_0},{\theta _0})\approx}
			\begin{aligned}
				\left\{\begin{matrix}
					\frac{(LN-1)e^{-\jmath\pi\frac{\Delta^{2}}{2d\Phi}}}{2Q_{\rm ECA}} \frac{\widehat{C}^{}(\gamma_{1},\rho_{1}\gamma_{2}) + \jmath(\widehat{S}(\gamma_{1},\rho_{1}\gamma_{2})}{\gamma_2} - \frac{(L-1)e^{-\jmath\pi\frac{\Delta^{2}}{2d\Phi}}}{2Q_{\rm ECA}} \frac{\widehat{C}(\gamma_{1},\rho_{2}\gamma_{2}) +  \jmath(\widehat{S}(\gamma_{1},\rho_{2}\gamma_{2})}{\gamma_2}, \Phi> 0
					\\
					\frac{(L-1)e^{-\jmath\pi\frac{\Delta^{2}}{2d\Phi}}}{2Q_{\rm ECA}} \frac{\widehat{C}(\gamma_{1},\rho_{1}\gamma_{2}) -  \jmath(\widehat{S}(\gamma_{1},\rho_{1}\gamma_{2})}{\gamma_2} -  \frac{(L-1)e^{-\jmath\pi\frac{\Delta^{2}}{2d\Phi}}}{2Q_{\rm ECA}} \frac{\widehat{C}(\gamma_{1},\rho_{2}\gamma_{2}) -  \jmath(\widehat{S}(\gamma_{1},\rho_{2}\gamma_{2})}{\gamma_2}, \Phi< 0
				\end{matrix}\right.
			\end{aligned}
			\label{SumSubarray2}
		\end{equation}
		\endgroup\hrulefill
	\end{figure*}

Similarly, by using  the properties of function $ G(\cdot) $, we can easily show that except for the cases of $ \Delta \to 0 $, the ECA beam pattern nearly approaches to zero.
\vspace{-0.4cm}
\subsection{Main-lobe of ECA Beam Pattern}

In this subsection, we  characterize the beam-width and beam-depth of the ECA main-lobe.
\subsubsection{{Beam-width in the Angular Domain}}
To analyze the beam-width of the ECA main-lobe, we consider the user-ring $ \frac{\cos^2 \theta}{r} =  \frac{\cos^2 \theta_0}{r_0} $ and then have the following result.
\begin{lemma}\label{ECAgeneralcase}\emph{
	When $\Phi=0$ (or equivalently $ \frac{\cos^2 \theta}{r} =  \frac{\cos^2 \theta_0}{r_0}$),  $\hat{f}_{\rm ECA}({r_0},{\theta _0};r,\theta ) $ in (\ref{ECAsum}) is given by
	\begin{equation}\label{eq52}
		{\small
			\begin{aligned}
				\hat{f}_{\rm ECA} =
				 \frac{\left| {\Xi_{LM-1}({\Delta}N)} + \right.
				 	\left.	{\Xi_{LN-1}(\Delta M)}-{\Xi_{L-1}(\Delta MN)} \right|}{{Q_{\rm ECA}}}.
			\end{aligned}	}	
	\end{equation}}
\end{lemma}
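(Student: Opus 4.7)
The plan is to substitute $\Phi=0$ directly into the approximate expression \eqref{ECAsum}. All quadratic–phase terms proportional to $m^{2}$, $n^{2}$, and $i^{2}$ then vanish, so each of the three sums collapses to a geometric sum of complex exponentials over a symmetric integer range, and the claim follows from the standard Dirichlet–kernel identity.

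First I would use $d_{0}=\lambda/2$ to simplify the surviving linear–phase factors in \eqref{ECAsum}, yielding exponents $\jmath\pi N\Delta\,m$, $\jmath\pi M\Delta\,n$, and $\jmath\pi MN\Delta\,i$ in the three sums respectively. Because $L$ is even by the ECA array–model assumption, the index sets $\mathcal{M}$, $\mathcal{N}$, and $\mathcal{I}$ are each symmetric about $0$ with odd cardinalities $LM-1$, $LN-1$, and $L-1$. Invoking the identity
\begin{equation*}
\sum_{k=-K}^{K} e^{\jmath k\alpha} \;=\; \frac{\sin\bigl((2K+1)\alpha/2\bigr)}{\sin(\alpha/2)},
\end{equation*}
with $2K+1$ equal to the cardinality of each index set and $\alpha$ read off from the corresponding linear phase, produces Dirichlet kernels of orders $LM-1$, $LN-1$, and $L-1$, whose arguments are proportional to $N\Delta$, $M\Delta$, and $MN\Delta$ respectively.

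Finally, matching each kernel against the paper's definition $\Xi_{\alpha}(x)\triangleq\sin(\alpha x\pi/2)/\sin(x\pi/2)$ identifies the three sums as $\Xi_{LM-1}(N\Delta)$, $\Xi_{LN-1}(M\Delta)$, and $\Xi_{L-1}(MN\Delta)$; dividing by $Q_{\rm ECA}$ and taking the modulus then yields \eqref{eq52}. The negative sign on the third kernel is simply the inclusion--exclusion correction for the $L-1$ antennas that the two effective subarrays share (already reflected by the minus sign of the third sum in \eqref{ECAsum}), so no additional work is needed on that front. Overall the derivation is essentially immediate once $\Phi=0$ eliminates the Fresnel quadratic terms — unlike Case~3, no further approximation beyond what is already baked into \eqref{ECAsum} is required. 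The only place where an error could easily creep in is the bookkeeping of the three cardinalities and the sign of the shared–antenna term, so I would double–check those against the array model before declaring the proof complete.
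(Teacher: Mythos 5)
Your proposal is correct and follows essentially the same route as the paper: the paper decomposes the ECA beam pattern into the two subarray inner products (with the shared-antenna correction) and invokes the LSA result of Lemma 3 for each, which is exactly the Dirichlet-kernel evaluation of the three symmetric geometric sums that you carry out directly from \eqref{ECAsum}. Your cardinality and sign bookkeeping ($LM-1$, $LN-1$, $L-1$, minus sign on the shared-antenna sum) matches the paper's.
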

\begin{proof}
	{\rm If} $ \frac{\cos^2 \theta}{r} =  \frac{\cos^2 \theta_0}{r_0} $, based on Lemma \ref{lemma3}, the beam pattern of subarray $1$ is given by
	\begin{equation}
		\begingroup\makeatletter\def\f@size{9}\check@mathfonts
		\begin{aligned}
			{\bf{b}}_{1}^{\rm{H}}({r_0},{\theta _0}){{\bf{b}}}_{1}(r,\theta )= \frac{{e^{\jmath\frac{{2\pi }}{\lambda }({r_0} - r)}}\Xi_{LM-1}(\Delta N)}{Q_{\rm ECA}},
		\end{aligned}\label{eq42}
		\endgroup
	\end{equation}
	Similarly, for subarray 2, we have
	{
		\begin{equation}
			\begin{array}{l}
				{\bf{b}}_{2}^{H}({{{r}}_0},{{\bf{\theta }}_0}){{\bf{b}}_{2}}({r},{\bf{\theta }}) = \frac{{e^{\jmath\frac{{2\pi }}{\lambda }({r_0} - r)}}\left( \Xi_{LN-1}(\Delta M)-\Xi_{L-1}(\Delta MN) \right)}{Q_{\rm ECA}}.
			\end{array}\nn
	\end{equation}}
\vspace{-0.2cm}
Combining the above leads to the desired result.
\end{proof}

\begin{proposition}[Beam-width of ECA main-lobe]
	\emph{For an ECA parameterized by $\{M, N, L\}$, the main-lobe beam-width of the ECA beamforming vector $ {{\bf{b}}}_{\rm ECA}(r_{0},\theta_{0} ) $ can be expressed as}
	\begin{equation}\label{ECA beam-width}
		\mathrm{BW}_{{\rm ECA}}^{\rm (M)} \approx \frac{4}{(LM-1)N} \approx \frac{4d_0}{A_{\rm ECA}}.
	\end{equation}
\end{proposition}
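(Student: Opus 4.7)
The plan is to use Lemma 8, which gives the ECA beam pattern on the user-ring $\Phi=0$ as
\[
\hat{f}_{\rm ECA} = \frac{|\Xi_{LM-1}(\Delta N) + \Xi_{LN-1}(\Delta M) - \Xi_{L-1}(\Delta MN)|}{Q_{\rm ECA}},
\]
so the null-to-null main-lobe beam-width equals twice the smallest positive $\Delta$ at which the bracketed sum vanishes. I would first locate the first zero of each of the three Dirichlet sinc terms individually. Since $\Xi_{\alpha}(x)$ has its first zero at $x=\pm 2/\alpha$, the three terms vanish at $\Delta = 2/((LM-1)N)$, $\Delta = 2/((LN-1)M)$, and $\Delta = 2/((L-1)MN)$, respectively. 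Because $M\ge N$, the first of these is the smallest, so the subarray-1 term has the narrowest main-lobe among the three. Setting $\Delta^{*} = 2/((LM-1)N)$, the proposition follows once I show that the first null of the full sum is well approximated by $\Delta^{*}$.

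Next, I would justify the dominance step by comparing peak values. Near $\Delta=0$ the three terms take peak values $LM-1$, $LN-1$, and $L-1$, so subarray 1 contributes by far the largest swing. As $\Delta$ grows from $0$ to $\Delta^{*}$, the first term decreases monotonically from $LM-1$ down to $0$ over an interval narrower than the main-lobes of the other two terms, hence the combination $\Xi_{LN-1}(\Delta M)-\Xi_{L-1}(\Delta MN)$ remains a small residual relative to this drop. Consequently the full sum acquires its first null at $\Delta \approx \Delta^{*}$, yielding $\mathrm{BW}_{\rm ECA}^{\rm (M)} \approx 2\Delta^{*} = 4/((LM-1)N)$. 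Invoking $A_{\rm ECA}=(LM-2)Nd_0$ and $LM-1\approx LM-2$ for large $LM$, this matches $4d_0/A_{\rm ECA}$, consistent with applying Proposition 1 to subarray 1 viewed as an effective LSA.

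The main obstacle will be making the dominant-term step precise, since the correction $\Xi_{LN-1}(\Delta M)-\Xi_{L-1}(\Delta MN)$ can in principle shift the exact null slightly away from $\Delta^{*}$. I would handle this via a leading-order perturbation argument: uniformly bound the correction on $(0,\Delta^{*})$, then verify that the induced shift in the null is of lower order than $\Delta^{*}$ in the large-aperture regime, so that the claimed approximate equality in \eqref{ECA beam-width} is valid to leading order.
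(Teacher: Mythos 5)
Your route coincides with the paper's at the start and the finish: both arguments restrict to the user-ring via Lemma~\ref{ECAgeneralcase}, read the three Dirichlet sinc terms in \eqref{eq52}, and identify the half beam-width with the first null at $\Delta^{*}=2/((LM-1)N)$. The difference is in the middle step. The paper's (admittedly terse) justification is that the first nulls of the subarray patterns \emph{nearly coincide}: the three kernels vanish at $\Delta=2/(LMN-N)$, $2/(LMN-M)$ and $2/(LMN-MN)$, which agree to relative order $O(1/L)$, so every term --- and hence the sum --- is close to zero in a neighborhood of $\Delta^{*}$. Your justification is instead a dominance-plus-small-residual argument, and that step, as stated, does not hold.

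Concretely, the correction $\Xi_{LN-1}(\Delta M)-\Xi_{L-1}(\Delta MN)$ is \emph{not} uniformly small on $(0,\Delta^{*})$ relative to the drop of the first term: at $\Delta=0$ it equals $(LN-1)-(L-1)=L(N-1)$, which in the intended regime $M\approx N$ is a constant fraction (roughly $(N-1)/M$, e.g.\ over half for $M=7,N=5$) of the first term's peak $LM-1$; for the same reason subarray $1$ does not dominate ``by far.'' So the repair you propose --- uniformly bounding the correction on $(0,\Delta^{*})$ and treating the null shift as a perturbation --- would fail: the uniform bound is $O(L(N-1))$, far too large to localize the null. What actually pins the first null of the sum at $\Delta^{*}$ is that the correction happens to be near zero \emph{at} $\Delta^{*}$, precisely because the other two kernels are then within $O(1/L)$ of their own first nulls; had the second term still been near its peak at $\Delta^{*}$, the sum would be far from zero there and the null would sit elsewhere. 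Replace your uniform-bound step with this coincidence-of-nulls observation (the paper's remark $LMN-M\approx LMN-N$, extended to the third term) and your proof goes through; the zero locations you computed and the final identification with $4d_0/A_{\rm ECA}$ via $A_{\rm ECA}=(LM-2)Nd_0$ are correct.
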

\begin{proof}
	From (\ref{eq52}), we can obtain that the main-lobe beam-widths of subarray 1 and subarray 2 are  $ \mathrm{BW}_{{\rm ECA}}^{\rm (M,1)} = \frac{4}{(LM-1)N} $ and 
	$ \mathrm{BW}_{{\rm ECA}}^{\rm (M,2)} = \frac{4}{(LN-1)M} $ at $ \theta =  \theta_0 $, respectively. Then, by using $ LMN-M \approx LMN-N $,  the ECA beam-width is obtained as
	$ \mathrm{BW}_{{\rm ECA}}^{\rm (M)} \approx \frac{4}{(LM-1)N} \approx \frac{4d_0}{A_{\rm ECA}}. $
\end{proof}

Similar to the LSA, (\ref{ECA beam-width}) indicates that the beam-width of the ECA  main-lobe  is {\it inversely} proportional to the array aperture. Moreover, comparing \eqref{ECA beam-width} and \eqref{LSA beam-width}, it shows that when the LSA and ECA have the  (approximately) same sparsity and number of antennas, they have the same main-lobe beam-width, as illustrated in Figs. \ref{fig_4}(b) and \ref{fig_4}(c).
\subsubsection{Beam-depth in the Range Domain}
To characterize the beam-depth of the LSA main-lobe, we set $ \theta = \theta_{0} $.
Based on Lemma \ref{ECAgeneralcase} with $ \theta = \theta_{0} $ (equivalently $ \gamma_{1} = 0$),  the ECA beam pattern can be approximated as follows.
\begin{lemma}\label{ECA range domain}
	\emph{When $ \theta = \theta_{0} $, the ECA beam pattern in (\ref{ECAclosed-form}) reduces to
		\begin{equation}{\small
				\begin{aligned}\label{eQ48}
					\!\!\!\!\!\!\hat{f}_{\rm ECA} \!=\!\! 
					\left| \frac{LM-1}{Q_{\rm ECA}}F(\gamma_{2}) \!+\! \frac{LN-1}{Q_{\rm ECA}}F(\rho_{1}\gamma_{2})\!\!-\!\frac{L-1}{Q_{\rm ECA}}F( \rho_{2}\gamma_{2}) \right|.\!\!\\
			\end{aligned}}		
		\end{equation}
	}
\end{lemma}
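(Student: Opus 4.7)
The plan is to specialize the general ECA closed-form beam pattern \eqref{ECAclosed-form} to the on-axis case $\theta = \theta_0$, and then exploit the parity of the Fresnel integrals to collapse the two-argument function $G(\cdot,\cdot)$ into the one-argument function $F(\cdot)$. So the proof is essentially a direct substitution supplemented by one elementary identity.

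First, I would note that $\theta = \theta_0$ immediately yields $\Delta = \sin\theta - \sin\theta_0 = 0$, so by the definitions of $\gamma_1$ and $\gamma_2$ we have $\gamma_1 = 0$, while $\gamma_2$, $\rho_1\gamma_2$, and $\rho_2\gamma_2$ remain generically nonzero functions of the range variable through $\Phi = \cos^2\theta_0(1/r_0 - 1/r)$.

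Second, I would evaluate $G(0,\beta_2)$ explicitly. Recalling from Lemma~\ref{LSAgeneral} that
\begin{equation}
G(\beta_1,\beta_2) = \frac{\widehat{C}(\beta_1,\beta_2) + \jmath \widehat{S}(\beta_1,\beta_2)}{2\beta_2},
\end{equation}
with $\widehat{C}(\beta_1,\beta_2) = C(\beta_1+\beta_2) - C(\beta_1-\beta_2)$ and likewise for $\widehat{S}$, and using the fact that the Fresnel integrals $C(\cdot)$ and $S(\cdot)$ are odd functions, setting $\beta_1 = 0$ gives $\widehat{C}(0,\beta_2) = 2C(\beta_2)$ and $\widehat{S}(0,\beta_2) = 2S(\beta_2)$. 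Hence $G(0,\beta_2) = [C(\beta_2) + \jmath S(\beta_2)]/\beta_2 = F(\beta_2)$.

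Third, I would substitute this identity with $\beta_2 \in \{\gamma_2,\rho_1\gamma_2,\rho_2\gamma_2\}$ directly into \eqref{ECAclosed-form}, which produces the three terms $F(\gamma_2)$, $F(\rho_1\gamma_2)$, and $F(\rho_2\gamma_2)$ with the stated coefficients, yielding \eqref{eQ48}. I do not anticipate any serious obstacle; the only subtle point worth checking is that the Fresnel-approximation regime $\Phi \ne 0$ underlying Lemma~\ref{Them4} is preserved here. This is automatic because $\theta = \theta_0$ only forces $\gamma_1 = 0$ (an angular condition) and leaves $\Phi$ generically nonzero for $r \ne r_0$, so the substitution is legitimate without any further regularity argument.
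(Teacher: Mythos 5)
Your proposal is correct and follows essentially the same route as the paper, which treats the lemma as an immediate specialization of the closed form \eqref{ECAclosed-form} at $\gamma_1=0$; your explicit verification that $G(0,\beta_2)=F(\beta_2)$ via the oddness of the Fresnel integrals $C(\cdot)$ and $S(\cdot)$ supplies the one identity the paper leaves implicit. Your closing remark on the validity of the regime is also right (note the paper's hypothesis in Lemma~\ref{Them4} should read $\Phi\neq 0$, as you implicitly assume, and $\Delta=0\neq \tfrac{2s}{MN}$ for all $s\in\mathcal{S}$ is automatic).
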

Although in closed-form, it is still difficulty to obtain useful insights from \eqref{eQ48}. As such, we consider a typical scenario with $ M \approx N $ and $ L \ge 4 $ (which will be shown in Section \ref{ECAGratingLobe} to effectively mitigate the ECA grating-lobes), for which we have $ \rho_1 =  \frac{LMN - M}{LMN - N} \approx 1$ and  $ \rho_2 = \frac{LM - M}{LM - 1}  \ge \frac{L-1}{L} \ge 0.75$. As such,  $ f_{\rm ECA}  $ in \eqref{eQ48} can be approximated as 
\begin{equation}\label{simplified 37}
	\hat{f}_{\rm ECA}({r_0},{\theta _0};r,\theta ) \approx \left| F(\gamma_{2}) \right|.
\end{equation}
In Fig.~\ref{fig_7}, we plot the curves of \eqref{eQ48} and \eqref{simplified 37} given $ \rho_2 = 0.75 $, which verifies the accuracy of  made approximation in \eqref{simplified 37}.

\begin{figure}[t]
	\begin{center}
		\includegraphics[width=2.7in]{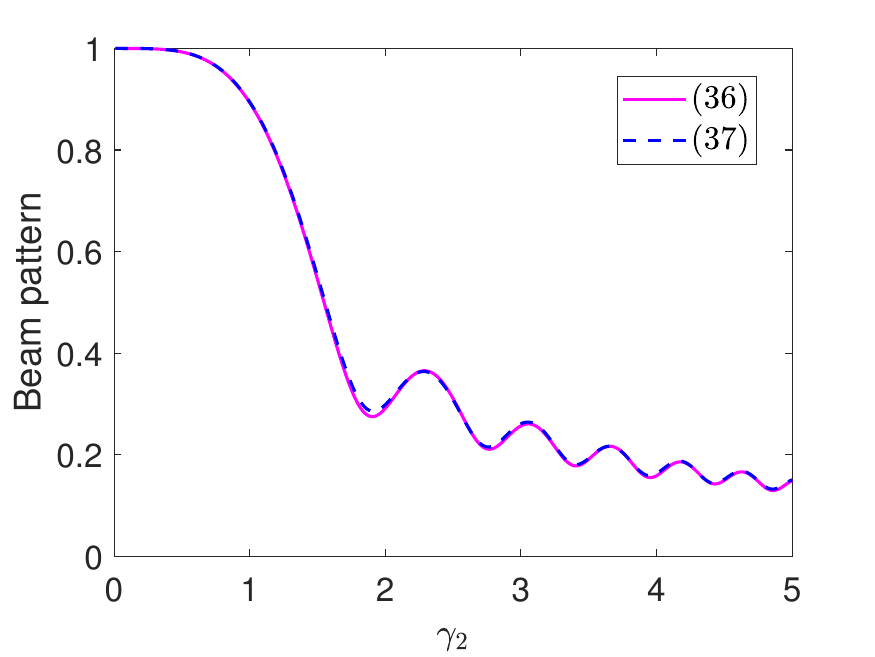}
		\caption{{Accuracy of the made approximation in \eqref{simplified 37}.}} 
		\label{fig_7}
		\vspace{-0.5cm} 
	\end{center}	
\end{figure}

Next, we obtain the beam-depth of ECA main-lobe as below.
\begin{proposition}[Beam-depth of ECA main-lobe]\label{main-lobe range ECA}
	\rm For an ECA parameterized by $\{M, N, L\}$, the 3-dB  beam-depth for the main-lobe of ECA beamformer ${\bf{b}}_{\rm ECA}({r_0},{\theta_0 })$ is given by
\vspace{-3pt}
	\begin{equation}
		\mathrm{BD}_{{\rm ECA}}^{\rm (M)}= \begin{cases}\frac{2 r_{0}^2 r_{\mathrm{ECA}}}{r_{\mathrm{ECA}}^2-r_{0}^2}, & r_{0}<r_{\mathrm{ECA}} \\ \infty, & r_{0} \geq r_{\mathrm{ECA}}\end{cases},
	\end{equation}
	\rm where $ r_{\mathrm{ECA}} \approx \frac{(LM-1)^{2}N^{2}d_0 \cos ^2 \theta_{0}}{4  \varphi _{3 \mathrm{dB}}^2}$. 
\end{proposition}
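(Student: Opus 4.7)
The plan is to follow the template of Proposition~\ref{main-lobe rang} for the LSA, but with the ECA-specific parameter $\gamma_2$ playing the role of $\beta_2$. The starting point is Lemma~\ref{ECA range domain} together with the simplified single-Fresnel form \eqref{simplified 37}, which for $\theta=\theta_0$ gives $\hat f_{\rm ECA}(r_0,\theta_0;r,\theta_0)\approx |F(\gamma_2)|$ with $\gamma_2=\tfrac{(LM-1)N}{2}\sqrt{d_0|\Phi|}$ and $\Phi=\cos^2\theta_0\bigl(1/r_0-1/r\bigr)$. Since $|F(\gamma_2)|$ attains its peak value of $1$ at $r=r_0$ (where $\gamma_2=0$), the 3-dB level corresponds to a fixed threshold $\varphi_{3{\rm dB}}\approx 1.6$ on the argument of $|F(\cdot)|$, as already established in \cite{cui2022channel}.

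Next I would translate the condition $\gamma_2\le \varphi_{3\rm dB}$ into a range condition on $r$. Squaring and rearranging yields
\[
\left|\frac{1}{r_0}-\frac{1}{r}\right| \le \frac{4\varphi_{3\rm dB}^2}{(LM-1)^2 N^2 d_0\cos^2\theta_0} \triangleq \frac{1}{r_{\rm ECA}},
\]
which recovers exactly the definition $r_{\rm ECA}\approx \tfrac{(LM-1)^2 N^2 d_0\cos^2\theta_0}{4\varphi_{3\rm dB}^2}$ stated in the proposition.

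The third step is to solve this two-sided inequality for $r$. If $r_0<r_{\rm ECA}$, then $1/r_0-1/r_{\rm ECA}>0$ and both sides of the bound are active, giving $r\in[r_{\rm small},r_{\rm large}]$ with $r_{\rm small}=\tfrac{r_0 r_{\rm ECA}}{r_0+r_{\rm ECA}}$ and $r_{\rm large}=\tfrac{r_0 r_{\rm ECA}}{r_{\rm ECA}-r_0}$; subtracting yields ${\rm BD}_{\rm ECA}^{(\rm M)}=r_{\rm large}-r_{\rm small}=\tfrac{2r_0^2 r_{\rm ECA}}{r_{\rm ECA}^2-r_0^2}$. If $r_0\ge r_{\rm ECA}$, the lower constraint $1/r\le 1/r_0+1/r_{\rm ECA}$ is automatically satisfied for every $r>0$ while the upper constraint $1/r\ge 1/r_0-1/r_{\rm ECA}\le 0$ is vacuous, so $r$ is unbounded above and ${\rm BD}_{\rm ECA}^{(\rm M)}=\infty$.

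The Fresnel-integral calculus itself is routine and mirrors the LSA argument line by line; the only genuinely nontrivial ingredient is the reduction from the three-term expression \eqref{eQ48} to the single-term form \eqref{simplified 37}. This reduction relies on the assumptions $M\approx N$ and $L\ge 4$ invoked just before \eqref{simplified 37}, under which $\rho_1\approx 1$ and $\rho_2\gtrsim 0.75$, and its numerical accuracy has already been verified in Fig.~\ref{fig_7}. Hence the main obstacle is not the derivation itself but ensuring that this approximation is acknowledged as the operative assumption throughout the proof.
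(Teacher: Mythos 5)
Your proof follows essentially the same route as the paper: the paper itself reduces this proposition to the LSA argument (Proposition~\ref{main-lobe rang} and Appendix~B), i.e., use the single-Fresnel approximation $\hat f_{\rm ECA}\approx|F(\gamma_2)|$ from \eqref{simplified 37}, impose $\gamma_2\le\varphi_{3\rm dB}$, and solve the resulting two-sided inequality in $1/r$, which is exactly what you do, including the correct identification of the reduction from \eqref{eQ48} to \eqref{simplified 37} (via $M\approx N$, $L\ge4$) as the only ECA-specific ingredient. One small slip in the $r_0\ge r_{\rm ECA}$ case: the constraint $1/r\le 1/r_0+1/r_{\rm ECA}$ is not satisfied for \emph{every} $r>0$ (it still imposes the lower endpoint $r\ge r_0r_{\rm ECA}/(r_0+r_{\rm ECA})$); it is only the upper-bound constraint that becomes vacuous, but the conclusion ${\rm BD}_{\rm ECA}^{(\rm M)}=\infty$ is unaffected.
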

\begin{proof}
	The proof is similar to that of Proposition \ref{main-lobe rang} and  thus we omit it for brevity.
\end{proof}

 By comparing Propositions \ref{main-lobe rang} and \ref{main-lobe range ECA}, it indicates that the two SA configurations have the same beam-depth when they have the same aperture, which are shown in Figs. \ref{fig_4}(b) and \ref{fig_4}(c).
 
 \vspace{-0.5cm}
\subsection{Grating-lobes of ECA Beam Pattern}\label{ECAGratingLobe}
In this subsection, we characterize the beam-width, beam-depth, and beam-height of the ECA grating-lobes. More importantly, we reveal that the ECA can not only strengthen the beam-focusing effect, but also significantly suppress the magnitude of grating lobes.
\subsubsection{Beam-width in the Angular Domain}
Based on Lemma \ref{ECAgeneralcase}, the locations and heights of ECA grating-lobes are obtained.

\begin{proposition}\label{GrateLobeDirection}
	\rm For an ECA parameterized by $\{M, N, L\}$, it has three types of grating-lobes. Specifically, the {Type-I and Type-II  grating-lobes} occur at the angles of  
	$$ \theta^{(\rm G,1)}_{{\rm ECA}, n} \!=\! \arcsin\l(\sin\theta_0\!+\!\frac{2n}{ N}\r), \forall n\in \mathcal{N},$$  $$ \theta^{(\rm G,2)}_{{\rm ECA}, m} \!=\! \arcsin\l(\sin\theta_0\!+\!\frac{2m}{ M}\r), \forall m\in \mathcal{M},$$ 
	respectively. 
	Moreover, the Type-III grating-lobes occur at the angles of 
	$ \theta^{(\rm G,3)}_{{\rm ECA}, \ell} \!=\! \arcsin\l(\sin\theta_0\!+\!\frac{2\ell}{ MN}\r), \forall \ell\in \mathcal{L}$.
\end{proposition}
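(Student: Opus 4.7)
The plan is to derive the grating-lobe locations by inspecting equation~(\ref{eq52}) in Lemma~\ref{ECAgeneralcase}, which gives the on-ring ($\Phi=0$) ECA beam pattern as a weighted combination of three Dirichlet-sinc functions:
\begin{equation*}
\hat f_{\rm ECA}\;\propto\;\bigl|\Xi_{LM-1}(\Delta N)+\Xi_{LN-1}(\Delta M)-\Xi_{L-1}(\Delta MN)\bigr|.
\end{equation*}
Since each Dirichlet sinc $\Xi_{\alpha}(x)=\sin(\alpha x\pi/2)/\sin(x\pi/2)$ attains its peak magnitude whenever its denominator vanishes (i.e.\ $x$ is an even integer), the candidate grating-lobe positions of the ECA are exactly the $\Delta$ values that drive \emph{any} of the three terms to a peak. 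This is precisely the ECA analogue of Lemma~\ref{LemmaLSAGratingn-lobe}, with the role of the LSA separation parameter $U$ being played separately by the three "effective separations" $N$, $M$, and $MN$ present in the ECA geometry.

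I would then carry out the three substitutions in turn. Setting the argument of the first sinc to an even integer gives $\Delta N=2n$, i.e.\ $\Delta=2n/N$; using $\Delta=\sin\theta-\sin\theta_0$ and inverting yields the Type-I angles $\theta^{(\rm G,1)}_{{\rm ECA},n}=\arcsin(\sin\theta_0+2n/N)$, with $n$ ranging over nonzero integers for which $|\sin\theta_0+2n/N|\le 1$. Repeating the same step for the second term ($\Delta M=2m$) delivers the Type-II angles at $2m/M$-shifts, and the third term ($\Delta MN=2\ell$) delivers the Type-III angles at $2\ell/(MN)$-shifts. The admissible index sets then follow from the aperture-based index ranges $\mathcal{M},\mathcal{N},\mathcal{L}$ already introduced for the array model, intersected with the physical-angle constraint $|\sin\theta|\le 1$.

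The one subtlety to address is that the three Dirichlet sincs live on the common grid $\Delta=2\ell/(MN)$: the $2n/N$-positions are the subset $\ell\in MN\mathbb{Z}/N = M\mathbb{Z}$ and the $2m/M$-positions correspond to $\ell\in N\mathbb{Z}$. Because $\gcd(M,N)=1$, the only $\Delta\neq 0$ where all three Dirichlet kernels peak simultaneously would require $MN\mid \ell$, and these coincide with the main-lobe (mod array aliasing); otherwise at a generic Type-III position only the third term in (\ref{eq52}) peaks while the first two are small, and similarly for Type-I/II. This separation ensures that Type-I, Type-II, and Type-III grating-lobes are genuinely distinct as listed. I expect this bookkeeping — distinguishing which of the three peaks actually survives at each $\Delta=2\ell/(MN)$, and confirming via coprimality that the three families do not collapse into one another — to be the only nontrivial step; the rest is essentially reading off the zeros of $\sin(x\pi/2)$ and applying $\arcsin$ to the resulting $\Delta$ values.
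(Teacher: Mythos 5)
Your proposal is correct and follows essentially the same route as the paper: the paper's proof likewise reads the grating-lobe positions off the three Dirichlet-sinc terms in \eqref{eq52}, using $\lim_{x\to k\pi}\sin(Qx)/\sin(x)=Q$ to conclude that they peak at $\Delta=2n/N$, $\Delta=2m/M$, and $\Delta=2s/(MN)$ respectively. The coprimality bookkeeping you flag as the "one subtlety" is handled in the paper not inside this proof but as the separate Proposition~\ref{NotOverlap}, so including it does no harm but is not needed for the statement itself.
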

\begin{proof}
	Considering $\lim\limits_{x \to k\pi} \frac{\sin Qx}{\sin x} = Q$, the three terms of (\ref{eq52}) respectively generate grating-lobes at {$ \Delta  = \frac{2n}{N}, n = \in \mathcal{N}$}, {$ \Delta = \frac{2m}{M}, m \in  \mathcal{M}$} and { $ \Delta  = \frac{2s}{MN}, s \in \mathcal{S} $}.
\end{proof}

\begin{figure}[t]
	\begin{center}
		\includegraphics[width=2.7in]{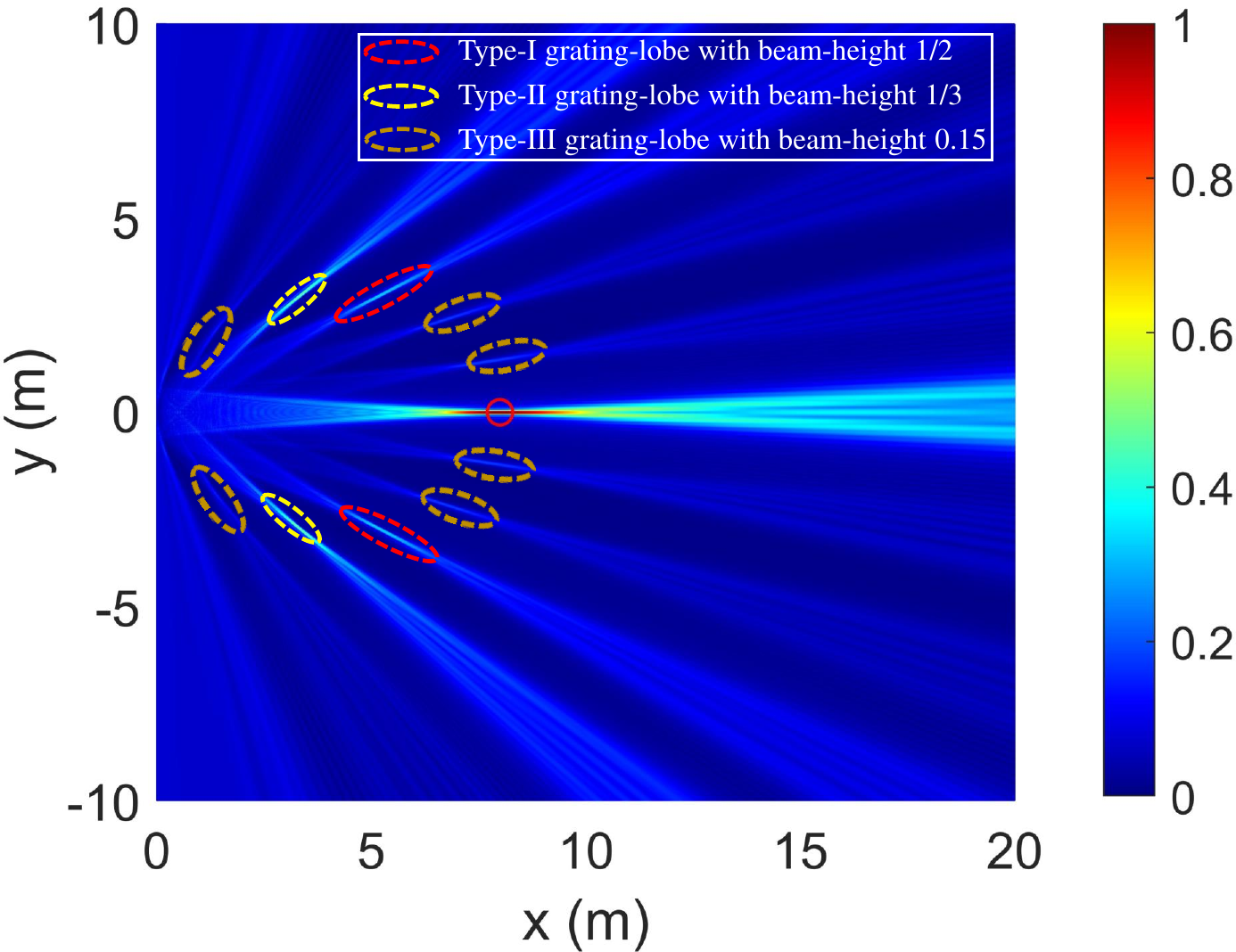}
		\caption{{Different types of grating-lobes with $ M = 4 $, $ N = 3 $ and $ Q_{\rm ECA} = 131 $} under $ \lambda = 0.01$ m at ($ \theta = 0^{\circ} $, $ r = 8 $ m)} 
		\label{fig_8}
	\end{center}
\end{figure}

For an ECA, we call its grating-lobes as the {\it {Type-I}} and {\it Type-II}, when it satisfies $\l\{\Delta = \frac{2 n}{ N},  n\in \mathcal{N}\r\}$ and $\l\{\Delta = \frac{2 m}{ M},  m\in \mathcal{M}\r\}$, respectively. Besides, when $\l\{\Delta = \frac{2 \ell}{ MN},  \ell\in \mathcal{L}\r\}$ with $ \mathcal{L} =  \mathcal{S}\setminus\mathcal{M}\setminus\mathcal{N}$, we term the grating-lobes generated by the common antennas shared by the two LSAs as the {\it Type-III} grating-lobes, which will be shown to have much smaller beam-heights than  Type-I and Type-II grating-lobes. The three types of grating-lobes are illustrated in Fig. \ref{fig_8}. 
\begin{proposition}\label{NotOverlap}
	\rm The positions of Type-I and Type-II grating-lobes generated by the two subarrays do not overlap.
\end{proposition}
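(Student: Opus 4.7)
The plan is to reduce the proposition to an elementary Diophantine observation that is ruled out by $\gcd(M,N)=1$, and is therefore essentially a one-line consequence of coprimality.

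First, I would take the grating-lobe directions from Proposition~\ref{GrateLobeDirection} and equate a Type-I position $\theta^{(\rm G,1)}_{{\rm ECA},n}=\arcsin(\sin\theta_0+\tfrac{2n}{N})$ with a Type-II position $\theta^{(\rm G,2)}_{{\rm ECA},m}=\arcsin(\sin\theta_0+\tfrac{2m}{M})$. Since $\arcsin$ is injective on $[-1,1]$, this coincidence is independent of $\sin\theta_0$ and, after clearing denominators, reduces to the condition $Mn=Nm$ with $(n,m)\ne(0,0)$.

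Next, I would invoke $\gcd(M,N)=1$ together with Euclid's lemma to conclude $N\mid n$ and $M\mid m$, so that $n=N\kappa$ and $m=M\kappa$ for some $\kappa\in\mZ$. Substituting back, the shared spatial-angle difference is $\Delta=2\kappa$. The physical admissibility constraint $|\sin\theta_0+\Delta|\le 1$, combined with $|\sin\theta_0|\le 1$, then forces $|\kappa|\le 1$. The case $\kappa=0$ corresponds to $n=m=0$ and is the main-lobe rather than a grating-lobe; the cases $\kappa=\pm 1$ would require $\sin\theta_0=\mp 1$, a degenerate boresight configuration excluded from the regime of interest. Hence no Type-I grating-lobe position coincides with any Type-II grating-lobe position.

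No real obstacle stands in the way: the argument is entirely driven by the coprimality of $M$ and $N$, with the only subtle step being the exclusion of the boundary value $\Delta=\pm 2$ via the range restriction on $\sin\theta$. It is worth emphasizing, as the structural takeaway, that dropping $\gcd(M,N)=1$ would make the two subarrays' grating-lobes share a nontrivial sublattice in $\Delta$, which is precisely why the ECA is designed with coprime subarray spacings in the first place; the proposition is thus a verification that the explicit grating-lobe formulas realize this expected consequence.
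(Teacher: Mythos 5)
Your proposal is correct and follows essentially the same route as the paper: equate the two spatial-angle offsets, reduce to $mN=nM$, and invoke the coprimality of $M$ and $N$. You are in fact slightly more careful than the paper's own one-line argument, since you explicitly dispose of the residual solutions $n=N\kappa$, $m=M\kappa$ via the admissibility constraint $|\sin\theta_0+2\kappa|\le 1$, whereas the paper implicitly relies on the grating-lobe indices being restricted to $1\le|n|\le N-1$ and $1\le|m|\le M-1$.
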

\begin{proof}
	If the grating-lobe positions generated by the two subarrays are the same, the following condition needs to be met: $\frac{2n}{N} = \frac{2m}{M}, m\in \mathcal{M}, n\in\mathcal{N}$, i.e., $ mN = nM $. 
%
	Due to the coprimality of $M$ and $N$ with their smallest common multiple being $ MN $, the above condition cannot be satisfied, thus leading to the desired result.
\end{proof}

{Comparing Propositions \ref{GrateLobeDirection} and \ref{NotOverlap},  {there are $2(N-1)$ Type-I grating-lobes, $ 2(M-1) $ Type-II grating-lobes and $ 2(MN-(M+N-1)) $ Type-III grating-lobes in the whole space, respectively.}}
Additionally, similar to ECA main-lobe, the beam-width of Type-I and Type-II grating-lobes can be easily obtained as {\small $ \mathrm{BW}_{{\rm ECA}}^{\rm (G,1)} {\approx} \mathrm{BW}_{{\rm ECA}}^{\rm (G,2)} {\approx} \frac{4d_{0}}{A_{\rm ECA}} $}. Moreover, the beam-width of Type-III grating-lobes is given by {\small $ \mathrm{BW}_{{\rm ECA}}^{\rm (G,3)} = \frac{4}{(L-1)MN} $}.

Furthermore, to avoid the beam-width overlapping in the Type-I and Type-II grating-lobes that may potential incur high IUI otherwise, we impose the following constraint on the number of ECA antennas.
Note that the minimum spatial angular separation $ \Delta_{\rm min} $ between Type-I and Type-II grating-lobes is given by 
\begin{equation}
	\begin{aligned}
		\Delta_{\rm min} = \min\limits_{m\in \mathcal{M}, n\in\mathcal{N}}& \l\{ \left|\frac{2n}{N} - \frac{2m}{M}\right|\r\} \ge \frac{2}{MN}.
	\end{aligned}	
\end{equation}
To avoid the beam pattern overlapping,  we need the condition  $ \frac{2}{MN} > \mathrm{BW}_{{\rm ECA}}^{\rm (G,1)} $, which is equivalent to $L >  \frac{2MN+ N}{MN}=2+ \frac{1}{M}$. For the even integer $L$, we thus have $L >4$. 
\begin{proposition}\label{ProBeam-height-ECA}
	\rm For an ECA parameterized by $\{M, N, L\}$, the beam-heights of its Type-I and Type-II grating-lobes are
	{\small
	$${\rm BH}^{(1)}_{n} = \frac{M-1}{M+N-1},\forall n\in \mathcal{N},~~ {\rm BH}^{(2)}_{m} = \frac{N-1}{M+N-1},\forall m\in \mathcal{M},$$}respectively. 
	Moreover, the beam-heights of Type-III grating-lobes are given by $ {\rm BH}^{\rm (3)}_{\ell} = \frac{L-1}{L(M+N-1)-1}, \forall \ell \in \mathcal{L} $.
\end{proposition}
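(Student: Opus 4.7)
The plan is to apply Lemma~\ref{ECAgeneralcase}, which on the user-ring ($\Phi = 0$) reduces $\hat{f}_{\rm ECA}$ to the signed Dirichlet sum $[\Xi_{LM-1}(\Delta N) + \Xi_{LN-1}(\Delta M) - \Xi_{L-1}(\Delta MN)]/Q_{\rm ECA}$, whose three kernels correspond respectively to subarray~$1$, subarray~$2$, and the overlap of antennas shared by the two subarrays. By Definition~\ref{Def:height}, the beam-height at a fixed observation angle $\theta$ is the maximum of the beam pattern over $r$; since the quadratic range phase $\Phi q^2$ in the underlying antenna sum is maximally coherent precisely when $\Phi = 0$, this maximum is attained on the user-ring, so it suffices to substitute $\Phi = 0$ into the Dirichlet sum and plug in the grating-lobe value of $\Delta$ for each type.

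The central fact is that $\Xi_\alpha(x)$ attains its peak magnitude $\alpha$ exactly at even-integer arguments and is only $O(1)$ elsewhere; the latter follows from the identity $\sin((\alpha-1)y) = \sin(\alpha y)\cos y - \cos(\alpha y)\sin y$, which at $y$ a non-integer multiple of $\pi$ but with $\alpha y$ an integer multiple of $\pi$ collapses the ratio to $\pm 1$. I would then classify each grating-lobe type by which kernels sit at a peak. For Type-III ($\Delta = 2\ell/(MN)$ with $\ell$ a multiple of neither $M$ nor $N$), only the third kernel peaks with value $\pm(L-1)$, yielding beam-height $(L-1)/Q_{\rm ECA}$ to leading order. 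For Type-I ($\Delta = 2n/N$, $N\nmid n$), both the first kernel ($\pm(LM-1)$) and the third ($\pm(L-1)$) are at peaks while the middle kernel $\Xi_{LN-1}(2nM/N)$ is $O(1)$; the minus sign in Lemma~\ref{ECAgeneralcase} then produces the cancellation $(LM-1) - (L-1) = L(M-1)$, which divided by $Q_{\rm ECA} = L(M+N-1)-1$ approximates $(M-1)/(M+N-1)$. Type-II is entirely symmetric under $M \leftrightarrow N$ and gives $(N-1)/(M+N-1)$.

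The main obstacle is the sign bookkeeping that drives the cancellation for Type-I and Type-II: one must verify that the minus sign on $\Xi_{L-1}$ in Lemma~\ref{ECAgeneralcase} genuinely produces destructive interference at the grating-lobe angle rather than reinforcement, since this is exactly the mechanism responsible for the grating-lobe smoothing emphasized in the paper. With $L$ assumed even, the L'H\^opital evaluations give $\Xi_{LM-1}(2n) = (LM-1)(-1)^{(LM-2)n} = LM-1$ and $\Xi_{L-1}(2nM) = (L-1)(-1)^{(L-2)nM} = L-1$, both positive, so their combination under the minus sign is indeed $L(M-1)$ rather than $LM+L-2$. A secondary issue is that the $O(1)$ non-peak kernel contributes a small correction: an exact calculation yields $(L(M-1)-1)/Q_{\rm ECA}$ for Type-I (and analogously for the other types), so the proposition's formulas are best understood as the leading-order-in-$L$ approximations, which converge to the stated limits as the number of basic CAs grows.
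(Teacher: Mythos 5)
Your proof is correct and follows essentially the same route as the paper: evaluate the user-ring Dirichlet-sum form of $\hat{f}_{\rm ECA}$ from Lemma~\ref{ECAgeneralcase} at each grating-lobe angle, identify which kernels peak, and exploit the minus sign on $\Xi_{L-1}$ to get the cancellation $(LM-1)-(L-1)=L(M-1)$. Your explicit sign verification via L'H\^opital and your observation that the non-peaked kernel contributes an exact $O(1)$ correction (giving $(L(M-1)-1)/Q_{\rm ECA}$ rather than $L(M-1)/Q_{\rm ECA}$) are more careful than the paper's ``it is easy to show,'' but they do not change the approach.
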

\begin{proof}
	Given $ L \ge 4 $ such that the Type-I and Type-II grating-lobes does not overlap in the angular domain, when the first term of \eqref{eq52} generates Type-I grating-lobes, 
	it is easy to show that  
	\begin{equation}
		\label{LSAbeamHeight}
		{\small 
		\begin{aligned}
			{\rm BH}^{(1)}_{n}=  \frac{LM-1}{Q_{\rm ECA}} - \frac{L-1}{Q_{\rm ECA}} \approx \frac{M-1}{M+N-1}, \forall n.
		\end{aligned}}
	\end{equation}
	Similarly to $ {\rm BH}^{(1)}_{n}$, we can obtain $ {\rm BH}^{(2)}_{m} =  \frac{LN-1}{Q_{\rm ECA}} - \frac{L-1}{Q_{\rm ECA}} \approx \frac{N-1}{M+N-1}, \forall m  $ and $ {\rm BH}^{(3)}_{\ell}=  \frac{L-1}{Q_{\rm ECA}}, \forall \ell $.

\end{proof}
\begin{remark}[Selection of $M$ and $N$]	
	\rm To suppress the grating-lobes generated by the two effective LSAs, it can be observed from \eqref{LSAbeamHeight} that it is desirable to set $ M \approx  N $. Then, the following approximation can be made
	\begin{equation}
		\begin{aligned}
			{\rm BH}^{(1)}_{n}=  \frac{M-1}{M+N-1}\approx\frac{1}{2}, \forall m.
		\end{aligned}
	\end{equation}
	\rm
Otherwise, if $ M \gg N $, we have $ {\rm BH}^{(1)}_{n}\approx 1 $ and $ {\rm BH}^{(2)}_{m}\approx 0 $, which reduces to the LSA case and may potentially degrade the user performance when it is located at the grating-lobes of other user's beamformer. 	
\end{remark}
\vspace{-0.2 cm}
\begin{remark}[Grating-lobes smoothing]\emph{Compared with the LSA, the ECA beam pattern generally have more grating-lobes under the same array sparsity. However, it is also worth noting that the beam-heights of ECA grating-lobes are much smaller than those of LSA, thanks to the offset effect of the grating-lobes generated by the two subarrays. This phenomenon is termed as the \emph{grating-lobes smoothing} in this paper. It thus leads to the better worst-case user-rate performance in the ECA communication system as compared to the LSA counterpart. 
	}
\end{remark}

\subsubsection{Beam-depth in the Range Domain}
For the Type-I, Type-II, and Type-III grating-lobes,  we set $ \theta = \theta^{(\rm G,1)}_{{\rm ECA}, n} $, $ \theta = \theta^{(\rm G,2)}_{{\rm ECA}, m} $ and $ \theta = \theta^{(\rm G,3)}_{{\rm ECA}, \ell} $, respectively. Then, we have the following result on their beam-depths.
\begin{lemma}[]\label{lemma11}\emph{
	{\rm If}  $ \Delta = \frac{2m}{M}, m \in  \mathcal{M} $, $ \Delta = \frac{2n}{N}, n \in \mathcal{N}$ or $ \Delta = \frac{2\ell}{MN}, \ell \in \mathcal{L}$, then {\rm $ \hat{f}_{\rm ECA}$ in (\ref{ECAsum})  can be respectively approximated as}	
	\begin{equation}\label{newEQ43}
			{\small\begin{aligned}
				\hat{f}_{\rm ECA}({r_0},{\theta _0};r,\theta^{(\rm G,1)}_{{\rm LSA},n} ) \approx 
				\left| \frac{LM-1}{Q_{\rm ECA}}F(\gamma_{2})-\frac{L-1}{Q_{\rm ECA}}F( \rho_{2}\gamma_{2}) \right|, \forall n, 
			\end{aligned}}	
	\end{equation}
	 \begin{equation}\label{newEQ44}
			{\small\begin{aligned}
				\hat{f}_{\rm ECA}({r_0},{\theta _0},r,\theta^{\rm (G,2)}_{{\rm LSA},m} )\!\! \approx \!\!
				\left| \frac{LN-1}{Q_{\rm ECA}}F(\rho_{1}\gamma_{2})-\frac{L-1}{Q_{\rm ECA}}F( \rho_{2}\gamma_{2}) \right|, \forall m,
			\end{aligned}}	
	\end{equation}
	\begin{equation}\label{newEQ45}
		{\small \begin{aligned}
			\hat{f}_{\rm ECA}({r_0},{\theta _0},r,\theta^{(\rm G,3)}_{{\rm LSA},\ell} ) \approx 
			\left| \frac{L-1}{Q_{\rm ECA}}F( \rho_{2}\gamma_{2}) \right|, \forall \ell.
		\end{aligned}}	
	\end{equation}
}
\end{lemma}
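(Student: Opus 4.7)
The plan is to prove each of the three approximations in \eqref{newEQ43}--\eqref{newEQ45} separately by tracking, for each grating-lobe angle, which of the three subsums in \eqref{ECAsum} phase-align (and therefore survive as an $F(\cdot)$-type Fresnel term) and which ones do not (and therefore vanish in the approximation). The mechanism that converts a phase-aligned subsum into an $F(\cdot)$ term is exactly the Fresnel-integral argument already executed for the LSA in the proof of Lemma~\ref{LSAgeneral} (Appendix~A): once the linear-in-index phase $B_1$ is an integer multiple of $2\pi$ for every index, what remains is a purely quadratic-phase sum that the Fresnel approximation turns into a scaled $F(\cdot)$, with the scale determined by the effective half-aperture of that subarray.

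Consider first the Type-I case $\Delta=2n/N$ with $n\in\mathcal{N}$. Substituting into the linear phase of the subarray-1 sum gives $\exp(\jmath 2\pi m n)=1$ for every $m$, so subarray~1 reduces to $\sum_m\exp(\jmath\pi m^2(Nd_0)^2\Phi/\lambda)$; applying the Fresnel approximation and matching to $\gamma_2=(LM-1)N\sqrt{d_0|\Phi|}/2$ yields the term $(LM-1)F(\gamma_2)/Q_{\rm ECA}$. For the shared-antenna sum (spacing $MNd_0$) the linear phase becomes $\exp(\jmath 2\pi iMn)=1$, so this subsum also aligns and, after the same Fresnel step, contributes $(L-1)F(\rho_2\gamma_2)/Q_{\rm ECA}$, where $\rho_2=M(L-1)/(LM-1)$ comes out directly as the ratio of the effective half-apertures $(L-1)MN/2$ and $(LM-1)N/2$. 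The subarray-2 sum, in contrast, has linear phase $\exp(\jmath 2\pi n' Mn/N)$, which is not identically one by the coprimality of $M$ and $N$ and the fact that $n\notin \{0\bmod N\}$; invoking the ``beam pattern is negligible off the grating-lobe angles'' conclusion that followed Lemma~\ref{LSAgeneral}, this cross term is discarded, yielding \eqref{newEQ43}. The Type-II case is symmetric: at $\Delta=2m/M$, subarray~2 aligns to $(LN-1)F(\rho_1\gamma_2)/Q_{\rm ECA}$ and the shared antennas align to $(L-1)F(\rho_2\gamma_2)/Q_{\rm ECA}$, while subarray~1 is discarded, giving \eqref{newEQ44}; the identity $\rho_1=(LMN-M)/(LMN-N)$ again drops out of the half-aperture ratio. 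Finally, for the Type-III case $\Delta=2\ell/(MN)$ with $\ell\in\mathcal{L}=\mathcal{S}\setminus\mathcal{M}\setminus\mathcal{N}$, the exclusion of $\ell$ from $\mathcal{M}\cup\mathcal{N}$ is exactly the condition guaranteeing that \emph{both} subarray sums fail to align (since $\ell$ is neither a multiple of $M$ nor of $N$), so only the shared-antenna sum survives, giving \eqref{newEQ45}.

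The main obstacle is making the ``non-aligned subsums are negligible'' step rigorous in the coprime setting, because the cross sums are not merely off-resonance LSA sums of a single LSA as treated in Lemma~\ref{LSAgeneral}; each one is a sum over indices in which the linear phase cycles through several non-zero residues modulo $N$ (or $M$) determined by coprimality. I would handle this by grouping the indices by residue class, applying Lemma~\ref{LSAgeneral}'s bound on $G(\beta_1,\beta_2)$ to each subgroup at its shifted argument $\Delta_{\rm eff}=(2n/N)-(2n'/M) \ne 0$, and summing; the net contribution is $O(1/Q_{\rm ECA})$ after normalization and thus absorbed into the approximation. The remaining work is purely algebraic bookkeeping to verify $\rho_1,\rho_2$ from the half-aperture ratios and to confirm that the normalization by $Q_{\rm ECA}=L(M+N-1)-1$ produces the prefactors $(LM-1)/Q_{\rm ECA}$, $(LN-1)/Q_{\rm ECA}$, and $(L-1)/Q_{\rm ECA}$ as stated.
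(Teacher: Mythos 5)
Your proposal is correct and follows essentially the same route as the paper's proof: at each grating-lobe angle you identify which of the three subsums in \eqref{ECAsum} have their linear phase collapse to multiples of $2\pi$ (those become scaled Fresnel terms $F(\cdot)$ with the scale factors $\rho_1,\rho_2$ arising as ratios of effective half-apertures), and you discard the non-aligned subsum by the same ``$G(\beta_1,\beta_2)\approx 0$ off resonance'' observation the paper invokes via Fig.~\ref{fig3}. Your residue-class grouping to justify the negligibility of the off-resonance coprime subsum is a modest tightening of a step the paper handles by appeal to the numerical observation, but it does not change the structure of the argument.
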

\begin{proof}
	When $ \Delta = \frac{2m}{M}, m \in  \mathcal{M} $, for subarray 1, similar to Appendix A, its beam pattern can be obtained as 
	{\small
	\begin{equation}		
		{\bf{b}}_{{1}}^H(r,\theta^{(\rm G,2)}_{{\rm ECA}, m} ){{\bf{b}}_{1}}({r_0},{\theta _0})\approx
		\begin{aligned}
			\left\{\begin{matrix}
				\frac{(LN-1)e^{-\jmath\pi\frac{\Delta^{2}}{d\Phi}}}{2Q_{\rm ECA}} G(\gamma_1, \gamma_2), \Phi> 0
				\\
				\frac{(LN-1)e^{-\jmath\pi\frac{\Delta^{2}}{d\Phi}}}{2Q_{\rm ECA}} G^{\ast}(\gamma_1, \gamma_2) , \Phi < 0
			\end{matrix}\right.
		\end{aligned}
	\end{equation}}
Similarly, for subarray 2, we have
		\begin{equation}\label{EQ55}
		{\small
			{\bf{b}}_{{2}}^H(r,\theta^{(\rm G,2)}_{{\rm ECA}, m} ){{\bf{b}}_{2}}({r_0},{\theta _0}) =\frac{LN-1}{Q_{\rm ECA}}F(\rho_{1}\gamma_{2}) - \frac{L-1}{Q_{\rm ECA}}F(\rho_{2}\gamma_{2}).}
	\end{equation}
Given $ \Delta = \frac{2m}{M}, m \in \mathcal{M} $ and using the observation in Fig. \ref{fig3}, we have {$ {\bf{b}}_{{1}}^H(r,\theta^{(\rm G,2)}_{{\rm ECA}, m} ){{\bf{b}}_{1}}({r_0},{\theta _0})\!\approx\! 0 $}. Then, we have $ |{\bf{b}}_{{\rm{ECA}}}^H(r,\theta^{(\rm G,2)}_{{\rm ECA}, m} ){{\bf{b}}_{{\rm{ECA}}}}({r_0},{\theta _0})|
	\approx |{\bf{b}}_{{2}}^{H}(r,\theta^{(\rm G,2)}_{{\rm ECA}, m} ){{\bf{b}}_{2}}(r_{0},\theta_{0} ) |$ and thus obtain the desired results in \eqref{newEQ44}.
	By using the similar method as for the case  $ \Delta = \frac{2m}{M}, m \in  \mathcal{M} $, we can obtain the result in \eqref{newEQ43} and \eqref{newEQ45}, thus completing the proof.
\end{proof}	

To gain further insights, similar to the rang domain of the main-lobe, \eqref{newEQ43} and \eqref{newEQ44} can be simplified as
{
	\begin{equation}
		\hat{f}_{\rm ECA}({r_0},{\theta _0},r,\theta^{\rm (G,1)}_{{\rm LSA},n} ) \approx 
		\left| \frac{LN-L}{Q_{\rm ECA}}F(\gamma_{2}) \right|, \forall n,
\end{equation}}
{
	\begin{equation}
		\hat{f}_{\rm ECA}({r_0},{\theta _0},r,\theta^{\rm (G,2)}_{{\rm LSA},m} ) \approx 
		\left| \frac{LM-L}{Q_{\rm ECA}}F(\gamma_{2}) \right|, \forall m.
		\vspace{-5pt}
\end{equation}}

For the beam-depth of the ECA grating-lobes, we present the main result below for the Type-I grating-lobes, while the results for the Type-II and Type-III grating-lobes can be obtained by using the similar methods. The proof is similar to Proposition 3 and thus omitted due to limited space.

\begin{proposition}[Beam-depth of Type-I grating-lobe]\label{ECA GL Beam-depth}
	\rm For an ECA parameterized by $\{M, N, L\}$, the beam-depths of Type-I grating-lobes are given by
	\begin{equation}
		\mathrm{BD}_{{\rm ECA}, n}^{\rm (G,1)} = \begin{cases}\frac{2 r_{0}^2 r_{{\mathrm{ECA},n}}^{\rm(G,1)}}{r_{\mathrm{ECA}}^2-r_{0}^2}, & r_{0}<r_{\mathrm{ECA}} \\ \infty, & r_{0} \geq r_{\mathrm{ECA}}\end{cases}, \forall n,
	\end{equation}
	\rm where $ r_{\mathrm{ECA},n}^{\rm(G,1)} \approx \frac{(LM-1)^{2}N^{2}d_0 \cos ^2 \theta^{\rm (G,1)}_{{\rm ECA},n}}{4  \varphi _{3 \mathrm{dB}}^2}, \forall n$. 
\end{proposition}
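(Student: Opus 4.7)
\proof The plan is to mirror the argument used for Proposition~\ref{LSA GL Beam-depth} (the LSA grating-lobe case), adapted to the ECA structure via Lemma~\ref{lemma11}. First, I would start from the Type-I grating-lobe beam pattern in \eqref{newEQ43}. Under the regime $M\approx N$ and $L\ge 4$ already invoked in deriving \eqref{simplified 37}, we have $\rho_{2}\gamma_{2}\approx \gamma_{2}$ and $\rho_{1}\gamma_{2}\approx \gamma_{2}$, so the expression collapses to the simplified form
\begin{equation}
\hat{f}_{\rm ECA}\bigl(r_{0},\theta_{0};r,\theta^{(\rm G,1)}_{{\rm ECA},n}\bigr)\approx \bigl|F(\gamma_{2})\bigr|,
\end{equation}
up to an overall constant that cancels after normalization. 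Since $|F(\gamma_{2})|$ is monotonically decreasing in $|\gamma_{2}|$ and attains its maximum at $\gamma_{2}=0$, the beam power peaks precisely at the ring $\Phi=0$ and one obtains the beam-height already recorded in Proposition~\ref{ProBeam-height-ECA}.

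Next, I would apply Definition~\ref{Def:Depth}. The $3$-dB condition $|F(\gamma_{2})|=\tfrac{1}{2}$ yields $|\gamma_{2}|=\varphi_{3{\rm dB}}\approx 1.6$, so by the definition of $\gamma_{2}$ in Lemma~\ref{Them4},
\begin{equation}
\frac{(LM-1)N}{2}\sqrt{d_{0}|\Phi|}=\varphi_{3{\rm dB}}\;\Longleftrightarrow\;|\Phi|=\frac{4\varphi_{3{\rm dB}}^{2}}{(LM-1)^{2}N^{2}d_{0}}=\frac{\cos^{2}\theta_{0}}{r_{\rm ECA}},
\end{equation}
where the last identity uses the definition of $r_{\rm ECA}$ in Proposition~\ref{main-lobe range ECA}. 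Substituting $\Phi=\tfrac{\cos^{2}\theta_{0}}{r_{0}}-\tfrac{\cos^{2}\theta^{(\rm G,1)}_{{\rm ECA},n}}{r}$ and solving for $r$ gives the two range end-points
\begin{equation}
r_{\rm large,small}=\frac{\cos^{2}\theta^{(\rm G,1)}_{{\rm ECA},n}\,r_{0}r_{\rm ECA}}{\cos^{2}\theta_{0}\,(r_{\rm ECA}\mp r_{0})},
\end{equation}
valid whenever $r_{0}<r_{\rm ECA}$. Taking the difference and invoking $r_{{\rm ECA},n}^{(\rm G,1)}=r_{\rm ECA}\cos^{2}\theta^{(\rm G,1)}_{{\rm ECA},n}/\cos^{2}\theta_{0}$ produces the claimed expression $2r_{0}^{2}r_{{\rm ECA},n}^{(\rm G,1)}/(r_{\rm ECA}^{2}-r_{0}^{2})$. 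Finally, when $r_{0}\ge r_{\rm ECA}$, the smaller root $r_{\rm large}$ becomes non-positive (or unbounded), which by convention yields ${\rm BD}=\infty$, matching the case distinction.

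The main obstacle I anticipate is justifying the collapse of the three-$F$ expression in \eqref{newEQ43} to the single-$F$ approximation $|F(\gamma_{2})|$. The clean step $\rho_{2}\gamma_{2}\approx\gamma_{2}$ requires $L\ge 4$ and $M\approx N$, and one should additionally verify that the two surviving scalar prefactors $(LM-1)/Q_{\rm ECA}$ and $(L-1)/Q_{\rm ECA}$ do not distort the $3$-dB threshold location (they merely rescale the beam-height, not the argument of $F$). Once that reduction is accepted, the rest is an essentially algebraic copy of the LSA argument in Appendix~C, and the Type-II and Type-III grating-lobe beam-depths follow by the same recipe with $\gamma_{2}$ replaced by $\rho_{1}\gamma_{2}$ and $\rho_{2}\gamma_{2}$, respectively.
\endproof
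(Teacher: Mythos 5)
Your proposal is correct and follows essentially the same route the paper intends: the paper omits this proof, stating it is "similar to Proposition \ref{LSA GL Beam-depth}" (whose argument is given in the appendix), and your derivation is exactly that argument transplanted to the ECA — reduce \eqref{newEQ43} to $|F(\gamma_2)|$ via the $\rho_2\gamma_2\approx\gamma_2$ simplification the paper already makes, impose $|\gamma_2|\le\varphi_{3\mathrm{dB}}$, and solve the resulting two-sided bound on $\Phi$ for $r$. Your observation that the scalar prefactors cancel in the 3-dB condition (since beam-depth is defined relative to the maximum over $r$ at the fixed grating-lobe angle) correctly closes the only step the paper leaves implicit.
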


Hence, we have
\begin{equation}
	\begin{aligned}
		\frac{\mathrm{BD}_{{\rm ECA}, n}^{\rm (G,1)}}{\mathrm{BD}_{{\rm ECA}}^{\rm (M)}} &= \frac{\cos^{2}\theta^{\rm (G,1)}_{{\rm ECA},n}}{\cos^{2}\theta_{0}} = \frac{1-(\sin\theta_{0}+\Delta^{(1)}_{{\rm ECA},n})^{2}}{\cos^{2}\theta_{0}}\\
		&= 1-\frac{\Delta^{(1)}_{{\rm ECA},n}(\Delta^{(1)}_{{\rm ECA},n}+2\sin\theta_{0})}{\cos^{2}\theta_{0}},
	\end{aligned}\label{NewEq50}	
			\vspace{-5pt}
\end{equation} 
where $ \Delta^{(1)}_{{\rm ECA}, n} \triangleq\sin\theta^{({\rm G},1)}_{{\rm ECA}, n} - \sin\theta_0 $.

As \eqref{NewEq26} and \eqref{NewEq50} have similar forms,  the beam-depths of ECA grating-lobes have similar properties with the LSA counterparts, as discussed in Remark~\ref{remark2}.

	
\subsection{Hybrid Beamforming Design for ECA}\label{hybridECA}
In this subsection, we propose a customized hybrid beamforming design for ECAs. The corresponding weighted sum-rate maximization problem can be formulated similarly as that in Problem (P1) by replacing the LSA by the ECA. 

To solve this problem, instead of using the solution method proposed for the  LSA, we devise an alternative method by more effectively exploiting the grating-lobes smoothing effect for LSAs. Specifically, it is shown in Section~\ref{ECA} that compared with the LSA, the ECA substantially suppresses the interference power in the possible grating-lobes, thus improving the worst-case user rate performance. Motivated by the above, we propose an efficient and low-complexity hybrid beamforming design for the ECA, where the analog beamforming is devised to maximize the received signal power at each individual user and the digital beamforming is designed to cancel the residual inter-user interference \cite{liu2023near}.

Specifically, the analog beamformer for each user is given by $ \mathbf{f}_{{\rm ECA},\mathrm{A}, k} = \mathbf{b}_{k}(r_{k}, \theta_k)$. Then, the effect channel of user $k$ is  $ \tilde{\mathbf{h}}_k=\mathbf{F}_{\rm ECA,\mathrm{A}}^{H} \mathbf{h}_k $. As such, the SINR for user $k$, ${\rm SINR}_k$,  can be recast as $ {\rm SINR}_k=\frac{\left|\tilde{\mathbf{h}}_k^{H} \mathbf{f}_{{\rm ECA},\mathrm{D}, k}\right|^2}{\sum\limits_{i=1,i \neq k}^{K}\left|\tilde{\mathbf{h}}_k^{H} \mathbf{f}_{{\rm ECA},\mathrm{D}, i}\right|^2+\sigma_k^2}, \forall k.$
%
Therefore, we have the following optimization problem
\begin{equation*}
	\begin{aligned}
		({\bf P3}):\!	&\max\limits_{\mathbf{F}_{{\rm ECA},\mathrm{D}}} C = w_k\sum\limits_{k = 1}^{K} {\log (1 +  {\rm SINR}_{k})} \\
		&\quad{\text{s}}{\text{.t}}{\rm{. }} \quad  
		\left \| \mathbf{F}_\mathrm{\rm ECA,A}\mathbf{F}_\mathrm{\rm ECA,D}  \right \|^{2}_{F}   \le {P_{\max}}.
	\end{aligned}
\end{equation*}
To solve Problem (P3), we can employ the fractional programming (FP) algorithm detailed in \cite{ref5}, with the details omitted for brevity. 

\vspace{-0.1cm}
\section{Numerical Results}
In this section, we present numerical results to demonstrate the performance gain of the proposed SAs over conventional ULAs (i.e., DAs). 
The system parameters are set as follows. The BS operation at $ f = 30 $ GHz servers $ K = 10 $ users, with the reference path-loss $\beta_0 = (\lambda/4\pi)^{2} = -62 $ dB. The transmit SNR is $ \mathrm{SNR}_{\rm t} = \frac{P_{\rm max}}{\sigma^{2}} = 100$ dB with  $ P_{\rm max} = 30 $ dBm and the noise power $ \sigma^{2} = -70 $ dBm. Without specified otherwise, we consider three array configurations: 1) the LSA with $ U=3 $ and $ Q_{\rm LSA}=131 $; 2) the ECA with $ M = 7 $, $ N = 5 $ and $ Q_{\rm ECA}=131 $; and 3) the ULA with $ Q_{\rm ULA} = 131 $. Other parameters are set as: $ w_k=1, \forall k $. Three typical scenarios for the user distribution are considered as follows.

\vspace{-0.2cm}
\subsection{Scenario 1: Users at the same Sector Region}
Firstly, we consider the scenario where the users are uniformly and densely distributed within the  region $ \theta \in [-4^{\circ} , 4^{\circ}], r\in [20 ~{\rm m}, 30~ {\rm m}]$. In Fig. \ref{fig_9}, we plot the curves of the achievable sum-rate versus the array sparsity for different arrays under the (approximately) same antenna number.  
Note that it is not always possible to guarantee the same sparsity and number of antennas for the LSA and ECA due to their different antenna-spacing. For example, in Fig. \ref{fig_9}, the LSA and ECA have the same antenna number only when $ \eta = 2$ and $ \eta = 6$, while the ECA has a slightly larger number of antennas than the LSA for other values of $ \eta = 2$.  Moreover, for performance comparison, we consider the benchmark scheme with the ULA, for which the digital beamforming method is the same as that of SAs and the hybrid beamforming proposed in Section \ref{hybridECA} is adopted.
Based on the above, several important observations are made as follows. 
First,  the sum-rates of both SA configurations are significantly higher than that of ULA. 
This can be explained by two facts: 1) the SAs substantially enlarges the LSA aperture and thus endows the near-field beam-focusing effect, which is not achievable at the ULA with a small number of antennas; and 2) the beam-widths of SAs are much smaller than that of ULA, thus greatly reducing the IUI when users are randomly distributed. Second, the proposed hybrid beamforming design for the two SAs achieve very close rate performance with that of optimal digital beamforming, while the hybrid beamforming for ULA suffers considerable rate performance loss as compared to its digital beamforming. This can be intuitively understood, since the users in the ULA communication systems are at the  far-field region, thus the MRT-based analog beamforming cannot address the IUI properly. Third, it is observed that the ECA  system attains a bit higher rate than the LSA system at the sparsity of $\eta\in\{3,4,5\}$. This is mainly because at these cases, the ECA has a larger number of antennas and hence higher beamforming gain, while their rate performance at almost the same at $\eta=2$ and $\eta=6$, for which both SAs have the same antenna number. 
\begin{figure}[!t]
	\begin{center}
		\includegraphics[width=2.7in]{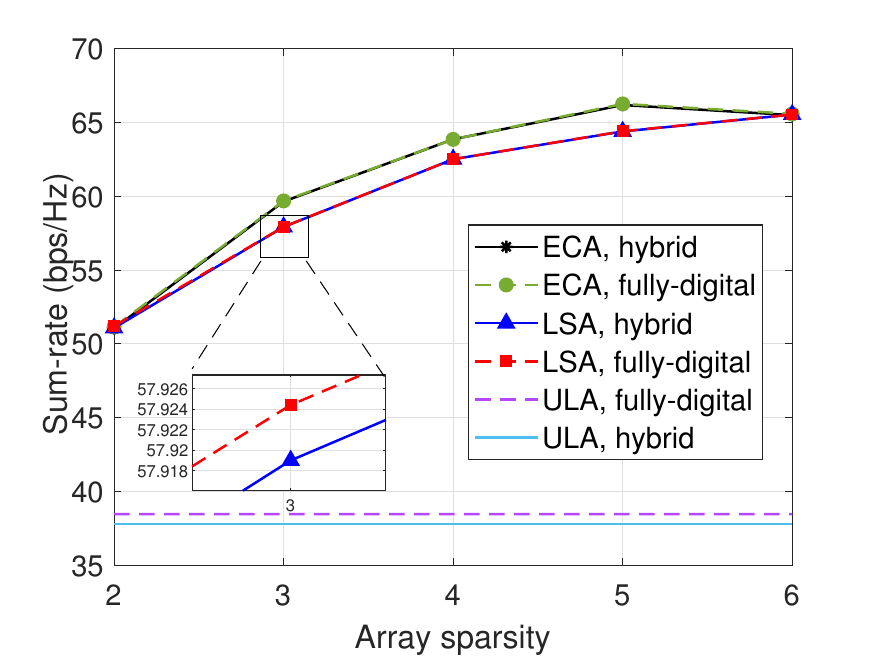}
		\caption{Scenario $1$: achievable sum-rate versus array sparsity.}
		\label{fig_9}
		\vspace{-0.6cm} 
	\end{center}	
\end{figure}
\begin{figure}[!t]
	\begin{center}
		\includegraphics[width=2.7in]{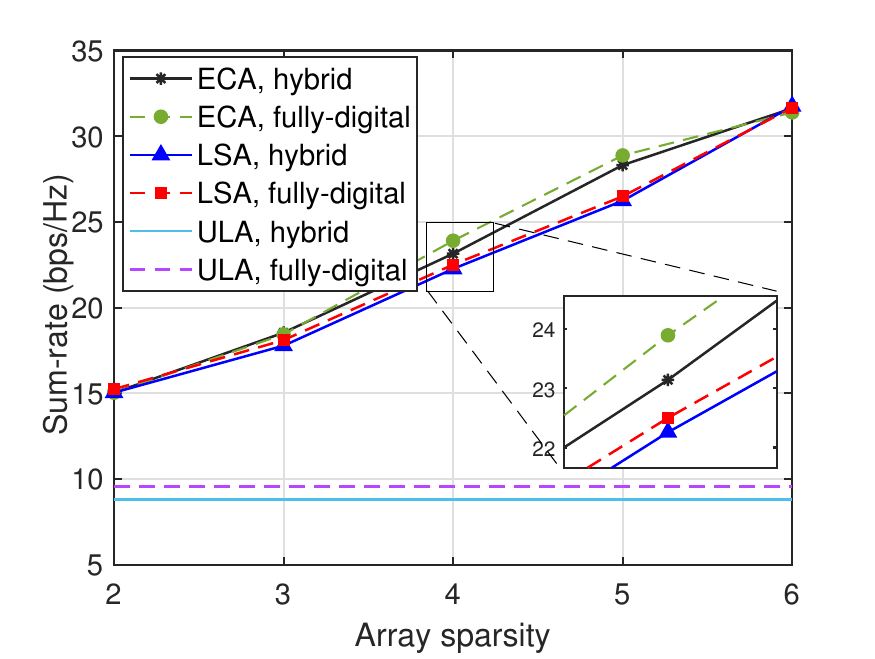}
		\caption{Scenario $2$: achievable sum-rate versus array sparsity.}
		\label{fig_10}
		\vspace{-0.6cm}
	\end{center}	
\end{figure}

\vspace{-0.3cm}
\subsection{Scenario 2: Users at the Same Angle}
Then, we consider the second scenario where the users are uniformly distributed at the same angle $\theta = 0$, with range in $ r \in [15~{\rm m}, 30~{\rm m}] $. Fig. \ref{fig_10} plots the achievable sum-rate versus the SA sparsity given the (approximately) same number of antennas. Again, the LSA and ECA systems still achieve much higher rates than the ULA systems thanks to the greatly enlarged array aperture and hence the near-field benefits. Next, the achievable rates of both SAs increases with the array sparsity, since a higher array sparsity leads to a smaller beam-depth which is particularly helpful for reducing the IUI when the users at the same angle. In addition, for both SAs, there is a small gap between their digital and hybrid beamforming, due to the residual IUI under the proposed analog beamforming. Other observations are similar to those for Fig.~\ref{fig_9} and thus are omitted due to limited space.

\vspace{-0.3cm} 
\subsection{Scenario 3: Users at Grating-lobes}
Last, we consider a more challenging scenario where the users are distributed at the grating-lobe directions ($\sin\theta \in \{-2/3, 2/3\}$) of both the LSA and ECA with their rang in $ r\in [20~{\rm m}, 30~{\rm m}]$. The array configuration is as follows: the LSA has $ Q_{\rm LSA} = 125 $ antennas with $ U = 3 $ and the ECA has $ Q_{\rm ECA }= 125 $ antennas with  $ \{M=5,N=3,L=18\} $. 
In Fig. \ref{fig_11}, we show the achievable sum-rate versus the transmit SNR for different arrays. The main observations are summarized as follows. First, for the LSA, the achievable rate by the proposed hybrid beamforming design suffers a considerable rate loss as compared to its optimal digital beamforming. This is intuitively expected, since the grating-lobes of LSA have high beam-heights (see Lemma~\ref{LemmaLSAGratingn-lobe}) and thus incur high IUI when occurred. In contrast, given the same antenna number, the ECA can greatly suppress the beam-heights of grating-lobes (see Proposition~\ref{ProBeam-height-ECA}), hence leading to a higher rate. Next, the rate gap between the ECA and LSA are increasing with the transmit SNR. This indicates that for this challenging scenario, the ECA generally yields better rate performance over the LSA in the high-SNR regime. 
\begin{figure}[!t]
	\begin{center}		
		\includegraphics[width=2.7in]{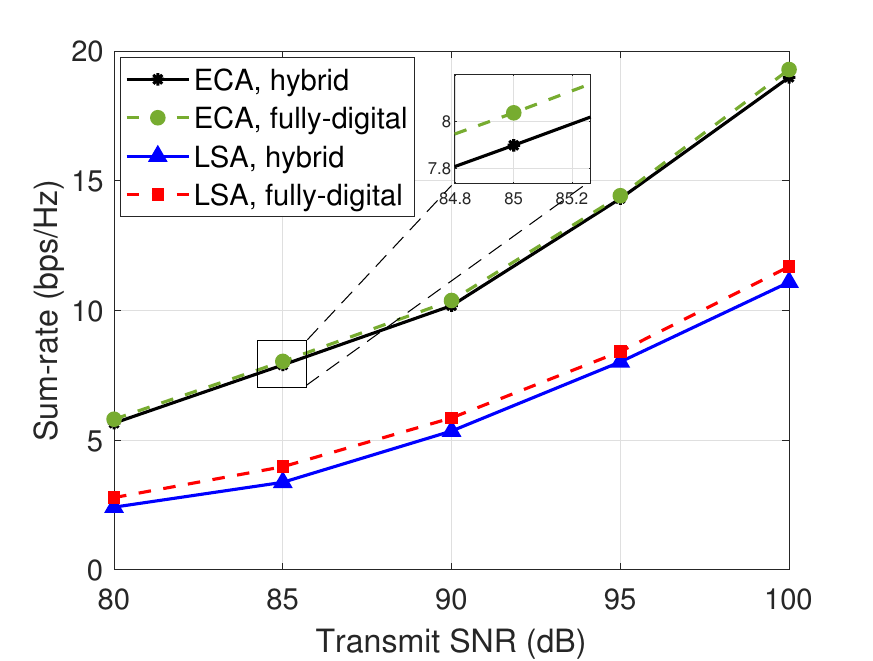}
		\caption{Scenario $3$: achievable sum-rate versus transmit SNR.}	
		\label{fig_11}
		\vspace{-0.4cm} 
	\end{center}
\end{figure}
\section{Conclusions}
In this paper, we proposed  two types of SAs to enable near-field communications for reducing the hardware and energy cost of existing ULA-based XL-array communication systems. First, we considered the LSA and characterized its near-field beam pattern. We showed that the LSA can enhance the beam-focusing gain, but it also introduces potential strong IUI due to the  undesired grating-lobes of comparable beam power with the main-lobe. Then an efficient hybrid beamforming design was proposed for the LSA to deal with the IUI issue. Next, we proposed the LSA for enabling near-field communications. It was revealed that the ECA can greatly suppress the beam power of near-field grating-lobes as compared to the LSA with the (approximately) same array sparsity, although it has a larger number of grating-lobes. A customized two-phase hybrid beamforming design was then proposed for the ECA. Finally, numerical results were presented to demonstrate the rate performance gain of the proposed two SAs over the conventional ULA.
\section*{Appendix A: Proof of Lemma \ref{LSAgeneral}}
Let $ \eta_{1} \triangleq  U\Delta$ and $ \eta_{2} \triangleq \frac{( U d_0)^{2}}{\lambda}( {\frac{{{{\cos }^2}{\theta _0}}}{{{r_0}}}} - \frac{{{{\cos }^2}\theta }}{{r}}) $. By  assuming $ \eta_{2}\neq 0 $, i.e.,$ \frac{\cos^2 \theta}{r} \neq  \frac{\cos^2 \theta_0}{r_0} $ , we have
\begin{equation*}
	{\small
	  \begin{aligned}
		f_{\rm LSA}({r_0},{\theta _0};r,\theta )&\overset{(b_1)}{\approx}\frac{1}{Q_{\rm LSA}} \sum_{q\in \mathcal{Q}_{\rm LSA}} e^{\jmath \pi\left(\eta_{1} q + \eta_{2} q^2 \right)} \\
		 &\overset{(b_2)}{\approx}\frac{1}{Q_{\rm LSA}} \int_{-Q_{{\rm LSA}} / 2}^{Q_{{\rm LSA}} / 2} e^{\jmath \pi\left(\eta_{1} q + \eta_{2} q^2\right)} {\rm d} q \\
		 &~=\frac{e^{-\jmath\pi\frac{\eta_{1}^{2}}{4\eta_{2}}}}{Q_{\rm LSA}} \int_{-{Q_{\rm LSA}}/ 2}^{{Q_{\rm LSA}} / 2} e^{\jmath \pi \eta_{2}\left(q+\frac{\eta_{1}}{2 \eta_{2}}\right)^2} {\rm d} q ,
	\end{aligned}}
\end{equation*}
where $(b_1)$ is due to Fresnel approximation,  $(b_2)$ approximately holds when $ \eta_{1} $ is not an even integer, i.e., $ \Delta\neq \frac{2u}{ U}, u=\pm 1, \dots, \pm  (U-1)$. 

If $ \eta_{2} > 0 $, we have
\vspace{-5pt}
\begin{equation*}
	{\small
	\begin{aligned}
		f_{\rm LSA}({r_0},{\theta _0};r,\theta ){\approx}  \frac{e^{-\jmath\pi\frac{\eta_{1}^{2}}{4\eta_{2}}}}{Q_{\rm LSA}} \int_{\sqrt{2\eta_{2}}(-\frac{ Q_{\rm LSA}}{2}+\frac{\eta_{1}}{2\eta_{2}})}^{\sqrt{2\eta_{2}}(\frac{Q_{\rm LSA}}{2}+\frac{\eta_{1}}{2\eta_{2}})} e^{\jmath \frac{1}{2}\pi t^2} {\rm d} t ,
	\end{aligned}}
\end{equation*}
by setting $ \frac{1}{2}t^{2} = \eta_{2}(q + \frac{\eta_{1}}{2\eta_{2}})^{2}$. 
Then, by defining $ \beta_{1} = \frac{\eta_{1}}{\sqrt{2\eta_{2}}} = \frac{\Delta}{{\sqrt{d_0 \Phi  }}} $ and $ \beta_{2} = \sqrt{2\eta_{2}}\frac{Q_{\rm LSA}}{2} = \frac{Q_{\rm LSA} U}{2}\sqrt{{d_0}\Phi  }$, we have
\vspace{-5pt}
\begin{equation*}
	f_{\rm LSA}({r_0},{\theta _0};r,\theta )\approx {e^{-\jmath\pi\frac{\Delta^{2}}{2d_0\Phi}}} [\widehat{C}(\beta_{1},\beta_{2}) +  \jmath(\widehat{S}(\beta_{1},\beta_{2})]/(2\beta_2),
		\vspace{-3pt}
\end{equation*}
where $\widehat{C}(\beta_{1},\beta_{2})$ and $ \widehat{S}(\beta_{1},\beta_{2}) $ are defined in Lemma~\ref{LSAgeneral}.   
Otherwise, if $ \eta_{2} < 0 $, by re-defining $ \beta_{1} = \frac{\eta_{1}}{\sqrt{-2\eta_{2}}} = \frac{\Delta}{{\sqrt{d_0 (-\Phi)  }}} $ and $ \beta_{2} = \sqrt{-2\eta_{2}}\frac{Q_{\rm LSA}}{2} = \frac{Q_{\rm LSA} U}{2}\sqrt{{d_0}(-\Phi)  }$, we have 
\vspace{-5pt}
\begin{equation*}
		f_{\rm LSA}({r_0},{\theta _0};r,\theta )\approx {e^{-\jmath\pi\frac{\Delta^{2}}{2d\Phi}}} [\widehat{C}(\beta_{1},\beta_{2}) -  \jmath(\widehat{S}(\beta_{1},\beta_{2})]/(2\beta_2).
		\vspace{-2pt}
\end{equation*}
Combining the above two cases leads to the desired result.
\vspace{-0.4cm}
\section*{Appendix B: Proof of Proposition \ref{LSA GL Beam-depth}}
For the 3-dB beam-depth of LSA grating-lobes, we have 
{\small $\beta_{2}=\frac{Q_{\rm LSA} U}{2}\sqrt{{d_0}| \frac{{{{\cos }^2}\theta_{0} }}{{r_{0}}} - {\frac{{{{\cos }^2}{\theta_{{\rm GL},u}^{(\rm G)} }}}{{{r}}}}| } \le 1.6.$}
Then, we have	{\small$|{\frac{{{{\cos }^2}{\theta_{{\rm LSA},u}^{\rm (G)}}}}{{{r}}}} -\frac{\cos \theta_{0}}{r_0}| \le \frac{4\varphi_{3\rm dB}^{2}}{Q_{\rm LSA}^{2} U^{2}d_0} \triangleq \frac{1}{r_{\rm LSA}^{'}}, $}
where $ \varphi_{3\rm dB} = 1.6$.

Next, if $r \ge \frac{r_0\cos^2{\theta^{\rm (G)}_{{\rm LSA},u} }}{\cos^2 \theta_0}  \triangleq r_{\rm e} $, the range of $ r $ is given by $r_{\rm e} \le r \le r_{0}r_{{\rm LSA},u}^{(\rm G)}/(r_{\rm LSA}-r_{0}),$
where $r_{\rm LSA} \triangleq r_{\rm LSA}^{'}\cos^{2}\theta$ and $ r_{{\rm LSA},u}^{\rm (G)} \triangleq r_{\rm LSA}^{'}\cos^2\theta^{\rm(G)}_{{\rm LSA},u}$.
Otherwise, if $r \le r_{\rm e}  $, the range of $ r $ is given by $r_{0}r_{{\rm LSA},u}^{(\rm G)}/(r_{\rm LSA}+r_{0}) \le r \le r_{\rm e}.$

Therefore, if $r_{0} \le r_{\rm LSA}$ , we have $\frac{r_{0}r_{{\rm LSA},u}^{(\rm G)}}{r_{\rm LSA}+r_{0}} \le r \frac{r_{0}r_{{\rm LSA},u}^{(\rm G)}}{r_{\rm LSA}-r_{0}},$
i.e., $ {\rm BD}_{{\rm LSA}, u}^{\rm (G)} = \frac{r_{0}r_{{\rm LSA},u}^{(\rm G)}}{r_{\rm LSA}-r_{0}} - \frac{r_{0}r_{{\rm LSA},u}^{(\rm G)}}{r_{\rm LSA}+r_{0}} = \frac{2r_{0}^{2}r_{{\rm LSA},u}^{\rm (G)}}{r_{\rm LSA}^{2}-r_{0}^{2}}$.
If $r_{0} \ge r_{\rm LSA}  $, we have $ \frac{r_{0}r_{{\rm LSA},u}^{(\rm G)}}{r_{\rm LSA}+r_{0}} \le r \le \infty,$
i.e., $ {\rm BD}_{{\rm LSA}, u}^{\rm (G)} = \infty,$ thus completing the proof.
\bibliographystyle{IEEEtran}
\bibliography{IEEEabrv.bib}

\begin{thebibliography}{10}
\providecommand{\url}[1]{#1}
\csname url@samestyle\endcsname
\providecommand{\newblock}{\relax}
\providecommand{\bibinfo}[2]{#2}
\providecommand{\BIBentrySTDinterwordspacing}{\spaceskip=0pt\relax}
\providecommand{\BIBentryALTinterwordstretchfactor}{4}
\providecommand{\BIBentryALTinterwordspacing}{\spaceskip=\fontdimen2\font plus
\BIBentryALTinterwordstretchfactor\fontdimen3\font minus
  \fontdimen4\font\relax}
\providecommand{\BIBforeignlanguage}[2]{{%
\expandafter\ifx\csname l@#1\endcsname\relax
\typeout{** WARNING: IEEEtran.bst: No hyphenation pattern has been}%
\typeout{** loaded for the language `#1'. Using the pattern for}%
\typeout{** the default language instead.}%
\else
\language=\csname l@#1\endcsname
\fi
#2}}
\providecommand{\BIBdecl}{\relax}
\BIBdecl

\bibitem{zhang20196g}
Z.~Zhang, Y.~Xiao, Z.~Ma, M.~Xiao, Z.~Ding, X.~Lei, G.~K. Karagiannidis, and
  P.~Fan, ``{6G} wireless networks: Vision, requirements, architecture, and key
  technologies,'' \emph{IEEE Veh. Technol. Mag.}, vol.~14, no.~3, pp. 28--41,
  Jul. 2019.

\bibitem{liu2023near}
Y.~Liu, Z.~Wang, J.~Xu, C.~Ouyang, X.~Mu, and R.~Schober, ``Near-field
  communications: A tutorial review,'' \emph{arXiv preprint arXiv:2305.17751},
  2023.

\bibitem{lu2023tutorial}
H.~Lu, Y.~Zeng, C.~You, Y.~Han, J.~Zhang, Z.~Wang, Z.~Dong, S.~Jin, C.-X. Wang,
  T.~Jiang \emph{et~al.}, ``A tutorial on near-field {XL-MIMO} communications
  towards {6G},'' \emph{arXiv preprint arXiv:2310.11044}, 2023.

\bibitem{cui2022near}
M.~Cui, Z.~Wu, Y.~Lu, X.~Wei, and L.~Dai, ``Near-field {MIMO} communications
  for {6G}: Fundamentals, challenges, potentials, and future directions,''
  \emph{IEEE Commun. Mag.}, vol.~61, no.~1, pp. 40--46, Sep. 2022.

\bibitem{you2023near11}
C.~You, Y.~Zhang, C.~Wu, Y.~Zeng, B.~Zheng, L.~Chen, L.~Dai, and A.~L.
  Swindlehurst, ``Near-field beam management for extremely large-scale array
  communications,'' \emph{arXiv preprint arXiv:2306.16206}, 2023.

\bibitem{elbir2023spherical}
A.~M. Elbir, K.~V. Mishra, and S.~Chatzinotas, ``Spherical wavefront near-field
  {DoA} estimation in {T}hz automotive radar,'' \emph{arXiv preprint
  arXiv:2310.16724}, 2023.

\bibitem{zhang2022beam}
H.~Zhang, N.~Shlezinger, F.~Guidi, D.~Dardari, M.~F. Imani, and Y.~C. Eldar,
  ``Beam focusing for near-field multiuser {MIMO} communications,'' \emph{IEEE
  Trans. Wireless Commun.}, vol.~21, no.~9, pp. 7476--7490, Sep. 2022.

\bibitem{an2023toward}
J.~An, C.~Yuen, L.~Dai, M.~Di~Renzo, M.~Debbah, and L.~Hanzo, ``Toward
  beamfocusing-aided near-field communications: Research advances, potential,
  and challenges,'' \emph{arXiv preprint arXiv:2309.09242}, 2023.

\bibitem{bjornson2021primer}
E.~Bj{\"o}rnson, {\"O}.~T. Demir, and L.~Sanguinetti, ``A primer on near-field
  beamforming for arrays and reconfigurable intelligent surfaces,'' in
  \emph{Proc. 55th Asilomar Conf. Signals Syst. Comput.}\hskip 1em plus 0.5em
  minus 0.4em\relax IEEE, 2021, pp. 105--112.

\bibitem{zhang2023swipt}
Y.~Zhang and C.~You, ``{SWIPT} in mixed near-and far-field channels: Joint beam
  scheduling and power allocation,'' \emph{arXiv preprint arXiv:2310.20186},
  2023.

\bibitem{kosasih2023finite}
A.~Kosasih and E.~Bj{\"o}rnson, ``Finite beam depth analysis for large
  arrays,'' \emph{arXiv preprint arXiv:2306.12367}, 2023.

\bibitem{ref1}
H.~Lu and Y.~Zeng, ``How does performance scale with antenna number for
  extremely large-scale {MIMO}?'' in \emph{IEEE Int. Conf. Commun.
  (ICC)}.\hskip 1em plus 0.5em minus 0.4em\relax IEEE, 2021, pp. 1--6.

\bibitem{5595728}
T.~L. Marzetta, ``Noncooperative cellular wireless with unlimited numbers of
  base station antennas,'' \emph{IEEE Trans. Wireless Commun.}, vol.~9, no.~11,
  pp. 3590--3600, Oct. 2010.

\bibitem{ji2023extra}
R.~Ji, S.~Chen, C.~Huang, J.~Yang, E.~Wei, Z.~Zhang, C.~Yuen, and M.~Debbah,
  ``Extra {DoF} of near-field holographic {MIMO} communications leveraging
  evanescent waves,'' \emph{IEEE Wireless Commun. Lett.}, vol.~12, no.~4, pp.
  580--584, Apr. 2023.

\bibitem{cui2022channel}
M.~Cui and L.~Dai, ``Channel estimation for extremely large-scale {MIMO}:
  Far-field or near-field?'' \emph{IEEE Trans. Commun.}, vol.~70, no.~4, pp.
  2663--2677, Apr. 2022.

\bibitem{zhi2023performance}
K.~Zhi, C.~Pan, H.~Ren, K.~K. Chai, C.-X. Wang, R.~Schober, and X.~You,
  ``Performance analysis and low-complexity design for {XL-MIMO} with
  near-field spatial non-stationarities,'' \emph{arXiv preprint
  arXiv:2304.00172}, 2023.

\bibitem{yuan2022spatial}
Z.~Yuan, J.~Zhang, Y.~Ji, G.~F. Pedersen, and W.~Fan, ``Spatial non-stationary
  near-field channel modeling and validation for massive {MIMO} systems,''
  \emph{IEEE Trans. Antennas Propag.}, vol.~71, no.~1, pp. 921--933, Nov. 2022.

\bibitem{han2020channel}
Y.~Han, S.~Jin, C.-K. Wen, and X.~Ma, ``Channel estimation for extremely
  large-scale massive {MIMO} systems,'' \emph{IEEE Wireless Commun. Lett.},
  vol.~9, no.~5, pp. 633--637, Jan. 2020.

\bibitem{chen2023non}
Y.~Chen and L.~Dai, ``Non-stationary channel estimation for extremely
  large-scale {MIMO},'' \emph{IEEE Trans. Wireless Commun.}, Dec. 2023.

\bibitem{zhang2022fast}
Y.~Zhang, X.~Wu, and C.~You, ``Fast near-field beam training for extremely
  large-scale array,'' \emph{IEEE Wireless Commun. Lett.}, vol.~11, no.~12, pp.
  2625--2629, Oct. 2022.

\bibitem{lu2023hierarchical}
Y.~Lu, Z.~Zhang, and L.~Dai, ``Hierarchical beam training for extremely
  large-scale {MIMO}: From far-field to near-field,'' \emph{IEEE Trans.
  Commun.}, Dec. 2023.

\bibitem{wu2023two}
C.~Wu, C.~You, Y.~Liu, L.~Chen, and S.~Shi, ``Two-stage hierarchical beam
  training for near-field communications,'' \emph{arXiv preprint
  arXiv:2302.12511}, 2023.

\bibitem{wu2023near}
X.~Wu, C.~You, J.~Li, and Y.~Zhang, ``Near-field beam training: Joint angle and
  range estimation with {DFT} codebook,'' \emph{arXiv preprint
  arXiv:2309.11872}, 2023.

\bibitem{liu2022deep}
W.~Liu, H.~Ren, C.~Pan, and J.~Wang, ``Deep learning based beam training for
  extremely large-scale massive {MIMO} in near-field domain,'' \emph{IEEE
  Commun. Lett.}, vol.~27, no.~1, pp. 170--174, Jan. 2022.

\bibitem{xu2023low}
B.~Xu, Z.~Wang, H.~Xiao, J.~Zhang, B.~Ai, and D.~W.~K. Ng, ``Low-complexity
  precoding for extremely large-scale {MIMO} over non-stationary channels,''
  \emph{arXiv preprint arXiv:2302.00847}, 2023.

\bibitem{bjornson2015massive}
E.~Bj{\"o}rnson, M.~Matthaiou, and M.~Debbah, ``Massive {MIMO} with non-ideal
  arbitrary arrays: Hardware scaling laws and circuit-aware design,''
  \emph{IEEE Trans. Wireless Commun.}, vol.~14, no.~8, pp. 4353--4368, Aug.
  2015.

\bibitem{zhang2023dynamic}
Z.~Zhang, Y.~Liu, Z.~Wang, J.~Chen, and T.~Q. Quek, ``Dynamic {MIMO}
  architecture design for near-field communications,'' \emph{arXiv preprint
  arXiv:2312.15757}, 2023.

\bibitem{yang2023enhancing}
S.~Yang, W.~Lyu, Z.~Zhang, and C.~Yuen, ``Enhancing near-field sensing and
  communications with sparse arrays: Potentials, challenges, and emerging
  trends,'' \emph{arXiv preprint arXiv:2309.08681}, 2023.

\bibitem{wang2023can}
H.~Wang and Y.~Zeng, ``Can sparse arrays outperform collocated arrays for
  future wireless communications?'' \emph{arXiv preprint arXiv:2307.07925},
  2023.

\bibitem{zheng2020symmetric}
Z.~Zheng, M.~Fu, W.-Q. Wang, and H.~C. So, ``Symmetric displaced coprime array
  configurations for mixed near-and far-field source localization,'' \emph{IEEE
  Trans. Antennas Propag.}, vol.~69, no.~1, pp. 465--477, Jan. 2020.

\bibitem{zhou2017robust}
C.~Zhou, Y.~Gu, S.~He, and Z.~Shi, ``A robust and efficient algorithm for
  coprime array adaptive beamforming,'' \emph{IEEE Trans. Veh. Technol.},
  vol.~67, no.~2, pp. 1099--1112, May 2017.

\bibitem{wang2018unified}
W.~Wang, S.~Ren, and Z.~Chen, ``Unified coprime array with multi-period
  subarrays for direction-of-arrival estimation,'' \emph{Digit. Signal
  Process.}, vol.~74, pp. 30--42, Mar. 2018.

\bibitem{8472789Zhou}
C.~Zhou, Y.~Gu, X.~Fan, Z.~Shi, G.~Mao, and Y.~D. Zhang, ``Direction-of-arrival
  estimation for coprime array via virtual array interpolation,'' \emph{IEEE
  Trans. Signal Process.}, vol.~66, no.~22, pp. 5956--5971, Nov. 2018.

\bibitem{yu2016alternating}
X.~Yu, J.-C. Shen, J.~Zhang, and K.~B. Letaief, ``Alternating minimization
  algorithms for hybrid precoding in millimeter wave {MIMO} systems,''
  \emph{IEEE J. Sel. Topics Signal Process.}, vol.~10, no.~3, pp. 485--500,
  Apr. 2016.

\bibitem{ref5}
K.~Shen and W.~Yu, ``Fractional programming for communication systems—{Part}
  {I}: Power control and beamforming,'' \emph{IEEE Trans. Signal Process.},
  vol.~66, no.~10, pp. 2616--2630, Mar. 2018.

\end{thebibliography}

\end{document}